\documentclass[11pt,reqno]{article}

\usepackage{amssymb,amscd,amsmath,amsthm,color,amsfonts,graphicx}
\textwidth=15.5cm
\textheight=22.0cm
\parindent=15pt

\hoffset=-11mm
   \voffset=-20mm
\usepackage{lineno}

\addtolength{\parskip}{+0.9ex}

%

\newtheorem{thm}{Theorem}[section]
\newtheorem{prop}[thm]{Proposition}
\newtheorem{lemma}[thm]{Lemma}
\newtheorem{cor}[thm]{Corollary}
\newtheorem{defn}[thm]{Definition}

\newtheorem{example}[thm]{Example}

\numberwithin{equation}{section}

\def\be{ \begin{linenomath} \begin{eqnarray}}
\def\ee{\end{eqnarray} \end{linenomath}}
\def\bee{  \begin{linenomath} \begin{eqnarray*}}
\def\eee{\end{eqnarray*}  \end{linenomath}}
 \def\pmx{\begin{pmatrix}}
 \def\emx{\end{pmatrix}}
 \def\bsq{\begin{subequations}}
\def\esq{\end{subequations}}

 \def\bst{\begin{subtheorem}}
\def\est{\end{subtheorem}}

\newcommand{\norm}[1]{ \| #1  \|}

\def\leqc{\preccurlyeq}

        \def\tr{\hbox{\rm Tr} \, }

     \def\half{{\textstyle \frac{1}{2}}}
     \def\nn{\nonumber}

\def\ds{\displaystyle}
 
\def\bra{\langle}
\def\ket{\rangle}
\def\kb{ \ket \bra }

\def\rt2{ \frac{1}{\sqrt{2}} }

\def\raw{\rightarrow}

\def\wh{\widehat}
           
\newcommand{\proj}[1]{ | #1 \kb  #1|}

\def\Raw{\Rightarrow}    

\def\ofxinv{ \big(x^{-1} \big) }
  \def\id{{\cal I}}
\def\ot{\otimes}

\def\bN{\mathbf{N}}
\def\bM{\mathbb{M}}
\def\bH{\mathbb{H}}
\def\bP{\mathbb{P}}
\def\bC{\mathbf{C}}
\def\bR{\mathbf{R}}
\def\cK{\mathcal{K}}
\def\cF{\mathcal{F}}
\def\<{\langle}
\def\>{\rangle}
\def\HS{\mathrm{HS}}
\def\Tr{\mathrm{Tr}\,}
\def\cD{\mathcal{D}}
\def\diag{\mathrm{diag}}
\def\ext{\mathrm{ext}}
\def\ffi{\varphi}
\def\WYD{\mathrm{WYD}}
\def\H{\mathrm{H}}
\def\B{\mathrm{B}}
\def\PD{\mathrm{PD}}
\def\Im{\mathrm{Im}\,}
\def\Re{\mathrm{Re}\,}
\def\St{\mathrm{St}}

\allowdisplaybreaks

\title{Families of completely positive maps\\ 
associated with monotone metrics}

\author{Fumio Hiai$^{1,}$\footnote{E-mail: fumio.hiai@gmail.com}\ ,
Hideki Kosaki$^{2,}$\footnote{E-mail: kosaki@math.kyushu-u.ac.jp}\ ,
D\'enes Petz$^{3,}$\footnote{E-mail: petz@math.bme.hu}\ \ and
Mary Beth Ruskai$^{4,}$\footnote{E-mail: mbruskai@gmail.com}}

\date{\today }

\begin{document}  

\maketitle

\begin{center}
$^1$\,Graduate School of Information Sciences, Tohoku University, \\
Aoba-ku, Sendai 980-8579, Japan
\end{center}
\begin{center}
$^2$\,Graduate School of Mathematics, Kyushu University, \\
Nishi-ku, Fukuoka, 819-0395, Japan
\end{center}
\begin{center}
$^3$\,Alfr\'ed R\'enyi Institute of Mathematics, H-1364 Budapest, POB 127, Hungary
\end{center}
\begin{center}
$^4$\,Tufts University, Medford, MA 02155, USA  \\ and \\
Institute for Quantum Computing, Waterloo, Ontario, Canada 
\end{center}

\begin{abstract}
\noindent
An operator convex function on $(0,\infty)$ which satisfies the symmetry condition
$k(x^{-1}) = x k(x) $ can be used to define a type
of non-commutative multiplication by a positive definite matrix (or its inverse)
using the primitive concepts of left and right multiplication and the functional
calculus.   The operators for the inverse can be used to define  quadratic 
forms associated with Riemannian metrics which contract under the action of
completely positive trace-preserving maps.   

We study the question of when these operators define maps which are also
completely positive (CP).  Although  $A \mapsto D^{-1/2} A D^{-1/2}$ is the only
case for which both the map and its inverse are CP, there are 
several well-known one  parameter families for which either the map or its
 inverse is CP.  
We present a complete analysis of the behavior of these families, as well as
the behavior of lines connecting an extreme point with the smallest
one and some results for geometric bridges between these points.  

Our primary tool is an order relation based on the concept of positive definite functions.
  Although some results can be obtained from known properties,
we also prove new results based on the positivity of the Fourier transforms
of certain functions.  Concrete computations of certain
Fourier transforms not only yield new examples of  
 positive definite functions, but also examples in the much stronger class of 
 infinitely divisible functions.

\bigskip

\noindent
{\it 2010 Mathematics Subject Classification: }
15A63, 15A60, 42A82, 46L60, 46L87, 26E05,

\medskip\noindent
{\it Key Words:}  monotone Riemannian metric; operator convex function; operator monotone function;
   completely positive map; positive definite kernel; infinite divisibility, quasi-entropy, geometric bridge

\end{abstract}


\tableofcontents

\section{Introduction} \label{sect:intro}

On a commutative algebra the operations of multiplication and ``division''
by elements of the positive cone take the positive cone into itself. However, this is not
the case for non-commutative algebras, on which these operations are not even uniquely
defined.

Various non-commutative versions of multiplication and division (i.e., multiplication by
the inverse) by elements of the cone of positive definite matrices or operators correspond
to maps on matrix algebras, and some of these maps play an important role in many
contexts.
For $D > 0 $, some naive definitions of multiplication by the inverse are
given by the maps $X \mapsto  D^{-1/2} X D^{-1/2}$, $X \mapsto D^{-1}X$ and
$X \mapsto X D^{-1}$. The first maps the cone of positive operators to itself, while
the other two do not even preserve self-adjointness. There are many other possible
generalizations of multiplication by $D^{-1}$.   In Section~\ref{sect:examps} we
 consider several different one parameter families of such maps.

Perhaps, the best known example is
\be  \label{BKM}
\Omega_D(X) = \int_0^\infty (D+tI)^{-1} X (D+tI)^{-1}\,dt \, ,
\ee
which gives a well-defined (and highly symmetric) notion of non-commutative multiplication
by $D^{-1}$.  Its inverse is well-known to be
\be  \label{BKMinv}
\Omega_D^{-1} (Y) = \int_0^1 D^t\,Y D^{1-t}\,dt
\ee
which is a form of non-commutative multiplication by $D$.  The quadratic form
$\tr A^* \Omega_D^{-1} (B)$ is known as the Bogoliubov or Kubo-Mori inner product.  
Although the inverse relationship between \eqref{BKMinv} and  \eqref{BKM}  is well-known,
and follows from more general results given later, 
 we include an explicit proof in Appendix~\ref{app:BKM}. 

We consider here a class of such maps which arise in quantum information theory,
in the context of what are known as monotone Riemannian metrics \cite{Pz2,LR}.

We study the question of which maps within this class have the property known as
completely positivity defined in Section~\ref{sect:CP}. The map in $ \eqref{BKM}$ has
this property, but its inverse \eqref{BKMinv} does not.
 
This question was motivated by an observation in \cite{LR,TKRWV} about bounds on the
contraction of monotone metrics under the action of completely positive trace-preserving
(CPT) maps which are also known as {\em quantum channels} in view of their important role
as noise models in quantum information theory.  These bounds are discussed briefly in
Section~\ref{sect:back} and Appendix~\ref {app:contractbd}. 
 
This paper is organized as follows.   Section~\ref{sect:prelim}  describes the various
concepts we need and introduces the notations we will use.  Section~\ref{sect:back}
contains a brief summary of the background and motivation behind this work.   In
Section~\ref{sect:posdef} we present some powerful tools we will use formulated
in terms of positive kernels, and a closely related partial order.
In Section~\ref{sect:examps} we present a large number of examples of one
parameter families of functions which provide a number of inequivalent 
classes of maps  used to define non-commutative multiplication by the inverse
of a positive matrix.   In most cases, we can also provide precise ranges 
for which these maps or their inverses are CP.    Section~\ref{sect:proofs}
proves new results about positive kernels based on Fourier 
transforms,  which are needed to prove some of our results.  
These results are of interest in their own right.
In Section~\ref{sect:exampfs} we complete the proofs of those
results stated in Section~\ref{sect:examps} which require the
results of Section~\ref{sect:proofs}.
  
There are also three appendices.   The first contains the detailed 
proof of the integral representation needed in Section~\ref{sect:basics}.
The second gives more details about the motivation in terms of 
contraction under CPT maps of the Riemannian metrics associated
with our maps.  Finally, for the benefit of non-experts, we present some
very pedestrian arguments which clarify some well-known results that
are often glossed over.

\bigskip 
\section{Preliminaries} \label{sect:prelim} 

\subsection{Basics} \label{sect:basics}

For each $d\in\bN$ we write $\bM_d$, $\bH_d$, $\bP_d$ and  $\overline{\bP}_d$
for the sets of $d\times d$ complex, Hermitian, positive definite and
positive semi-definite matrices, respectively. Functions of matrices in $\bH_d$ can be
defined by using the spectral theorem; this is sometimes called ``functional calculus"
(see, e.g., \cite[Section VII.1]{RS1}).

A real function $f$ on $(0,\infty)$ is said to be {\it operator monotone} (or operator
monotone increasing) if $A\ge B$ implies $f(A)\ge f(B)$ for every $A,B\in\bP_d$ with any
$d\in\bN$, {\it operator monotone decreasing} if $-f$ is operator monotone. A real function
$k$ on $(0,\infty)$ is said to be {\it operator convex} if 
$$
k(\lambda A+(1-\lambda)B)\le\lambda k(A)+(1-\lambda)k(B)
$$
for all $A,B\in\bP_d$ with any $d\in\bN$ and all $\lambda\in(0,1)$, and {\it operator
concave} if $-k$ is operator convex.  The theory of operator monotone and operator convex functions
was initiated by L\"owner \cite{Lw} and Kraus \cite{Kr}, respectively.   It is well-known
\cite[Section V.4]{Bh} (also \cite{Ando,Do,Hi}) that operator monotone (also operator
convex) functions on $(0,\infty)$ have an analytic continuation into the upper half-plane
of $\bC$. Moreover, a necessary and sufficient condition for a function on $(0,\infty)$ to
be operator monotone increasing (resp., decreasing) is that it has the
{\em ``Pick'' mapping property} that the analytic continuation maps the upper
half-plane {\em into} the upper (resp., lower) half-plane.  The integral
representation theory for Pick functions and operator monotone functions are also
well-known, see, e.g., \cite[Section 59]{AG} or \cite[Section V.4]{Bh} and
\cite{Ando,Do,Hi}.

\begin{defn}\rm
Let  $\cK$  denote the class of functions $k:(0,\infty)\to(0,\infty)$ which are operator
convex and satisfy the symmetry condition  $xk(x)=k\ofxinv$ and the normalization
condition $k(1) = 1$.  
\end{defn}

There are a number of equivalent characterizations of the class $\cK$ which are given in
Theorem~\ref{thm:kequiv} below. Its proof uses an integral representation, which is
important in its own right and presented in Theorem~\ref{thm:intrep}. We also observe that 
although $k(x)  \in\cK$ may diverge as $x \searrow 0$, it cannot diverge more
rapidly than $x^{-1}$. This was proved in \cite{HS}. For completeness, we include
its proof as well as the proof of Theorem 2.3 in Appendix~\ref{app:intrep}.

\begin{prop} \label{prop:lim0}
Let $k:(0,\infty)\to(0,\infty)$ be operator convex. Then $\lim_{x \raw 0 } x k(x)$
exists and is finite. When $xk(x)=k\ofxinv $, $\lim_{x\raw \infty} k(x) $ also exists and
is finite.
\end{prop}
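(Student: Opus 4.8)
The plan is to reduce the whole statement to its first assertion, since the second follows from it by a one-line change of variables using the symmetry relation. For the first assertion I would invoke the integral representation for operator convex functions on $(0,\infty)$ (Theorem~\ref{thm:intrep}, equivalently the standard Pick/L\"owner theory cited above): there are real constants $\alpha,\beta$, a constant $\gamma\ge 0$, and a positive Borel measure $\mu$ on $[0,\infty)$ with $\int_{[0,\infty)}(1+t)^{-1}\,d\mu(t)<\infty$ such that
\[
k(x)=\alpha+\beta(x-1)+\gamma(x-1)^2+\int_{[0,\infty)}\frac{(x-1)^2}{x+t}\,d\mu(t),\qquad x>0.
\]
The guiding observation is that the only source of an $x^{-1}$-type blow-up of $k$ at the origin is the mass that $\mu$ places at $t=0$, so isolating that atom is the heart of the matter.

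For the first limit I would multiply through by $x$,
\[
x\,k(x)=\alpha x+\beta x(x-1)+\gamma x(x-1)^2+\int_{[0,\infty)}\frac{x(x-1)^2}{x+t}\,d\mu(t),
\]
and let $x\raw 0^+$. The three polynomial terms vanish, while the integrand $\frac{x(x-1)^2}{x+t}$ tends to $0$ for every $t>0$ and to $1$ at $t=0$. The one delicate point is passing the limit inside the integral, which I would justify by dominated convergence after splitting $\mu$ into its restrictions to $(0,1]$ and to $(1,\infty)$: for $x\in(0,1]$ and $t\ge 1$ one has $\frac{x(x-1)^2}{x+t}\le t^{-1}\le 2(1+t)^{-1}$, which is $\mu$-integrable by the finiteness condition, while on the compact piece $(0,1]$ the integrand is bounded by $1$ against the finite measure $\mu|_{(0,1]}$. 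This yields
\[
\lim_{x\raw 0^+}x\,k(x)=\mu(\{0\})\in[0,\infty),
\]
which is precisely the claimed existence and finiteness (and exhibits the rate as at most $x^{-1}$).

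For the second limit I now use $x k(x)=k\ofxinv$. Setting $u=x^{-1}$, so that $u\raw 0^+$ as $x\raw\infty$, the symmetry rewrites as $k(x)=k\ofxinv/x=u\,k(u)$, whence
\[
\lim_{x\raw\infty}k(x)=\lim_{u\raw 0^+}u\,k(u),
\]
which exists and is finite by the first part. This step is a pure substitution and needs nothing beyond what was already established.

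The main obstacle is therefore entirely contained in the first limit: not the value of the limit but the fact that a \emph{limit} exists at all (rather than mere boundedness of $x\,k(x)$) is what forces the use of the integral representation instead of bare convexity, and the representation then makes both the existence and the identification of the limit with the atom $\mu(\{0\})$ transparent. The only technical care required is the dominated-convergence justification splitting off the large-$t$ tail and the finite near-origin part of $\mu$.
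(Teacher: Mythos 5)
Your proof is correct and follows essentially the same route as the paper's (Appendix~\ref{app:intrep}): both extract an integral representation from the operator monotonicity of the divided difference $(k(x)-k(1))/(x-1)$, identify $\lim_{x\searrow 0}x\,k(x)$ with the atom of the representing measure at the point corresponding to the pole at the origin, and justify the interchange by dominated convergence, with the second limit following by the substitution $x\mapsto x^{-1}$. The differences are cosmetic — the paper shifts to $(-1,1)$ so the measure is finite on $[-1,1]$ and the atom sits at $\lambda=-1$, whereas you stay on $(0,\infty)$ and must handle the possibly infinite tail via $\int_{[0,\infty)}(1+t)^{-1}\,d\mu(t)<\infty$ — though note that the representation you actually need is the general one for operator convex functions, namely \eqref{(A.1)}, rather than Theorem~\ref{thm:intrep} itself, which is stated only for $k\in\cK$.
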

 
\begin{thm} \label{thm:intrep}
For every $k \in \cK$, there exists a unique probability measure $m$ on
$[0,1]$ such that for $x \in (0,\infty)$
\begin{align*}
k(x)
& = \int_{[0,1]} \frac {1+x}{ (x+\nu)(1+\nu x)}\cdot \frac{(1+\nu)^2 }{2}\,dm(\nu) \\
& = \int_{[0,1]} \bigg( \frac{1}{x + \nu}  +  \frac{1}{1 +  x \nu} \bigg)
\frac{(1+ \nu)}{2}\,dm(\nu).
\end{align*}
\end{thm}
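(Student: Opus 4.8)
The plan is to obtain the representation by specializing the classical L\"owner--Kraus integral representation for operator convex functions and then folding it into the symmetric form dictated by the constraint $xk(x)=k\ofxinv$. The natural building blocks are $\phi_\nu(x):=\frac{1+\nu}{2}\big(\frac{1}{x+\nu}+\frac{1}{1+\nu x}\big)$ for $\nu\in[0,1]$; a direct check shows each $\phi_\nu$ is operator convex, positive, satisfies $\phi_\nu(1)=1$, and is fixed by the involution $(Tg)(x):=\tfrac1x g\ofxinv$ that encodes the symmetry (indeed $T$ interchanges $\frac{1}{x+\nu}$ and $\frac{1}{1+\nu x}$). The two algebraic facts driving the proof are that $T$ swaps $\frac{1}{x+s}\leftrightarrow\frac{1}{1+sx}$, and that $\phi_{1/s}=\phi_s$, which is what compresses the parameter from $(0,\infty)$ down to $[0,1]$. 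Since the target formula is exactly $k=\int_{[0,1]}\phi_\nu\,dm(\nu)$, it suffices to produce such an $m$ and show it is unique.

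First I would write the L\"owner--Kraus representation $k(x)=a+b(x-1)+c(x-1)^2+\int_{(0,\infty)}\frac{(x-1)^2}{x+s}\,d\mu(s)$ with $c\ge0$ and $\mu\ge0$, and rewrite each kernel via the identity $\frac{(x-1)^2}{x+s}=(x-2-s)+\frac{(1+s)^2}{x+s}$. Absorbing the affine pieces, this puts $k$ in the form $A+Bx+c(x-1)^2+\int_{(0,\infty)}\frac{(1+s)^2}{x+s}\,d\mu(s)$. Now invoke Proposition~\ref{prop:lim0}: since $\lim_{x\to\infty}k(x)$ exists and is finite while the integral term vanishes in that limit, the coefficients of the growing terms must cancel, forcing $B=0$ and $c=0$. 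This leaves $k(x)=A+\int_{(0,\infty)}\frac{(1+s)^2}{x+s}\,d\mu(s)$, with $A=\lim_{x\to\infty}k(x)\ge0$ because $k>0$.

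Next I would symmetrize. Because $k=Tk$ and $T$ is linear, averaging the displayed representation with its image under $T$ and using $T\big(\frac{1}{x+s}\big)=\frac{1}{1+sx}$ and $T[A]=A/x$ gives $k=A\phi_0+\int_{(0,\infty)}(1+s)\,\phi_s\,d\mu(s)$, after recognizing $\frac12(1+1/x)=\phi_0$ and $\frac12(1+s)^2\big(\frac1{x+s}+\frac1{1+sx}\big)=(1+s)\phi_s$. Splitting the integral at $s=1$ and substituting $s\mapsto1/s$ on $(1,\infty)$, the relation $\phi_{1/s}=\phi_s$ lets me push the measure forward to a single positive measure $m$ on $[0,1]$ (an atom of mass $A$ at $\nu=0$ together with the folded contributions of $(1+s)\,d\mu$), so that $k=\int_{[0,1]}\phi_\nu\,dm(\nu)$. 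Evaluating at $x=1$ and using $\phi_\nu(1)=1$ yields $m([0,1])=k(1)=1$, so $m$ is a probability measure.

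For uniqueness, I would show the map $m\mapsto k$ is injective. The cleanest route is a Stieltjes-type inversion: the analytic continuation of $\int_{[0,1]}\phi_\nu\,dm(\nu)$ has its singularities on the negative real axis, with $\frac1{x+\nu}$ contributing a pole at $x=-\nu$ and $\frac1{1+\nu x}$ one at $x=-1/\nu$, so the boundary values of the imaginary part recover $m$; alternatively, uniqueness descends directly from the uniqueness of the L\"owner--Kraus data $(c,\mu)$ together with the invertibility of the folding $s\mapsto\min(s,1/s)$ and the weight $(1+s)$. I expect the main obstacle to be bookkeeping rather than conceptual: justifying that $T$ may be taken inside the integral and that the affine rearrangement converges (which requires the integrability built into the L\"owner--Kraus representation together with the growth bounds of Proposition~\ref{prop:lim0}, in particular the at-most-$x^{-1}$ blow-up at $0$ that controls the mass of $\mu$ near $s=0$), and making the uniqueness inversion rigorous at the endpoints $\nu\in\{0,1\}$.
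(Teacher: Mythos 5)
Your proposal follows essentially the same route as the paper's proof in Appendix A.2: start from the L\"owner--Kraus representation, eliminate the quadratic and linear terms using the symmetry $xk(x)=k(x^{-1})$ together with Proposition~\ref{prop:lim0}, symmetrize the representing measure under $\lambda\mapsto1/\lambda$, fold $(1,\infty)$ onto $[0,1]$, and deduce uniqueness of $m$ from the uniqueness of the representing measure of an operator monotone function. The one caution is an ordering issue you have partly flagged: the split $\frac{(x-1)^2}{x+s}=(x-2-s)+\frac{(1+s)^2}{x+s}$ may only be integrated term by term once you know that $\mu$ is finite with $\int(1+s)\,d\mu<\infty$ (otherwise both pieces diverge and ``$A$'' and ``$B$'' are undefined, and the claim that the remaining integral vanishes as $x\to\infty$ also needs the dominating function $1+s$), so these moment bounds must be derived \emph{before} the rearrangement --- as the paper does, first getting $\gamma=0$ and $\mu$ finite from $\lim_{x\to\infty}k(x)/x=0$ via monotone convergence, and then $\int(1+\lambda)\,d\mu\le k(1)$ from positivity of $k$; note also that the relevant control is on the mass of $\mu$ near $s=\infty$, not near $s=0$ as you suggest.
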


\begin{thm} \label{thm:kequiv}
For each function $k:(0,\infty)\to(0,\infty)$ the following are equivalent{\rm:}
\begin{itemize}
\item[\rm(a)] $k$ is operator convex and $xk(x)=k\ofxinv${\rm;}
\item[\rm(b)] $k$ is operator monotone decreasing and $xk(x)=k\ofxinv${\rm;}
\item[\rm(c)] $f(x)\equiv 1/k(x)$ is operator concave and $f(x)=xf\ofxinv${\rm;}
\item[\rm(d)] $f(x)\equiv 1/k(x)$ is operator monotone and $f(x)=xf\ofxinv$.
\end{itemize}
\end{thm}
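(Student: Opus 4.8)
The plan is to prove the four-way equivalence by establishing the cycle $(a)\Rightarrow(b)\Rightarrow(d)\Rightarrow(a)$ together with $(c)\Leftrightarrow(d)$. The first observation, used throughout, is that the two symmetry conditions are a single statement transported through the reciprocal: writing $f=1/k$, one has $f(x^{-1})=1/k(x^{-1})=1/(xk(x))=f(x)/x$, so $xk(x)=k(x^{-1})$ holds for $k$ if and only if $f(x)=xf(x^{-1})$ holds for $f$. Hence in each implication below only the convexity/concavity/monotonicity clause needs attention; the symmetry clause transfers for free. Since $k$ takes values in $(0,\infty)$, so does $f$, and $k(1)>0$.

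The three routine implications rest only on classical facts. For $(b)\Leftrightarrow(d)$, I would use that inversion $X\mapsto X^{-1}$ is order-reversing on $\bP_d$: for $A\ge B$ the inequality $k(A)\le k(B)$ is equivalent to $k(A)^{-1}\ge k(B)^{-1}$, i.e. $f(A)\ge f(B)$, so $k$ is operator monotone decreasing exactly when $f=1/k$ is operator monotone. For $(c)\Leftrightarrow(d)$, I would invoke the classical theorem that a continuous function mapping $(0,\infty)$ into $(0,\infty)$ is operator monotone if and only if it is operator concave (see, e.g., \cite{Bh,Ando,Hi}); applied to $f$ this is precisely the equivalence of (c) and (d). Finally, for $(d)\Rightarrow(a)$ I would use the equally classical fact that the reciprocal of a positive operator monotone function is operator convex (\cite{Bh,Ando}); applied to $f$ this says $k=1/f$ is operator convex, which together with the transferred symmetry is (a).

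The substance of the theorem is the remaining implication $(a)\Rightarrow(b)$: that operator convexity together with the symmetry forces operator monotone decrease. This cannot follow from convexity alone — the function $x^2$ is operator convex and positive yet strictly increasing — so the symmetry must enter in an essential way, and the natural vehicle is the integral representation of Theorem~\ref{thm:intrep}. Given (a), I would first normalize: $\hat k:=k/k(1)$ is again operator convex, still satisfies the symmetry, and has $\hat k(1)=1$, so $\hat k\in\cK$ and Theorem~\ref{thm:intrep} applies. In that representation every integrand $x\mapsto\frac{1}{x+\nu}$ and $x\mapsto\frac{1}{1+x\nu}$ with $\nu\in[0,1]$ is a positive multiple of the reciprocal of a positive operator monotone affine function, hence operator monotone decreasing on $(0,\infty)$; since $m$ is a positive measure, the mixture $\hat k$, and therefore $k=k(1)\,\hat k$, is operator monotone decreasing. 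This closes the cycle.

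The main obstacle is precisely this last step, and more specifically the legitimacy of leaning on Theorem~\ref{thm:intrep}: the representation is the one place where the symmetry is converted into concrete decreasing building blocks, and without it $(a)\Rightarrow(b)$ fails. The one point I would verify with care is that each building block is simultaneously operator convex and operator monotone decreasing, so that the representation is internally consistent with both (a) and (b), and that it is the positivity of $m$ that lets the monotonicity of the integrands pass to the integral; everything else is bookkeeping around the reciprocal map.
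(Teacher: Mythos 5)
Your proposal is correct and follows essentially the same route as the paper: the symmetry clauses transfer through $f=1/k$, the equivalences (b)$\Leftrightarrow$(d) and (c)$\Leftrightarrow$(d) are the standard facts about inversion and about positive operator monotone versus operator concave functions, the passage back to (a) uses that the reciprocal of a positive operator concave/monotone function is operator convex, and the crux (a)$\Rightarrow$(b) is obtained exactly as in the paper from the integral representation of Theorem~\ref{thm:intrep}, whose integrands $1/(x+\nu)$ and $1/(1+\nu x)$ are operator monotone decreasing. Your explicit normalization $\hat k=k/k(1)$ before invoking Theorem~\ref{thm:intrep} is a minor point of extra care that the paper leaves implicit; there is no substantive difference.
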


\begin{proof}
The equivalence of $xk(x)=k\ofxinv$ and $f(x)=xf\ofxinv$ is easily checked. The
equivalence (c) $\Leftrightarrow$ (d) for positive functions on $(0,\infty)$ is well-known
(see, e.g., \cite[Theorem V.2.5]{Bh}), and (b) $\Leftrightarrow$ (d) follows from the fact
that $x\mapsto1/x$ is operator monotone decreasing on $(0,\infty)$.  The implication
(a) $\Raw$ (b) follows immediately from Theorem~\ref{thm:intrep} and the well-known fact
that the map  $x \mapsto 1/(\alpha x + \beta)$ is operator monotone decreasing on
$(0, \infty)$ for any fixed $\alpha, \beta \geq  0$. We finally show that (b) $ \Raw $ (a).
Assume (b); then $1/k(x)$ is operator monotone and so operator concave on $(0,\infty)$.
This implies (a) since $x^{-1}$ is operator monotone decreasing and operator convex.
\end{proof}

As shown in the above proof, the implication (b) $\Raw $ (a) and the equivalence of
(b)--(d) hold true without the symmetry assumption $x k(x) = k\ofxinv $. However, the
reverse implication (a) $\Raw $ (b) only holds under this additional assumption and appears
to be new.

The next result is easy to verify, but stated explicitly for completeness.
\begin{prop} \label{prop:hat}
The map $k(x)  \mapsto \wh{k}(x) \equiv 1/k\big( x^{-1} \big)$ is a bijection on $\cK$ and
the map $f(x) \mapsto \wh{f}(x)  \equiv 1/f \big( x^{-1} \big)$  induces the same bijection
with $f(x)=1/k(x)$ and $\wh{f}(x) = 1/\wh{k}(x)$.
\end{prop}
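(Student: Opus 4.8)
The plan is to verify the two claims in Proposition~\ref{prop:hat} directly: first that $\wh{\cdot}$ maps $\cK$ into $\cK$, second that it is an involution (hence a bijection), and finally that it is compatible with the substitution $f=1/k$. Throughout I would use the characterizations collected in Theorem~\ref{thm:kequiv}, so that I can freely pass between $k$ being operator convex and $f=1/k$ being operator concave, and between the symmetry conditions $xk(x)=k\ofxinv$ and $f(x)=xf\ofxinv$.

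First I would check that $\wh{k}\in\cK$ whenever $k\in\cK$. Positivity of $\wh{k}(x)=1/k(x^{-1})$ on $(0,\infty)$ is immediate since $k$ is positive there. For the normalization, $\wh{k}(1)=1/k(1)=1$. For operator convexity, I would argue through Theorem~\ref{thm:kequiv}: by (a)$\Leftrightarrow$(d), the function $f=1/k$ is operator monotone with $f(x)=xf\ofxinv$; I would then show $\wh{k}(x)=x/f(x)$ using the symmetry $f(x^{-1})=x^{-1}f(x)$, namely
\begin{equation*}
\wh{k}(x)=\frac{1}{k(x^{-1})}=f(x^{-1})=\frac{f(x)}{x}\cdot\frac{x^{2}}{?}
\end{equation*}
— more carefully, from $f(x)=xf\ofxinv$ one gets $f(x^{-1})=x^{-1}f(x)$, so $\wh{k}(x)=f(x^{-1})=f(x)/x$, i.e. $\wh{k}=1/\wh{f}$ with $\wh{f}(x)=x/f(x)$. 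To see $\wh{k}\in\cK$ it then suffices to check that $\wh{f}(x)=x/f(x)$ is operator concave and satisfies $\wh{f}(x)=x\wh{f}\ofxinv$; the symmetry is a one-line computation, and operator concavity of $x/f(x)$ follows from standard facts about operator monotone and operator concave functions (the map $f\mapsto x/f(x)$ is the transpose/inverse operation that preserves the relevant class). Then by Theorem~\ref{thm:kequiv}, $\wh{k}$ is operator convex with the correct symmetry, so $\wh{k}\in\cK$.

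Next I would verify the involution property $\wh{\wh{k}}=k$, which gives bijectivity at once. Writing $g=\wh{k}$, we have $g(x)=1/k(x^{-1})$, and then
\begin{equation*}
\wh{g}(x)=\frac{1}{g(x^{-1})}=\frac{1}{1/k((x^{-1})^{-1})}=k(x),
\end{equation*}
so $\wh{\cdot}$ is its own inverse and therefore a bijection on $\cK$. Finally I would record the compatibility statement: with $f=1/k$ and the companion operation $\wh{f}(x)=1/f\ofxinv$, one has $\wh{f}(x)=1/f(x^{-1})=k(x^{-1})=1/\wh{k}(x)$, so indeed $\wh{f}=1/\wh{k}$, confirming that the hat operation on $f$ induces exactly the same bijection under the correspondence $k\leftrightarrow f=1/k$.

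The only genuinely non-routine point is the operator-theoretic input: checking that $\wh{k}$ is again operator convex (equivalently that $x/f(x)$ is operator concave when $f$ is operator monotone with the symmetry $f(x)=xf\ofxinv$). Everything else is elementary algebra with the substitution $x\mapsto x^{-1}$. I expect this step to be handled cleanly by invoking the equivalences in Theorem~\ref{thm:kequiv} together with the standard fact that $x\mapsto1/x$ is operator monotone decreasing and operator convex on $(0,\infty)$, exactly as in the proof of that theorem; this is presumably why the authors call the result ``easy to verify.''
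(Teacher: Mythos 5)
Your proof is correct. The paper itself offers no argument for Proposition~\ref{prop:hat} --- it is stated ``for completeness'' as easy to verify --- so there is no official proof to compare against; your verification supplies exactly what is being taken for granted. The three ingredients you check (stability of $\cK$ under the hat map, the involution identity $\wh{\wh{k}}=k$, and the compatibility $\wh{f}=1/\wh{k}$ under $f=1/k$) are precisely what is needed, and each computation is right once you clean up the placeholder in your first display (the corrected line $\wh{k}(x)=f\ofxinv=f(x)/x$ is the one that matters).

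One remark on the only non-trivial step. You route the operator-convexity of $\wh{k}$ through the transpose operation $f\mapsto x/f(x)$ and appeal to the external fact that this preserves operator monotonicity. That works, but there is a shorter path staying entirely inside the facts already used in the proof of Theorem~\ref{thm:kequiv}: write $\wh{k}(x)=1/k\ofxinv=f\ofxinv$ with $f=1/k$ operator monotone increasing by Theorem~\ref{thm:kequiv}\,(d); composing with the operator monotone decreasing map $x\mapsto x^{-1}$ shows $\wh{k}$ is operator monotone decreasing, and the symmetry $x\,\wh{k}(x)=\wh{k}\ofxinv$ is literally the identity $xk(x)=k\ofxinv$ rearranged. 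Then (b)\,$\Rightarrow$\,(a) of Theorem~\ref{thm:kequiv} gives operator convexity, with no need for the transpose lemma. Either way the proposition stands.
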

 
\subsection{The multiplication map and its inverse $\Omega_D^k$} \label{sect:maps}

For each $D\in\bP_d$ we write $L_D$ and $R_D$ for the left and the right multiplication
operators, respectively, i.e., $L_DX\equiv DX$ and $R_DX\equiv XD$ for $X\in\bM_d$. Note
that $L_D$ and $R_D$ are commuting positive invertible operators on $\bM_d$ considered as
a Hilbert space equipped with the Hilbert-Schmidt inner product
$\<X,Y\>_\HS\equiv \Tr X^*Y$, where $\Tr$ denotes the usual trace functional on $\bM_d$.  
The operator $L_A R_B^{-1} $ was used by Araki \cite{Ak} to define the relative entropy
of positive operators $A,B$ in far more general situations than matrix algebras, and is
often called the {\em relative modular operator}.

For a fixed function $k\in\cK$ we define, for any $D\in\bP_d$, the linear map
$\Omega_D^k:\bM_d\to\bM_d$ by
\begin{equation}\label{omegadef}
\Omega_D^k(X)\equiv R_D^{-1}k\left(L_DR_D^{-1}\right)X
=L_D^{-1}k\left(R_DL_D^{-1}\right)X,\qquad X\in\bM_d.
\end{equation}
Since both of the commuting operators $L_D$ and $R_D$ are positive with respect to the
Hilbert-Schmidt inner product, it is clear that $\Omega_D^k$ is also positive (in the same
sense). Each map $\Omega_D^k$ can be considered as a non-commutative generalization of
multiplication by $D^{-1}$; indeed, if $DX=XD$ then $\Omega_D^k(X)=D^{-1}X$
independently of $k\in\cK$. The equivalence of the two expressions in \eqref{omegadef}
follows from the symmetry condition $xk(x)=k\ofxinv$ since    
\begin{linenomath} \begin{align*}
R_D^{-1}k\left(L_DR_D^{-1}\right)&=L_D^{-1}L_DR_D^{-1}k\left(L_DR_D^{-1}\right) \\
&=L_D^{-1}k\left(\left(L_DR_D^{-1}\right)^{-1}\right)=L_D^{-1}k\left(R_DL_D^{-1}\right).
\end{align*} \end{linenomath}

To better understand the action of $\Omega_D^k$, we consider the two-variable
function\linebreak $ \phi^k(x,y)\equiv  (1/y) \, k(x/y) $ for $x,y>0$ and observe
that $\Omega_D^k = \phi^k(L_D,R_D)$. When $D$ is a diagonal matrix with
eigenvalues $\lambda_j$, it is an easy consequence of the functional calculus that the
action of $\Omega_D^k$ on the matrix with entries $x_{ij}$ is
$$
x_{ij} \longmapsto \phi^k(\lambda_j, \lambda_k) x_{ij}
= \frac{1}{ \lambda_j} k\biggl( \frac{\lambda_i}{\lambda_j}\biggr) x_{ij}
$$
which is the Schur (or Hadamard or pointwise) product  $A \circ X$ with the matrix $A$
with entries $a_{ij} = \phi^k(\lambda_i, \lambda_j)$.  More generally, let
$U$ be a unitary which diagonalizes $D$ so that
$$
D = U\diag(\lambda_1,\dots,\lambda_d)U^*.
$$ 
Then  
\be \label{schurgen}
\Omega_D^k(X)=U \Big( [\phi^k(\lambda_i, \lambda_j )]\circ [U^*XU] \Big)U^*.
\ee 

Since $L_D^{-1}=L_{D^{-1}}$ and $R_D^{-1}=R_{D^{-1}}$, it might be tempting to think that
$\Omega_D^{-1}(X)=\Omega_{D^{-1}}(X)$.  However, this is easily seen to be false by
considering specific examples (including \eqref{BKM} and \eqref{BKMinv}).  
Instead we have for any $D \in \bP_d$,
\begin{equation} \label{inv}
J_D^f \equiv (\Omega_D^k)^{-1} =R_D f \big( L_DR_D^{-1} \big)  
=R_D\wh k(L_D^{-1}R_D)=\Omega_{D^{-1}}^{\wh k}
\end{equation}
with $f(x) = 1/k(x) =\wh{k}\ofxinv$.  To see this, observe that it follows from
\eqref{schurgen} that
$(\Omega_D^k)^{-1}=(1/\phi^k)(L_D,R_D)$. From the relation 
$$
\frac{1}{\phi^k}(x,y)=\frac{y}{k(x/y)}=  y f(x/y) =  x f(y/x),
$$
the functional calculus implies \eqref{inv}. 

\subsection{Complete positivity of $\Omega_D^k $} \label{sect:CP}

A linear map $\Phi : \bM_d \mapsto \bM_d$ is called {\em positive} if it is
positivity-preserving in the sense that $A >0$ implies $\Phi(A) \geq 0$, i.e.,
$\Phi(\bP_d) \subseteq \overline{\bP}_d$.  A linear map
$\Phi : \bM_d \mapsto \bM_d$ is called  {\em completely positive} (CP) if $\Phi \ot \id_n$
is positive on $\bM_d \ot \bM_n$ for all $n \in \bN$ with $\id_n$ the identity map on
$\bM_n$. The notion of complete positivity, introduced by Stinespring \cite{St} and
discussed in, e.g.,  \cite[Chapter~6]{Paul}
 plays an important role in quantum information theory.  (See, e.g,  \cite{NC,pxbk}.)

The recognition in \eqref{schurgen} that $\Omega_D^k$ can be implemented as a Schur
product yields a simply stated condition to test that it is CP. In general, complete
positivity of a map on $\bM_d$ is a much stronger condition than positivity. However, for
Schur products, it is well-known (see, e.g., \cite[Theorem~3.7]{Paul}) that both positivity
conditions for the map $\Phi_A(X)=A \circ X$ ($A,X \in \bM_d$) are equivalent to
positivity of $A$. Indeed, the map $\Phi_A \otimes \id_n$ on
$\bM_n(\bM_d) \cong \bM_d \otimes \bM_n$ can be realized as Schur multiplication with
$A \otimes J_n$, where $J_n$ is the $n \times n$ matrix with all entries $1$. Therefore,
in our setting, the requirement that the map $\Omega_D^k$  
is CP is equivalent
to the weaker positivity requirement, as we explicitly state for completeness in the
following:
\begin{thm} \label{thm:CP1}
The following conditions for $k\in\cK$ are equivalent{\rm:}
\begin{itemize}
\item[\rm(a)] $\Omega_D^k:\bM_d\to\bM_d$ is CP for every $D\in\bP_d$ with any $d\in\bN${\rm;}
\item[\rm(b)] $\Omega_D^k:\bM_d\to\bM_d$ is positive for every $D\in\bP_d$ with any
$d\in\bN${\rm;}
\item[\rm(c)] the $d \times d$ matrix
\begin{equation} \label{Adef}
A = \biggl[ \frac{1}{w_j} k\biggl( \frac{w_i}{w_j}\biggr)\biggr]_{1\le i,j\le d}    
\end{equation}
is positive semi-definite for every $w_1,\dots,w_d>0$ with any $d\in\bN$.
\end{itemize}
\end{thm}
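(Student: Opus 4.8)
The plan is to prove that the three conditions are in fact equivalent \emph{for each fixed} $D\in\bP_d$ (equivalently, for each fixed choice of eigenvalues $w_1,\dots,w_d>0$), after which taking the universal quantifier over all $D$ and all $d\in\bN$ yields the stated theorem. The entire argument rests on the representation already recorded in \eqref{schurgen}: if $D=U\diag(\lambda_1,\dots,\lambda_d)U^*$, then $\Omega_D^k$ is unitarily equivalent to the Schur multiplier $\Phi_A:X\mapsto A\circ X$ whose symbol $A=[\phi^k(\lambda_i,\lambda_j)]=\bigl[\tfrac{1}{\lambda_j}k(\lambda_i/\lambda_j)\bigr]$ is precisely the matrix \eqref{Adef} with $w_j=\lambda_j$. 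Thus the content of the theorem is the classical fact that, for a Schur multiplier, positivity, complete positivity, and positive semi-definiteness of the symbol all coincide (cf.\ \cite[Theorem~3.7]{Paul}).

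The implication (a) $\Raw$ (b) is immediate, since complete positivity forces positivity by taking $n=1$. For (b) $\Raw$ (c), I would specialize to the diagonal matrix $D=\diag(w_1,\dots,w_d)$, for which $\Omega_D^k$ \emph{is} the Schur multiplier $\Phi_A$ with $A$ as in \eqref{Adef}. The key observation is that positivity of a Schur multiplier forces positivity of its symbol: applying the positive map $\Phi_A$ to the rank-one positive semi-definite all-ones matrix $J_d=\mathbf{1}\mathbf{1}^*$ gives $A\circ J_d=A\ge0$; since the $w_j>0$ and $d$ are arbitrary, (c) follows. For (c) $\Raw$ (a), I would reverse the reduction: given any $D$, condition (c) with $w_j=\lambda_j$ makes the symbol $A=[\phi^k(\lambda_i,\lambda_j)]$ positive semi-definite, and then $\Phi_A\ot\id_n$ is realized (as noted just before the theorem) as Schur multiplication by $A\ot J_n$, which is positive semi-definite as a tensor product of two positive semi-definite matrices; hence $\Phi_A\ot\id_n$ is positivity-preserving for every $n$, so $\Phi_A$ is CP. Finally, $\Omega_D^k$ arises from $\Phi_A$ by conjugation with the fixed unitaries $U,U^*$ in \eqref{schurgen}, and conjugation by a unitary is CP, so $\Omega_D^k$ is CP as a composition of CP maps.

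I do not expect any substantial obstacle here: the statement is a repackaging of the Schur-product characterization of positive and completely positive multipliers together with the Schur representation \eqref{schurgen}, which is exactly why the paper flags it as stated ``for completeness.'' The only points requiring mild care are the passage between a general $D$ and the diagonal case via the unitary conjugation in \eqref{schurgen}, so that the quantifier over all $D\in\bP_d$ matches the quantifier over all $w_1,\dots,w_d>0$ in (c), and the elementary identity $A\circ J_d=A$ that converts positivity of the multiplier into positivity of its symbol.
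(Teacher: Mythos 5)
Your proposal is correct and follows essentially the same route as the paper, which gives no separate formal proof but derives the theorem in the preceding paragraph from exactly this Schur-multiplier argument: the representation \eqref{schurgen}, the equivalence of positivity and complete positivity for $\Phi_A(X)=A\circ X$ with the realization of $\Phi_A\ot\id_n$ as Schur multiplication by $A\ot J_n$, and the identity $A\circ J_d=A$. The only (standard) point worth a word is that $J_d$ is merely positive semi-definite, so passing from positivity of $\Phi_A$ on $\bP_d$ to $\Phi_A(J_d)\ge0$ uses a trivial continuity step via $J_d+\eps I$.
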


The next result allows us to replace the matrix $A$ in part (c) of Theorem~\ref{thm:CP1}
by some closely related matrices   for which the positivity condition may be more easily
checked in some situations.

\begin{prop} \label{prop:Xalt}
The matrix $A$ in \eqref{Adef} is positive if and only if one {\rm(}and hence both{\rm)\,}
of the following matrices are positive{\rm:}
\be \label{Aalt}
\biggl[w_ik\biggl( \frac{w_i}{w_j} \biggr)\biggr]_{1\le i,j\le d},  \qquad
\biggl[ \sqrt{ \frac{w_i}{w_j}} k\biggl(\frac{w_i}{w_j}\biggr)\biggr]_{1\le i,j\le d}.     
\ee  
\end{prop}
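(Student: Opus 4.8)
The plan is to observe that all three matrices in the statement are related to one another by $*$-congruence with a real positive diagonal matrix, so that their positive semi-definiteness is equivalent. The one fact I would invoke is that for any invertible matrix $S$ and any Hermitian $M$ one has $M\ge0$ if and only if $SMS^*\ge0$; when $S=\diag(s_1,\dots,s_d)$ with each $s_i>0$ real, this specializes to the entrywise statement $[m_{ij}]\ge0\iff[s_is_jm_{ij}]\ge0$. Everything then reduces to reading off the correct diagonal scalings and verifying an entrywise product.

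First I would compare $A$ in \eqref{Adef}, whose entries are $a_{ij}=(1/w_j)\,k(w_i/w_j)$, with the first matrix in \eqref{Aalt}. Taking $S=\diag(w_1,\dots,w_d)$ and computing entrywise, $s_is_ja_{ij}=w_i\cdot(1/w_j)\,k(w_i/w_j)\cdot w_j=w_i\,k(w_i/w_j)$, which is exactly the $(i,j)$ entry of the first matrix in \eqref{Aalt}. Since $S$ is invertible, this yields $A\ge0$ iff that matrix is positive. Next I would compare $A$ with the second matrix in \eqref{Aalt} using $S=\diag(\sqrt{w_1},\dots,\sqrt{w_d})$; then $s_is_ja_{ij}=\sqrt{w_i}\cdot(1/w_j)\,k(w_i/w_j)\cdot\sqrt{w_j}=\sqrt{w_i/w_j}\,k(w_i/w_j)$, the $(i,j)$ entry of the second matrix, giving the second equivalence.

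Since both alternatives in \eqref{Aalt} have thereby been shown equivalent to positivity of the same matrix $A$, they are equivalent to each other, which is precisely the ``one (and hence both)'' assertion in the statement.

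I do not expect a genuine obstacle here: the result is a routine diagonal-congruence argument, and the only point requiring care is the bookkeeping of the entrywise products, which I have checked above. The substance lies not in the proof but in its later use: the reformulations in \eqref{Aalt} often take a more convenient form for applying the positivity test of Theorem~\ref{thm:CP1}(c) in concrete examples — in particular, the second matrix depends only on the ratios $w_i/w_j$, so positivity can be analyzed through a single-variable function of those ratios.
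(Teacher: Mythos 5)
Your proposal is correct and is essentially the paper's own proof: the paper conjugates $A$ by the diagonal matrix $W=\diag(w_1,\dots,w_d)$ and by $W^{1/2}$, which is exactly your diagonal $*$-congruence argument, and your entrywise bookkeeping is accurate.
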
 

\begin{proof}
Let $W$ be the diagonal matrix with entries $w_i \delta_{ij}$. Then the matrices above
correspond to $W^*AW$ and $(W^*)^{1/2} A W^{1/2}$, respectively.
\end{proof}

\subsection{Background and motivation}  \label{sect:back} 

For each $k \in \cK$, the map $\Omega_D^k$ can be used to define a quadratic form
\be \label{riemk}
\Gamma_D^k(X,Y) \equiv \bra X, \Omega_D^k(Y) \ket_\HS = \tr X^* \Omega_D^k(Y) 
\ee
which can be interpreted as a metric on the Riemannian manifold
$\cD_d \equiv \{ D \in \bP_d : \tr D = 1\}$ of invertible density matrices in $\bM_d$. 
Here, the  matrices in $\bH_d $ with trace zero form the tangent space, denoted by
$\bH_d^0$, of $\cD_d$ at each foot point $D$.  This metric is {\em monotone} in the sense
that for any completely positive and trace-preserving map $\Phi:\bM_d\raw\bM_m$
($d,m\in\bN$),
\be \label{contract}
\Gamma_{\Phi(D)}^k\big(\Phi(X), \Phi(Y) \big) \leq \Gamma_D^k(X,Y), \qquad
D \in \cD_d,\ X,Y\in\bH_d^0.
\ee
The theory of monotone Riemannian metrics was largely developed
 by Petz \cite{Pz2} after Morozova and Chentsov \cite{MC}
introduced the concept. It was shown  in \cite{Pz2} that each $k \in\cK$
defines a family of monotone metrics of the form \eqref{riemk} with $D \in \cD_d$ for all
$d \in  \bN$, and that any Riemannian metric on $\cD_d$, $d\in\bN$, which satisfies the
contraction condition \eqref{contract} must be of the form \eqref{riemk} for some
$k \in \cK$.  (See also \cite{Kum}.)

In  \cite{Pz2}, the operator $J_D^f$ defined in \eqref{inv} was used to define
monotone metrics in the equivalent form as
$$
\Gamma_D^k(X,Y) = \bra X, (J_D^f)^{-1}(Y) \ket_\HS = \tr X^* (J_D^f)^{-1} (Y).
$$
It might seem more natural to work with $ J_D^f $ which is a non-commutative version of
multiplication by $D$ rather than using its inverse $\Omega_D^k$ (introduced in \cite{LR}).    
However, in this paper we use $\Omega_D^k$ instead of $J_D^f$ since it
avoids the need to take inverses to define our target maps.

In \cite{TKRWV}, monotone metrics of the form  $\Gamma_Q^k(P-Q,P-Q)$ with  $P, Q \in \cD_d$
played an important role in the study of mixing times of Markov processes. It was observed
in \cite[Section III.B]{TKRWV} and \cite[Section IV.C]{LR} that when both $\Omega_D^k$
and its inverse $(\Omega_D^k)^{-1}$ are positivity-preserving, one can obtain a useful
upper bound on the contraction of Riemannian metrics, which is described in more detail
in Appendix~\ref{app:contractbd}. In \cite{TKRWV} this bound was used in the case
$k(x) = x^{-1/2}$ for which both $\Omega_D^k (A) =  D^{-1/2} A D^{-1/2}$
and the inverse $ D^{1/2} A D^{1/2} $ clearly map $\bP_d$ into itself. (In fact, for every
$D \in \bP_d$  they are bijections on $\bP_d$.) Theorem~\ref{thm:unique} below implies that
$k(x) = x^{-1/2}$ is the only function in $\cK$ with this property.

The study of {\em quasi-entropies} was also initiated by Petz in
\cite{Pz0,Pz1,OP}, which can be defined from any operator convex function $g$ on $(0,\infty)$
with $g(1) = 0$ as
\be \label{quasi}
H_g(A,B,K) \equiv \bra K,g\big(L_A R_B^{-1}\big)R_BK\ket_\HS
= \tr \sqrt{B} K^* g \big(L_A R_B^{-1} \big)(K \sqrt{B} )
\ee
for $A,B \in \bP_d, K \in \bM_d $.  It was later observed in \cite{Pz3,PH} that
for any $D \in \cD_d$ the Hessian
\bee
- \frac{ \partial^2}{\partial a \partial b }\,H_g(D + a X, D + b Y, I)\Big|_{a=b=0},
\qquad X, Y \in \bH_d^0,
\eee
can be associated with a monotone Riemannian metric in some important examples.
This was   proved  for more general $g$ in
\cite[Theorem II.8]{LR}, where it was also noticed that  
$g(x) = (1-x)^2 k(x) $ with $k \in \cK$ is an operator convex function on $(0,\infty)$
with $g(1) = 0 $.   Moreover,  the symmetry condition $x k(x) = k \ofxinv $
implies that $ x g \ofxinv = g(x) $ and the quasi-entropy with $K=I$ has the symmetry
property
\be \label{sym}
H_g(A,B, I) =  H_g(B,A, I).
\ee
and that every quasi-entropy with this symmetry property comes from a $k \in \cK$.
The quantity $H_g( A,B, I) $ is often called an {\em $f$-divergence}. See \cite{HMPB} for
a thorough discussion of $f$-divergences (without the additional symmetry condition).  

When $k(x) = 4/(1 + \sqrt{x})^2$ so that $g(x) = 4(1-\sqrt{x})^2$, the function
$H_g(A,A,K)$ is the {\em Wigner-Yanase skew information} \cite{WY}, which Dyson suggested
extending to the case including the parameter $p \in (0,1)$, that is equivalent to using
$g(x) = 4(1 - x^p)(1 - x^{1-p})$. This led to Lieb's seminal work  on concave trace
functions \cite{Lb}, in which he showed that $(A,B) \mapsto \tr K^* A^p K B^{1-p} $ is
jointly  concave in $A, B \in \bP_d$ when $p \in (0,1)$. It is
implicit\footnote{Ando found an alternate proof of Lieb's concavity results and also
showed convexity for $p \in (1,2)$. Both Lieb and Ando ignored the linear term
$\tr K^* A K$ in the skew information, since it was irrelevant to convexity.} in Ando's
paper \cite{Ando2} that the quasi-entropy  $H_g(A,B,K)$ can be extended to
$g(x) = (1 - x^p)(1 - x^{1-p})/p(1-p)$ with $p \in [-1,2]$. Hasegawa \cite{Has} seems to
have been the first to use well-known properties of monotone and convex operator functions
to explicitly recognize that replacing $4$ by $1/p(1-p)$ allows one to extend the
quasi-entropy\footnote{Hasegawa actually used the asymmetric $g(x) = (1 - x^p)/p(1-p)$.
However, it follows from Eq.~(37) in \cite{LR} that this yields the same $k(x)$ given by
\eqref{WYD} as the symmetric version above.} for the WYD skew information and the
associated Riemannian metric to the maximal range  $p \in [-1,2]$ (with $p = 0,1$ defined
as limits{\footnote{Lindblad \cite{Lind} was the first to observe that one could recover
joint convexity of the usual relative entropy by taking $\lim_{p \raw 1} $ in Lieb's
result.}).  See also \cite{JR} where equality conditions were given for  the convexity of
$H_g(A,B,K) $ and some other inequalities for the extended WYD family. 
    
In this paper, we make use of tools developed  by Hiai and Kosaki \cite{HK1,HK2}
in study of means of operators.\footnote{This work was motivated by inequalities for
unitarily invariant norms.  The term {\em mean} used there does not, in general, yield
the mean of a pair of operators in the sense of Kubo and Ando \cite{KA}.} Motivated by
this work, whenever $k\in\cK$, we define
\begin{align}
M^k(x,y) & \equiv {y \over k(x/y)}, \qquad x,y>0, \label{HKmean}\\
M^k(A,B) & \equiv R_B \Bigl(k\bigl(L_A R_B^{-1} \bigr)\Bigr)^{-1}, \quad\ A,B \in \bP_d.
\label{HKopmean}
\end{align}
From \eqref{omegadef} and \eqref{inv} we have in particular
\be \label{M-Omega}
M^k(D,D) = \bigl(\Omega_D^k\bigr)^{-1} = \Omega_{D^{-1}}^{\wh{k}} = J_D^f.
\ee
The function $M^k(x,y)$ is called a {\it symmetric homogeneous mean} for positive scalars,
i.e., $M=M^k:(0,\infty)\times(0,\infty)\to(0,\infty)$ is a continuous function such that
\begin{itemize}
\item[(1)] $M(x,y)=M(y,x)$,
\item[(2)] $M(tx,ty)=tM(x,y)$ for $t>0$,
\item[(3)] $M(x,y)$ is non-decreasing in $x,y$,
\item[(4)] $\min\{x,y\}\le M(x,y)\le \max\{x,y\}$.
\end{itemize}
With $f = 1/ \wh{k}$, definition \eqref{HKmean} is equivalent to
$M^k(x,y) = y \,f(x/y)$ which was used in \cite{HK1,HK2} under the weaker
condition that $f$ is non-decreasing in the numerical sense. It follows from
Proposition~\ref{prop:hat} that as $k$ runs through $\cK$ both conventions generate the
same set of operators of the form \eqref{HKopmean}.

\subsection{The convex sets $\cK$ and $\cK^+$}  \label{sect:Kext}

Recall that $\cK$ denotes the set of functions $k:(0,\infty)\to(0,\infty)$ satisfying any
of the equivalent conditions of Theorem \ref{thm:kequiv}  and $k(1)= 1$.  With
$\wh{k}(x) = 1 / k \ofxinv $ given in Proposition \ref{prop:hat},
$k \mapsto \wh{k}$ is a bijective transformation on $\cK$.

For $\nu\in[0,1]$ let us set
\be \label{extpt}
k_\nu^\ext \equiv \frac{(1+\nu)^2 }{ 2} \cdot \frac {1+x }{ (x+\nu)(1+\nu x)}
= \frac{(1+ \nu)}{2}  \bigg(  \frac{1}{x + \nu}  +  \frac{1}{1 +  x \nu} \bigg).
\ee
For any fixed $x\in(0,\infty)$}, by computing the derivative of $k_\nu(x)$ in $\nu$ one
can easily verify that $k_\nu(x)$ is non-increasing in $\nu\in[0,1]$ so that
\be \label{extptord}
k_1^\ext(x)  = \frac{2}{1+x} \le k_\nu^\ext(x) \le
\frac{1+x}{2x} = k_0^\ext(x),\qquad \nu\in(0,1).
\ee
Since Theorem~\ref{thm:intrep} can be rewritten as
\be \label{intrepnu}
k(x) = \int_{[0,1] }  k_\nu^\ext(x)  ~ dm(\nu)
\ee
with $m$ a probability measure, one moreover has
\be \label{genpword}
\frac{2}{1+x}  \le  k(x) \le \frac{1+x}{2x}, \qquad k \in \cK.
\ee 
Thus, $\cK$ has the smallest element $k_1^\ext(x)=2/(1+x)$ and the largest element
$k_0^\ext(x)=(1+x)/2x$ in the pointwise order.

Now we may consider $\cK$ as a subset of the locally convex topological vector space
consisting of real functions on $(0,\infty)$ with the pointwise convergence topology.
Then it is obvious from \eqref{genpword} that $\cK$ is a convex and compact subset.
The uniqueness of the representing measure $m$ in Theorem~\ref{thm:intrep} implies that
$\cK$ is a Choquet simplex with the extreme points $k_\nu^\ext$ for $\nu\in[0,1]$ (that is
the reason for the notation $k_\nu^\ext$). Furthermore, since $\nu\mapsto k_\nu^\ext$ is
a homeomorphism from the interval $[0,1]$ into $\cK$, one sees that $\cK$ is a so-called
Bauer simplex (as in \cite{HP}). 

Motivated by the work on contraction bounds in \cite{LR, {TKRWV}} which is described in
Appendix~\ref{app:contractbd}, we define two subsets $\cK^+$ and $\cK^-$ of $\cK$ as
\begin{linenomath} \begin{align*}
\cK^+&\equiv \{k\in\cK:\mbox{$\Omega_D^k$ is CP for every $D\in\bP_d$, $d\in\bN$}\}, \\
\cK^-&\equiv \{k\in\cK:\mbox{$(\Omega_D^k)^{-1}$ is CP for every $D\in\bP_d$, $d\in\bN$}\}.
\end{align*} \end{linenomath}
 It follows from \eqref{inv} that 
\begin{equation}\label{relK^+K^-}
k\in\cK^+\Longleftrightarrow \wh{k} \in\cK^-\quad\mbox{where}\quad\wh{k}(x)=1/k\ofxinv.
\end{equation}

It follows from from Theorem \ref{thm:CP1} that $\cK^+$ and $\cK^-$ are closed under
pointwise convergence.  Although $\cK^+$ is convex,  $\cK^-$ is not convex 
(as shown in Example \ref{Example 4.4} below).  Since $\cK^+$ is a compact convex subset of $\cK$,
 it is the closed convex hull of its 
extreme points by the Krein-Milman theorem.  However, determining all the extreme points of
$\cK^+$ seems quite challenging.   Some non-trivial  ones are described in
Example \ref{Example 4.3}. 

In this paper, we have chosen to formulate most of our results in
terms of functions $k \in \cK$. Ê As is clear from Theorem~\ref{thm:kequiv}
we can also define the convex set of functions $\cF$ with $f = 1/k$ which
satisfy property (c) or (d). Ê Although our choice is
partly a matter of taste, in some situations,
one may be more convenient than the other.
We find it useful here to let
$$
\cF^{\pm} \equiv \{f \in \cF : \mbox{$(\Omega_D^{1/f})^{\pm 1} $ is CP for every
$D\in\bP_d$, $d\in\bN$}\},
$$
so that $\cK^{\pm} $ corresponds to $\cF^{\pm} $ by $k\leftrightarrow f=1/k$.
Since $1/k(x)=\wh{k}(x^{-1})=x\,\wh{k}(x)$, it is obvious by \eqref{relK^+K^-} that
$$
\cF=\{xk(x):k\in\cK\},\qquad
\cF^+=\{xk(x):k\in\cK^-\},\qquad
\cF^-=\{xk(x):k\in\cK^+\}.
$$
Hence $k\leftrightarrow xk(x)$ gives an affine correspondence between $\cK$ and $\cF$, by
which $\cK^+$ is isomorphic to $\cF^-$. Therefore, $\cF^-$ is also convex
and the extreme points of $\cF$ are  
\be
f_\nu^\ext(x) = x\, k_\nu^\ext(x)
=Ê\frac{(1+\nu)^2 }{ 2} \cdot \frac {x \, ( 1+x) }{ (x+\nu)(1+\nu x)}.
\ee


\section{Positive kernels and induced order}  \label{sect:posdef}

\subsection{Basic definitions}

In principle, the condition of Theorem~\ref{thm:CP1}\,(c) gives a simple criterion for
complete positivity. But in practice, it is not easy to verify that either the matrix $A$
in \eqref{Adef} or one of those in \eqref{Aalt} is positive semi-definite. Only a few
examples can be resolved using this criterion. However, there is another equivalent
condition based on the theory of functions which define positive kernels.

\begin{defn}\label{def:PD/ID} \rm \
 A continuous function $h: \bR \mapsto \bC$ is called
{\em positive definite} if $h(x-y)$ is a positive semi-definite kernel, i.e.,
$\bigl[h(t_i-t_j)\bigr]_{1\le i,j\le d}$ is positive semi-definite for any
$t_1,\dots,t_d\in\bR$ with any $d\in\bM$, or equivalently,
$$  
\iint \overline{\ffi(s)} h(s-t) \ffi(t)\,ds\,dt \geq 0,
\qquad  \ffi \in C_0^\infty(\bR),  
$$
where $C_0^\infty(\bR)$ denotes the smooth compactly supported functions on $\bR$.
Functions satisfying this condition are sometimes called  ``functions of positive type'' 
\cite[Section IX.2]{RS2} or ``positive in the sense of Bochner''.

Moreover, $h$ is called {\it infinitely divisible} if $h(t)^r$ is positive
definite for every $r>0$, or equivalently,  $h(t)^{1/n}$ is positive
definite for every $n\in\bN$.
\end{defn}

For convenience, some basic properties of positive definite functions stated here:
\begin{itemize}
\item[(a)] A positive definite function $h$ is uniformly bounded on $\bR$ as
$|f(t)| \le f(0)$.
\item[(b)] Bochner's theorem (see \cite[Theorem~IX.9]{RS2}, \cite[Section 60]{AG}) says
that $h$ is positive definite if and only if it is the Fourier transform of a finite
positive measure on $\bR$. Thus, positive definiteness of $h$ can be checked, in principle,
by testing positivity of its Fourier transform.

\item[(c)] The product of positive definite functions is positive definite. This
immediately follows from the well-known fact that the Fourier transform of the convolution
of two finite measures is the product of their Fourier transforms, or from the Schur
product theorem for positive semi-definite matrices.
\end{itemize}

In this paper we only consider positive definite functions on $\bR$ so that we shall omit
``on $\bR$" in the rest. Positive definite functions played an important role in the work
\cite{HK1,HK2} on means of operators, where a partial order was introduced. The following
definition is its adaptation to functions in $\cK$:

\begin{defn} \label{def:order}\rm
For $k_1,k_2\in\cK$ we write $k_1\leqc k_2$ if either of the following equivalent
conditions holds\,:
\begin{itemize}
\item[\rm(a)] the function $k_1(e^t)/k_2(e^t)$ is positive definite on $\bR$\,;
\item[\rm(b)] the matrix
$$
\ds{\biggl[{k_1(w_i/w_j)\over k_2(w_i/w_j)}\biggr]_{1\le i,j\le d} }
$$
is positive semi-definite for every $w_1,\dots,w_d>0$ with any $d\in\bN$.
\end{itemize}
\end{defn}

It is easily verified as in \cite{HK1,HK2} that $\leqc$ is really a partial order in $\cK$,
and $k_1\leqc k_2$ implies $k_1(x)\ \le k_2(x)\ $  on $(0,\infty)$, i.e., $ k_1 \le k_2 $ pointwise.   

The stronger condition  that $k_1(e^t)/k_2(e^t)$ is
infinitely divisible (following Definition~\ref{def:PD/ID}),  was studied in \cite{BK}.  
Results given there sometimes play a role 
in showing that the one-parameter families studied in Section~\ref{sect:classic} are
monotonic in the $\leqc$ order.  Moreover, infinite divisibility is important in the
discussion of geometric bridges in Sections~\ref{sect:geombrdg} and \ref{sect:geombrdgpf}. 
Some examples considered here require new results for specific functions which are obtained 
in Sections~\ref{sect:infdiv} and \ref{sect:infdiv2}.

The next useful lemma on positive definite functions will often be used in this paper.
See \cite[Appendix B]{Ko0} and \cite[Theorem 3.2]{BhPa} for the proof of (1). 
On the other hand, (2) was first proved in \cite[Theorem 5.1]{BhPa} while
the ``if part" was pointed out earlier in \cite{Zh}.

\begin{lemma} \label{lemm:hyperb} \
\begin{itemize}
\item[\rm(1)] The function ~$\sinh\alpha t/\sinh t$ is positive definite for
$\alpha\in(0,1)$.
\item[\rm(2)] For $\beta>-1$, the function ~$(\cosh t+\beta)^{-1}$ is positive definite if
and only if $\beta\le1$.
\end{itemize}
\end{lemma}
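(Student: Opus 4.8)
The unifying tool is Bochner's theorem as recorded in property~(b) following Definition~\ref{def:PD/ID}: a continuous, integrable function on $\bR$ is positive definite if and only if its Fourier transform is a nonnegative function. Both functions in question are even and, on the stated parameter ranges, integrable over $\bR$: indeed $\sinh(\alpha t)/\sinh t\sim e^{-(1-\alpha)|t|}$ as $|t|\to\infty$ when $\alpha\in(0,1)$, and $(\cosh t+\beta)^{-1}\sim 2e^{-|t|}$ for every $\beta>-1$. So the whole problem reduces to computing three Fourier transforms by residues and inspecting their signs. For (1), I would evaluate $\int_{-\infty}^{\infty}\frac{\sinh\alpha t}{\sinh t}\,e^{-i\xi t}\,dt$ by integrating $\frac{\sinh\alpha z}{\sinh z}\,e^{-i\xi z}$ around the rectangle with vertices $\pm R$ and $\pm R+i\pi$. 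Inside the open strip $0<\Im z<\pi$ the quotient is holomorphic (the apparent singularity at $z=0$ is removable), the only pole sitting on the top edge at $z=i\pi$, which I would skirt with a small indentation; the vertical edges vanish as $R\to\infty$. Relating top and bottom edges through $\sinh(z+i\pi)=-\sinh z$ should yield
\[
\int_{-\infty}^{\infty}\frac{\sinh\alpha t}{\sinh t}\,e^{-i\xi t}\,dt=\frac{\pi\sin\pi\alpha}{\cosh\pi\xi+\cos\pi\alpha}.
\]
Since $\alpha\in(0,1)$ gives $\sin\pi\alpha>0$ and $\cos\pi\alpha>-1$, the denominator exceeds $\cosh\pi\xi-1\ge0$ and is in fact strictly positive for all $\xi$; hence the transform is positive and Bochner's theorem gives positive definiteness.

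For (2), write $h_\beta(t)=(\cosh t+\beta)^{-1}$. When $-1<\beta<1$ I would set $\beta=\cos\theta$ with $\theta\in(0,\pi)$ and integrate $\frac{e^{-i\omega z}}{\cosh z+\cos\theta}$ around the rectangle of height $2\pi$ (the period of $\cosh$), which now encloses the two simple poles $z=i(\pi-\theta)$ and $z=i(\pi+\theta)$. Combining edges via $\cosh(z+2\pi i)=\cosh z$ and summing residues should give
\[
\int_{-\infty}^{\infty}\frac{e^{-i\omega t}}{\cosh t+\cos\theta}\,dt=\frac{2\pi}{\sin\theta}\cdot\frac{\sinh\theta\omega}{\sinh\pi\omega}.
\]
Here $\sin\theta>0$, and the ratio $\sinh\theta\omega/\sinh\pi\omega$ is positive for every $\omega$ precisely because $\theta/\pi\in(0,1)$ — which is exactly the positivity proved in part~(1) (the two transforms are in fact Fourier duals of one another, a convenient cross-check). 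Thus $h_\beta$ is positive definite for $-1<\beta<1$, and the endpoint $\beta=1$ follows by letting $\theta\to0^{+}$, where the transform tends to $2\pi\omega/\sinh\pi\omega>0$; equivalently $h_1(t)=\bigl(2\cosh^2(t/2)\bigr)^{-1}$.

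For the converse I must show $h_\beta$ fails to be positive definite when $\beta>1$. Setting $\beta=\cosh\mu$ with $\mu>0$, the analogous residue computation (the poles now being $z=\pm\mu+i\pi$, i.e.\ the continuation $\theta\mapsto i\mu$ of the previous case) produces
\[
\int_{-\infty}^{\infty}\frac{e^{-i\omega t}}{\cosh t+\cosh\mu}\,dt=\frac{2\pi}{\sinh\mu}\cdot\frac{\sin\mu\omega}{\sinh\pi\omega}.
\]
Because $\sin\mu\omega/\sinh\pi\omega$ is positive for small $\omega>0$ but negative on $(\pi/\mu,2\pi/\mu)$, this transform changes sign, so by Bochner's theorem $h_\beta$ is not positive definite; in fact one needs no exact formula here, since exhibiting a single frequency at which the transform is negative already suffices.

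The step I expect to be the real work is the contour bookkeeping itself — deciding which zeros of $\sinh z$ (resp.\ which roots of $\cosh z=-\beta$) lie strictly inside the chosen strip as opposed to on its boundary, treating the boundary pole in part~(1) by indentation, and checking the decay of the vertical sides. Once the three explicit transforms are secured, the sign analysis and the ``if and only if'' in (2) are immediate, with the Fourier duality between (1) and the case $-1<\beta<1$ of (2) keeping the positivity bookkeeping essentially free.
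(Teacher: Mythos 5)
Your argument is correct, and it supplies a proof where the paper gives none: the authors simply cite \cite[Appendix B]{Ko0} and \cite[Theorem 3.2]{BhPa} for (1), and \cite[Theorem 5.1]{BhPa} together with \cite{Zh} for (2). Your route --- compute the Fourier transform explicitly by residues and invoke Bochner's theorem --- is in fact the standard proof in those references, and it is exactly the technique the paper itself deploys at length in Section~5 for harder functions (compare your period-$2\pi$ rectangle with poles at $i(\pi\pm\theta)$, resp.\ $\pm\mu+i\pi$, to the contours used for Theorems~5.1 and 5.3). All three closed-form transforms you state are correct, the sign analysis is airtight, the passage to $\beta=1$ by the limit $\theta\to0^{+}$ is legitimate since positive definiteness survives pointwise limits of continuous functions, and for $\beta>1$ exhibiting a sign change of the (continuous, integrable) transform does rule out positive definiteness by Fourier uniqueness. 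The observation that the transform in (1) is, up to rescaling, the function in (2) with $\beta=\cos\pi\alpha\in(-1,1)$ is a genuine economy: it reduces the positivity checks in the two parts to a single fact.

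One caution on the step you yourself flag as ``the real work.'' For part (1) the height-$\pi$ rectangle is less convenient than you suggest: the pole at $z=i\pi$ sits \emph{on} the top edge, and since
$$
\sinh\bigl(\alpha(t+i\pi)\bigr)=\cos(\alpha\pi)\sinh(\alpha t)+i\sin(\alpha\pi)\cosh(\alpha t),
$$
the top-edge integrand is not a scalar multiple of the bottom-edge one; it drags in the principal-value transform of $\cosh(\alpha t)/\sinh t$ as a second unknown. So ``relating top and bottom edges'' actually requires either a companion contour identity for $\cosh(\alpha z)/\sinh z$ (yielding a $2\times2$ linear system) or a different contour altogether. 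This is routine but not free, and it is the only place where your sketch underestimates the bookkeeping; the target formula $\pi\sin(\pi\alpha)/\bigl(\cosh(\pi\xi)+\cos(\pi\alpha)\bigr)$ is nonetheless the classical one, and everything downstream of it stands.
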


\subsection{Basic applications}

The next theorem gives a basic characterization of the class $\cK^+$. The equivalence of
(a)--(c) follows immediately from Theorem \ref{thm:CP1} with $A$ replaced by the second
matrix in \eqref{Aalt}. The equivalence of (b) and (d) is an adaptation of
\cite[Theorem 1.1]{HK1} via \eqref{M-Omega} in the present situation.

\begin{thm} \label{thm:CP2}
The following conditions for $k\in\cK$ are equivalent{\rm:}
\begin{itemize}
\item[\rm(a)]   $k \in \cK^+ $, i.e., $\Omega_D^k$ is CP for every $D\in\bP_d$ with
any $d\in\bN${\rm;}
\item[\rm(b)] $k\leqc x^{-1/2}${\rm;}
\item[\rm(c)] $e^{t/2}k(e^t)$ is positive definite{\rm;}
\item[\rm(d)] there exists a symmetric probability measure $\nu$ on $\bR$ such that
\be \label{int-expres}
\Omega_D^k(X)=\int_{-\infty}^\infty D^{-\half+it}XD^{-\half-it}\,d\nu(t)
\ee
for all $D\in\bP_d$ and $X\in\bM_d$ with any $d\in\bN$.
\end{itemize}
\end{thm}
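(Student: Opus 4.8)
The plan is to route all four conditions through the single real function $g(s)\equiv e^{s/2}k(e^s)$ on $\bR$, since each condition turns out to be a statement about $g$, after which I would prove the cycle (a)$\iff$(c), (b)$\iff$(c), (c)$\iff$(d). First I would record two elementary facts. One is that $x^{-1/2}\in\cK$: it is operator convex, it satisfies the symmetry $x\cdot x^{-1/2}=(x^{-1})^{-1/2}$, and it equals $1$ at $x=1$, so Definition~\ref{def:order} applies to it. The other, and this is where the defining symmetry of $\cK$ enters essentially, is that $g$ is \emph{even}: evaluating $xk(x)=k(x^{-1})$ at $x=e^s$ gives $e^sk(e^s)=k(e^{-s})$, whence $g(-s)=e^{-s/2}k(e^{-s})=e^{-s/2}\cdot e^sk(e^s)=g(s)$. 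Note also that $g$ is continuous with $g(0)=k(1)=1$.

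For (a)$\iff$(c) and (b)$\iff$(c) I would merely unwind definitions. By Theorem~\ref{thm:CP1}, condition (a) is equivalent to positivity of the matrix $A$ in \eqref{Adef} for all $w_1,\dots,w_d>0$ and all $d$, and by Proposition~\ref{prop:Xalt} this is equivalent to positivity of the symmetric form $\bigl[\sqrt{w_i/w_j}\,k(w_i/w_j)\bigr]_{i,j}$. Substituting $w_i=e^{t_i}$, the $(i,j)$ entry becomes exactly $g(t_i-t_j)$; since $w\mapsto\log w$ is a bijection of $(0,\infty)$ onto $\bR$, positivity of this matrix for all admissible $w_i$ is, by Definition~\ref{def:PD/ID}, precisely the positive definiteness of $g$, which is (c). The equivalence (b)$\iff$(c) is then immediate from Definition~\ref{def:order} with $k_2(x)=x^{-1/2}$, because $k(e^t)/k_2(e^t)=k(e^t)/e^{-t/2}=g(t)$.

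The substantive content is (c)$\iff$(d), for which I would invoke Bochner's theorem. Given (c), positive definiteness of $g$ yields $g(s)=\int_\bR e^{ist}\,d\nu(t)$ for a finite positive measure $\nu$; the normalization $g(0)=1$ forces $\nu(\bR)=1$, and the evenness of $g$ forces $\nu$ to be symmetric, since by uniqueness of the representing measure $\nu$ must coincide with its reflection. Feeding the resulting scalar identity $u^{1/2}k(u)=\int_\bR u^{it}\,d\nu(t)$, with $u=x/y$, into the two-variable symbol $\phi^k(x,y)=y^{-1}k(x/y)$ yields $\phi^k(x,y)=\int_\bR x^{-1/2+it}y^{-1/2-it}\,d\nu(t)$, and applying the functional calculus for the commuting positive operators $L_D,R_D$, for which $L_D^sR_D^{s'}X=D^sXD^{s'}$, produces exactly the integral representation \eqref{int-expres}. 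The interchange of functional calculus and integration is harmless: diagonalizing $D$ as in \eqref{schurgen} reduces the operator identity to the scalar identity applied to each eigenvalue pair $(\lambda_i,\lambda_j)$, so no analytic subtlety arises. The converse (d)$\Rightarrow$(c) is obtained by reading the symbol off \eqref{int-expres}, which exhibits $g$ as the Fourier transform of the positive measure $\nu$, hence positive definite.

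I expect no serious obstacle here; the argument is essentially a dictionary translating complete positivity into a positive semidefinite Schur matrix and then, via Bochner's theorem, into positive definiteness of $g$. The one point needing genuine care, and the only place the defining symmetry of $\cK$ is used in an essential way, is the evenness of $g$: this is exactly what allows the representing measure $\nu$ to be chosen \emph{symmetric} rather than merely finite and positive, so that the kernel in \eqref{int-expres} pairs $D^{-1/2+it}$ with $D^{-1/2-it}$. This direct route through $g$ also makes (c)$\iff$(d) self-contained, in place of deriving (b)$\iff$(d) from the mean representation of \cite{HK1} through \eqref{M-Omega}.
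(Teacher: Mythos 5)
Your proof is correct and follows essentially the same route as the paper: the equivalences (a)$\iff$(b)$\iff$(c) are exactly the paper's reduction via Theorem~\ref{thm:CP1} and the second matrix in \eqref{Aalt} under the substitution $w_i=e^{t_i}$. For the equivalence with (d) the paper simply cites \cite[Theorem 1.1]{HK1} via \eqref{M-Omega}, and your direct Bochner-theorem argument (with evenness of $e^{t/2}k(e^t)$ forcing symmetry of $\nu$) is precisely the content of that cited result written out in this special case, so nothing is gained or lost beyond self-containedness.
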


It is a well-known consequence of the Stinespring representation theorem that a CP map
$\Phi$ on the matrix algebra $\bM_d$ can be  represented in the form
$\Phi(A) = \sum_j F_j A F_j^* $ with at most $d^2$ matrices $F_j \in \bM_d$ (see, e.g.,
\cite{Kraus,Choi}, \cite[Proposition 4.7]{Paul} or \cite[Appendix~A]{KMNR}).
Thus, for any fixed $D \in \bP_d$, when $\Omega_D^k$ is CP, one can find matrices  $F_j$
in $\bM_d$ such that $\Omega_D^k(X) = \sum_{j = 1}^m  F_j X F_j^*$ with $m \leq d^2$.
But, for fixed $k \in \cK$, the representation will change with $D$ (hence with $d$).
(Even for fixed $D$ the $F_j$ in the representation  are only determined up to a unitary
transformation $F_j \mapsto \sum_i  u_{ij} F_i $ with $u_{ij}$ entries of a unitary
matrix.)  However, when we are allowed to use integral representation, Theorem \ref{thm:CP2}
says that we have the standard representation given in \eqref{int-expres}, from which
the CP of the map $\Omega_D^k$ is directly seen. Moreover, one sometimes has different
integral expressions of $\Omega_D^k$ or $(\Omega_D^k)^{-1}$; a typical example is
\eqref{BKM} for $\Omega_D^k$ in case of $k(x)=\log x/(x-1)$ (see Appendix \ref{app:BKM}).

It follows immediately from \eqref{relK^+K^-} and Theorem~\ref{thm:CP2} that $k\in\cK^-$
if and only if $k\succcurlyeq x^{-1/2}$. Consequently, $x^{-1/2}$ is the largest element
of $\cK^+$ and the smallest of $\cK^-$. Moreover, since $\leqc$ is a partial order on
$\cK$, we conclude

\begin{thm}\label{thm:unique}
The only function in $\cK$ for which both $\Omega_D$ and $\Omega_D^{-1}$ are CP for every
$D\in\bP_d$, $d\in\bN$, is $x^{-1/2}$.
\end{thm}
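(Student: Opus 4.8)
The plan is to deduce the statement directly from the order-theoretic characterizations of $\cK^+$ and $\cK^-$ just established, using that $\leqc$ is a genuine (antisymmetric) partial order on $\cK$. The key point is that $x^{-1/2}$ is simultaneously the largest element of $\cK^+$ and the smallest element of $\cK^-$, so membership in both sets pins $k$ down exactly; no new analytic input is needed, as all the real work has been absorbed into Theorem~\ref{thm:CP2}.

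First I would reformulate the two complete positivity hypotheses as order relations. The hypothesis that $\Omega_D^k$ is CP for every $D\in\bP_d$, $d\in\bN$, is by definition the statement $k\in\cK^+$, which by Theorem~\ref{thm:CP2}\,(a)$\iff$(b) is equivalent to $k\leqc x^{-1/2}$. For the inverse map I would invoke the duality \eqref{relK^+K^-}: the hypothesis that $(\Omega_D^k)^{-1}$ is CP for every $D$ says $k\in\cK^-$, equivalently $\wh k\in\cK^+$, equivalently $\wh k\leqc x^{-1/2}$. The one genuinely computational step is to translate $\wh k\leqc x^{-1/2}$ back into a condition on $k$. Here I would write out Definition~\ref{def:order}\,(a) and use the symmetry $k\ofxinv=xk(x)$, i.e.\ $k(e^{-t})=e^t k(e^t)$, which gives $\wh k(e^t)\,e^{t/2}=e^{t/2}/k(e^{-t})=e^{-t/2}/k(e^t)$. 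Thus the positive definiteness of $\wh k(e^t)\,e^{t/2}$ is literally the positive definiteness of $e^{-t/2}/k(e^t)$, which is precisely the relation $x^{-1/2}\leqc k$. This recovers the assertion, already noted in the text, that $k\in\cK^-\iff x^{-1/2}\leqc k$.

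With both hypotheses in hand I would have $k\leqc x^{-1/2}$ and $x^{-1/2}\leqc k$ simultaneously, and to finish I invoke antisymmetry of $\leqc$. The cleanest route avoids quoting antisymmetry abstractly: since $k_1\leqc k_2$ implies $k_1\le k_2$ pointwise on $(0,\infty)$ (recorded just after Definition~\ref{def:order}), the two relations give $k\le x^{-1/2}$ and $x^{-1/2}\le k$ pointwise, whence $k(x)=x^{-1/2}$ for all $x$. Finally I would check that $x^{-1/2}$ actually lies in $\cK$ and realizes the property, so the conclusion is not vacuous: $x^{-1/2}$ is operator convex, satisfies $x\cdot x^{-1/2}=x^{1/2}=(x^{-1})^{-1/2}$ and $k(1)=1$, while $\Omega_D^k(A)=D^{-1/2}AD^{-1/2}$ and its inverse $D^{1/2}AD^{1/2}$ are manifestly CP, as noted in Section~\ref{sect:back}. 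I do not expect any real obstacle; the only point requiring care is the symmetry-driven identification of the $\cK^-$ condition, after which antisymmetry of the order closes the argument immediately.
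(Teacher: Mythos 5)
Your proposal is correct and follows essentially the same route as the paper: the text immediately preceding the theorem deduces from \eqref{relK^+K^-} and Theorem~\ref{thm:CP2} that $k\in\cK^+\iff k\leqc x^{-1/2}$ and $k\in\cK^-\iff k\succcurlyeq x^{-1/2}$, and then concludes by antisymmetry of the partial order $\leqc$. Your only variation is to make the antisymmetry concrete via the pointwise-order implication, which is exactly how that property of $\leqc$ is verified anyway.
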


It follows from Theorem \ref{thm:CP2} that the   problem of determining whether or not
$k\in\cK$ belongs to $\cK^+$ can be reduced to the computation of the Fourier transform of
the function $e^{t/2}k(e^t)$.  However, this is often a hard task as will be seen in
Section~\ref{sect:proofs}.    

In contrast to $\cK$, it does not seem easy to find extreme points of $\cK^+$ other than
$x^{-1/2}$ and $2/(1+x)$ which are the largest and the smallest elements of $\cK^+$,
respectively, in the order $\leqc$ as well as the pointwise order.  However,
some new extreme points will be described in Example \ref{Example 4.3} and
Theorem~\ref{thm:newext}. In addition, a natural boundary point will be found in
Example~\ref{ex:WYD} which is conjectured to be an extreme point.

By comparing part (b) of the next result to \eqref{extptord}, one immediately sees that
$\leqc$ is stronger than the pointwise order.

\bigskip

\begin{prop}\label{prop:simpord}  The following relations hold.
\mbox{}
\begin{itemize}
\item[\em(a)] $k_1^\ext(x) = \dfrac{2}{1+x} \leqc k_\nu^\ext~$ and
$~\wh{k}_\nu^\ext\leqc  \dfrac{1+x}{2x}= k_0^\ext(x)~$ for all $\nu\in[0,1]$.
\item[\rm(b)] $k_\nu^\ext\not\leqc \dfrac{1+x}{2x} = k_0^\ext(x)$ and
$~k_1^\ext(x) = \dfrac{2}{1+x} \not\leqc\wh{k}_\nu^\ext~$ for all $\nu \in (0,1)$.
\item[\rm(c)] $2/(1+x)\leqc x^{-1/2}\leqc(1+x)/2x$.
\end{itemize}
\end{prop}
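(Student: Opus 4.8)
The plan is to translate every assertion into the positive-definiteness criterion of Definition~\ref{def:order}(a) via the substitution $x=e^t$, and then to recognize each resulting ratio as a standard hyperbolic function governed by Lemma~\ref{lemm:hyperb}. The central preliminary step is a hyperbolic rewriting of $k_\nu^\ext(e^t)$. Writing $\nu=e^{-2a}$ with $a\ge0$ and using the factorizations $1+e^t=e^{t/2}\,2\cosh(t/2)$, $e^t+\nu=e^{t/2-a}\,2\cosh(t/2+a)$, $1+\nu e^t=e^{t/2-a}\,2\cosh(t/2-a)$, together with $(1+\nu)^2=e^{-2a}\,4\cosh^2 a$, a direct simplification of \eqref{extpt} gives the master formula
\begin{equation*}
k_\nu^\ext(e^t)=e^{-t/2}\cosh(t/2)\cdot\frac{\cosh^2 a}{\cosh(t/2+a)\cosh(t/2-a)},\qquad \nu=e^{-2a}.
\end{equation*}
Specializing yields $k_0^\ext(e^t)=e^{-t/2}\cosh(t/2)$, $k_1^\ext(e^t)=e^{-t/2}/\cosh(t/2)$, and $(e^t)^{-1/2}=e^{-t/2}$, so the exponential prefactor cancels in every ratio below.

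With this in hand, part (c) is immediate: both ratios $k_1^\ext(e^t)/e^{-t/2}$ and $e^{-t/2}/k_0^\ext(e^t)$ equal $1/\cosh(t/2)$, which is positive definite by Lemma~\ref{lemm:hyperb}(2) with $\beta=0$ together with the elementary invariance of positive definiteness under the rescaling $t\mapsto t/2$.

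For part (a) I would compute
\begin{equation*}
\frac{k_1^\ext(e^t)}{k_\nu^\ext(e^t)}=\frac{\cosh(t/2+a)\cosh(t/2-a)}{\cosh^2(t/2)\,\cosh^2 a},
\end{equation*}
and then apply the product-to-sum identities $\cosh(t/2+a)\cosh(t/2-a)=\tfrac12(\cosh t+\cosh 2a)$ and $\cosh^2(t/2)=\tfrac12(\cosh t+1)$ to rewrite this as
\begin{equation*}
\frac{1}{\cosh^2 a}+\frac{2\sinh^2 a}{\cosh^2 a}\cdot\frac{1}{\cosh t+1}.
\end{equation*}
This is a nonnegative combination of the constant function and $(\cosh t+1)^{-1}$, the latter positive definite by Lemma~\ref{lemm:hyperb}(2) at the threshold value $\beta=1$; hence the ratio is positive definite and $k_1^\ext\leqc k_\nu^\ext$. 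The companion statement $\wh{k}_\nu^\ext\leqc k_0^\ext$ need not be argued separately: since $\wh{k}(e^t)=1/k(e^{-t})$, the involution $k\mapsto\wh{k}$ reverses $\leqc$ and carries $k_1^\ext$ to $k_0^\ext$, so $k_1^\ext\leqc k_\nu^\ext$ is equivalent to $\wh{k}_\nu^\ext\leqc\wh{k}_1^\ext=k_0^\ext$ (equivalently, one checks directly that $\wh{k}_\nu^\ext(e^t)/k_0^\ext(e^t)$ is the very same even function already treated).

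For part (b) the prefactors again cancel, and after the product-to-sum step
\begin{equation*}
\frac{k_\nu^\ext(e^t)}{k_0^\ext(e^t)}=\frac{\cosh^2 a}{\cosh(t/2+a)\cosh(t/2-a)}=(\cosh 2a+1)\cdot\frac{1}{\cosh t+\cosh 2a}.
\end{equation*}
For $\nu\in(0,1)$ one has $a>0$, so $\cosh 2a>1$; by the ``only if'' direction of Lemma~\ref{lemm:hyperb}(2) the function $(\cosh t+\cosh 2a)^{-1}$ is \emph{not} positive definite, and multiplying by the positive constant $\cosh 2a+1$ cannot restore positive definiteness. Hence $k_\nu^\ext\not\leqc k_0^\ext$, and the dual statement $k_1^\ext\not\leqc\wh{k}_\nu^\ext$ follows by the same order-reversing involution used in part (a). The only genuine labor is the bookkeeping in the master formula; once it is in place, each claim collapses onto the sharp threshold $\beta\le1$ of Lemma~\ref{lemm:hyperb}(2), and I expect that threshold — which is precisely what distinguishes the positive definiteness in (a) and (c) from its failure in (b) — to be the conceptual crux, the factorizations themselves being routine.
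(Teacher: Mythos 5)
Your proof is correct and follows essentially the same route as the paper's: both reduce each ratio to the form (positive constant) $+$ (positive constant)$\cdot(\cosh t+1)^{-1}$ for part (a), to a constant multiple of $(\cosh t+\beta)^{-1}$ with $\beta=(1+\nu^2)/2\nu=\cosh 2a>1$ for part (b), and to $1/\cosh(t/2)$ for part (c), with everything hinging on the threshold $\beta\le1$ in Lemma~\ref{lemm:hyperb}(2). The substitution $\nu=e^{-2a}$ and the product-to-sum identities are a cosmetic reparametrization of the paper's direct algebra in $\nu$ and $e^t$, and your use of the order-reversing involution $k\mapsto\wh k$ for the dual statements is exactly what the paper encodes in its chain of equalities $k_1^\ext(e^t)/k_\nu^\ext(e^t)=\wh k_\nu^\ext(e^{-t})/k_0^\ext(e^{-t})$.
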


\begin{proof}
A straightforward computation gives
\begin{linenomath} \begin{align*}
\frac{ k_1^\ext (e^t)}{k_\nu^\ext(e^t) } 
&=\frac{ \wh{k}_\nu^\ext(e^{-t}) }{k_0^\ext(e^{-t}) }
={4\over(1+\nu)^2}\cdot{(e^t+\nu)(1+\nu e^t)\over(1+e^t)^2}  \\
&={4\over(1+\nu)^2}\cdot{\nu(e^t+e^{-t}+2)+(1-\nu)^2\over e^t+e^{-t}+2} \nn  \\
&={4\nu\over(1+\nu)^2}+{2(1-\nu)^2\over(1+\nu)^2}\cdot{1\over\cosh t+1}
\end{align*} \end{linenomath}
from which (a) follows by using $\beta = 1$ in Lemma \ref{lemm:hyperb}\,(2).  Similarly   
\begin{linenomath} \begin{align*}
\frac{ k_\nu^\ext (e^t)  }{ k_0^\ext(e^t) } 
& =\frac{ k_1^\ext (e^{-t})}{ \wh{k}_\nu^\ext(e^{-t}) }
=(1+\nu)^2\,{e^t\over(e^t+\nu)(1+\nu e^t)} \\  \nn 
&=(1+\nu)^2\,{1\over\nu(e^t+e^{-t})+1+\nu^2} \\
&={(1+\nu)^2\over2\nu}\cdot{1\over\cosh t+{1+\nu^2\over2\nu}}.
\end{align*} \end{linenomath}
Since  $(1+\nu^2)/2\nu > 1$  for $\nu\in(0,1)$, this proves (b) by
Lemma \ref{lemm:hyperb}\,(2) again. Finally (c) follows easily from
$$
e^{t/2} k_1^\ext(e^t) = \frac{e^{-t/2}}{k_0^\ext(e^t) }
= \frac{2e^{t/2} }{ e^t+1 } = \frac{1}{ \cosh(t/2)}.
$$
\end{proof}   

\bigskip
Proposition~\ref{prop:lim0} implies that for every $k \in \cK$,  $x k(x) $ is bounded on
$(0, b)$  and $k(x)$ is bounded on $(a,\infty)$ for any $a, b > 0$.  Theorem~\ref{thm:CP2}
implies that a necessary condition for $k\in\cK^+$  is the stronger property that $x^{1/2}k(x)$ is bounded
on $(0,\infty)$. However, this is not a sufficient condition.  Indeed, it holds for all
$k_\nu^\ext(x)$ with $\nu \in (0, 1]$.  Yet, as  will be seen in  Example \ref{Example 4.1}  
$k_\nu^\ext(x) \in \cK^+$ only for $\nu = 1$.  
The following result  will be used in Example \ref{Example 4.4} to
analyze convex combinations
of $x^{-1/2}$ and $k_\nu^\ext$.

\begin{prop}\label{Proposition 3.6}
Assume that $k\in\cK\setminus\cK^+$ and $\lim_{x\to\infty}x^{1/2}k(x)=0$. Then for every
$\lambda\in(0,1]$,
$$
\lambda k(x) +(1-\lambda)x^{-1/2}\notin\cK^+.
$$
\end{prop}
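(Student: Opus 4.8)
The plan is to work entirely through the Fourier-analytic criterion of Theorem~\ref{thm:CP2}: a function in $\cK$ lies in $\cK^+$ precisely when $e^{t/2}k(e^t)$ is positive definite on $\bR$. First I would note that $k_\lambda \equiv \lambda k + (1-\lambda)x^{-1/2}$ belongs to $\cK$ for $\lambda\in(0,1]$, since $\cK$ is convex and $x^{-1/2}\in\cK$. Writing $g(t)\equiv e^{t/2}k(e^t)$, a one-line computation using $e^{t/2}(e^t)^{-1/2}\equiv1$ gives
$$
e^{t/2}k_\lambda(e^t)=\lambda g(t)+(1-\lambda).
$$
Thus, by Theorem~\ref{thm:CP2}, it suffices to show that $\lambda g+(1-\lambda)$ is not positive definite. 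The hypothesis $k\notin\cK^+$ says exactly that $g$ itself is not positive definite, so the content of the proposition is that adding the constant $1-\lambda$ cannot repair positive definiteness.

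Before the main argument I would collect the features of $g$. The symmetry $xk(x)=k\ofxinv$ forces $g$ to be even, since $g(-t)=e^{-t/2}k(e^{-t})=e^{-t/2}\,e^{t}k(e^t)=g(t)$; moreover $g$ is continuous because operator convex functions are continuous. The assumption $\lim_{x\to\infty}x^{1/2}k(x)=0$ says precisely that $g(t)\to0$ as $t\to+\infty$, and by evenness $g(t)\to0$ as $t\to\pm\infty$. Consequently the mean value of $g$ vanishes, $\tfrac{1}{2T}\int_{-T}^{T}g(t)\,dt\to0$ as $T\to\infty$.

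The core is a Wiener-type mean-value argument. Suppose for contradiction that $h_\lambda\equiv\lambda g+(1-\lambda)$ is positive definite. By Bochner's theorem it is the Fourier transform of a finite positive measure $\mu$, and the heuristic is that the additive constant contributes only an atom at the origin; the point is to pin down the exact mass of that atom. Using the elementary identity $\tfrac{1}{2T}\int_{-T}^{T}\widehat{\mu}(t)\,dt=\int\frac{\sin(T\xi)}{T\xi}\,d\mu(\xi)$ together with dominated convergence (the integrand is bounded by $1$ and tends to $\mathbf 1_{\{\xi=0\}}$), and the vanishing mean value of $g$, I would obtain
$$
\mu(\{0\})=\lim_{T\to\infty}\frac{1}{2T}\int_{-T}^{T}h_\lambda(t)\,dt=1-\lambda.
$$
Decomposing $\mu=(1-\lambda)\delta_0+\mu'$ with $\mu'\ge0$ and using $\widehat{\delta_0}\equiv1$ yields $\lambda g=\widehat{\mu'}$. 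Since $\lambda>0$ and $\mu'/\lambda$ is a positive measure, $g=\widehat{\mu'/\lambda}$ is positive definite, contradicting $k\notin\cK^+$.

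The step I expect to be the crux is the Wiener mean-value computation forcing $\mu(\{0\})$ to equal exactly $1-\lambda$: it is here that the decay hypothesis $\lim_{x\to\infty}x^{1/2}k(x)=0$ is indispensable, for without it the mean value of $g$ would be nonzero and the argument would only isolate a shifted constant rather than force $g$ itself to be positive definite. The case $\lambda=1$ is immediate, reducing to the hypothesis $k\notin\cK^+$, and the computation above handles every $\lambda\in(0,1]$ uniformly.
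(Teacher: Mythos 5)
Your proof is correct and follows essentially the same route as the paper: reduce via Theorem~\ref{thm:CP2} and Bochner's theorem to a probability measure $\mu$ with $\hat\mu(t)=\lambda e^{t/2}k(e^t)+(1-\lambda)$, use the decay hypothesis together with the symmetry $e^{t/2}k(e^t)=e^{-t/2}k(e^{-t})$ to identify $\mu(\{0\})=1-\lambda$, and subtract the atom to conclude that $e^{t/2}k(e^t)$ itself is positive definite, a contradiction. The only difference is that you establish the atom-mass identification by a self-contained Wiener mean-value computation, whereas the paper cites \cite[Corollary A.8]{HK2} for the fact that $\lim_{|t|\to\infty}\hat\mu(t)$, when it exists, equals $\mu(\{0\})$.
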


\begin{proof}
Assume that $k\in\cK$ satisfies $\lim_{x\to\infty}x^{1/2}k(x)=0$ and
$\lambda k(x) +(1-\lambda)x^{-1/2}\in\cK^+$ with some $\lambda\in(0,1]$. Then, thanks to
Theorem \ref{thm:CP2} and Bochner's theorem there is a probability measure $\mu$ on $\bR$
satisfying
$$
\lambda e^{t/2}k(e^t)+(1-\lambda)=\hat\mu(t)
\equiv \int_{-\infty}^\infty e^{its}\,d\mu(s),\qquad t\in\bR.
$$
However, the symmetry condition $xk(x)=k\ofxinv$ implies $e^{t/2}k(e^t)=e^{-t/2}k(e^{-t})$,
$t\in\bR$, and hence $\lim_{|t| \to \infty} e^{t/2}k(e^t)=0$.  Therefore, we have
$$
\mu(\{0\})=\lim_{|t| \to \infty}\hat\mu(t)=1-\lambda
$$
(see \cite[Corollary A.8]{HK2}).  This means $e^{t/2}k(e^t)=\hat\mu_0(t)$, $t\in\bR$, 
with the probability measure $\mu_0=\lambda^{-1}(\mu-\mu(\{0\})\delta_0)$, implying the
contradiction  $k\in\cK^+$.
\end{proof}

\section{Examples}   \label{sect:examps}

In this section we list known families of functions in $\cK$ and investigate which functions
in those families belong to $\cK^+$ (or $\cK^-$). In this way we will see that $\cK^+$
indeed contains a variety of functions even though it occupies only a small part of $\cK$.

\subsection{Extreme points and simple averages}  \label{sect:exampav}

\begin{example}\label{Example 4.1}\rm(Extreme points of $\cK$)\quad
The extreme points of $\cK$ are $k_\nu^\ext$, $\nu\in[0,1]$, given in \eqref{extpt}. These
are not in $\cK^+$ unless $\nu=1$ for which we have $k_1^\ext(x)=2/(1+x)$.
Indeed, for $\nu\in(0,1]$ we find
$$
e^{t/2}k_\nu^\ext(e^t)
={(1+\nu)^2\over2\nu}\cdot{\cosh(t/2)\over\cosh t+{1+\nu^2\over2\nu}}.
$$
If $e^{t/2}k_\nu^\ext(e^t)$ is positive definite, then so is its product with the positive
definite $ 1/ \cosh(t/2)$.  But this yields (up to a constant) a function of the form in 
Lemma~\ref{lemm:hyperb}\,(2), which is not positive definite
for $\beta = (1+\nu^2)/2 \nu > 1 $ when $\nu \in (0,1)$.

It was shown in \cite[Example 9]{BP} that $k_\nu^\ext(x)\le x^{-1/2}$ (in the
pointwise order) for all $x>0$ if and only if $3-2\sqrt2\le\nu\le1$.  This example
provides another demonstration that  the $\leqc$ order is stronger and  $\leqc x^{-1/2}$ 
is the  key to determining whether or not a function $k \in \cK^+$.
\end{example}

\begin{example}\label{Example 4.2}\rm
(Convex combinations involving $k_0^\ext$)\quad
Consider the convex combinations
$$
a_{1,0,\lambda}(x) \equiv \lambda k_0^\ext(x)+(1-\lambda) k_1^\ext(x)
= \lambda \frac{1+x}{2x} + (1-\lambda) \frac{2}{1+x}, \qquad\lambda\in[0,1],
$$
of the smallest element of $\cK^+$ and the largest element of $\cK$.  Since
$$
e^{t/2} a_{1,0,\lambda}(e^t) = \frac{1 - \lambda}{\cosh(t/2)} + \lambda \cosh(t/2)
$$
is unbounded for any $\lambda \in (0,1]$, it cannot be positive definite and hence
combining an arbitrarily small amount of $k_0^\ext $ (the largest element of $\cK^{-}$)
with the smallest element of $\cK^+$ moves out of $\cK^+$.   

A similar argument can be used to show that any $k \in \cK$ for which the measure $m$ in
\eqref{intrepnu} has the property that $m(\{ 0 \} ) > 0$ cannot be in $\cK^+$.
\end{example}

\begin{example}\label{Example 4.3}\rm
(Convex combinations of $k_1^\ext$ and $k_\nu^\ext$)\quad
Replacing $k_0^\ext$ in the previous example with another $k_\nu^\ext$ does sometimes
yield convex combination in $\cK^+$. To be precise, let
\be  \label{afunc}
a_{1,\nu,\lambda}(x)
\equiv \lambda k_\nu^\ext(x)+(1-\lambda)\,{2\over 1+x}, \qquad\lambda\in[0,1],
\ee
be a convex combination of the smallest $k_1^\ext(x)=2/(1+x)$ of $\cK^+$ and other
extreme points $k_\nu^\ext$ of $\cK$, $\nu\in(0,1)$. We are interested in the problem to
determine $\nu,\lambda$ for which $a_{1,\nu,\lambda}$ belongs to $\cK^+$. Our result is
that for every $\nu\in[0,1)$, $a_{1,\nu,\lambda}$ is in $\cK^+$ if and only if 
\be  \label{convcombcond}
0 \le \lambda \le \frac{ 2\sqrt\nu}{ (1+\sqrt\nu)^2}
= \frac{2}{ \big(\nu^{1/4} + \nu^{-1/4} \big)^2}.
\ee 
Moreover, $a_{1,\nu,\lambda}$ is an extreme point of $\cK^+$ if and only if equality holds
in \eqref{convcombcond}.
  
Since the proofs require some technical results from Section~\ref{sect:proofs}, they are
postponed to Section~\ref{sect:convextpf}. Note that the right-hand side of 
\eqref{convcombcond} is $< \half$ but $a_{1,1,\lambda}(x)=2/(1+x) \in \cK^+$ for
$\lambda\in[0,1]$. Thus, this example exhibits some discontinuous behavior at $\nu = 1$.  

It is straightforward (see the last paragraph of Section 2) to extend these results to
show that the function
$$
a_{1,\nu,\lambda}(x^{-1})  =  \lambda f_\nu^\ext(x)+(1-\lambda)\, \frac{2x}{1+x},
\qquad\lambda\in[0,1),
$$
is in $\cF^-$ if and only if the inequality holds in \eqref{convcombcond} and that
it is an extreme point of $\cF^-$ if and only if equality holds.  
\end{example} 

\begin{example}\label{Example 4.4}\rm
(Extended Heron means)\quad 
Consider the  convex combinations of   $x^{-1/2} $ and extreme
points of $\cK$, i.e.,
\be   \label{heron}
 \lambda k_\nu^\ext(x)+(1-\lambda)x^{-1/2}
\ee
which are sometimes known as Heron means when $\nu = 1$, in which case
\eqref{heron} is obviously in $ \cK^+ $ for all $ \lambda\in[0,1]$.
However, for $\nu = 0 $ the function \eqref{heron} is in $\cK^+$ only for $\lambda = 0$
since $x^{1/2}k_0^\ext(x)$ is unbounded. Furthermore, it follows from   Proposition \ref{Proposition 3.6} 
that for $\nu\in(0,1)$ and $\lambda \neq 0 $ the function in 
 \eqref{heron}  is never in $\cK^+$ because
$k_\nu^\ext\not\in\cK^+$ and $x^{1/2}k_\nu^\ext(x)\to0$ as $x\to\infty$.

Next, consider  \eqref{heron} with $\nu =  0$ as  the convex combination of the largest
and the smallest elements of  $\cK^-$ for $\lambda\in\bigl(0,\half\bigr)$. Since
$$
e^{-t}\biggl(\lambda\,{1+e^{2t}\over2e^{2t}}+(1-\lambda)e^{-t}\biggr)^{-1}
={1\over\lambda}\cdot{1\over\cosh t+{1-\lambda\over\lambda}}
$$
with $(1-\lambda)/\lambda>1$ is not positive definite by Lemma \ref{lemm:hyperb}\,(2), we
have $\lambda\wh{k}_1^\ext(x)+(1-\lambda)x^{-1/2}\not\in\cK^-$, showing that $\cK^-$ is not
convex.  However,  the dual set $\cF^{-} $ is convex and $f\mapsto k=1/f$ transforms
$\cF^-$ to $\cK^-$. Thus, although $\cK^-$ is not convex, harmonic means of functions in
$\cK^-$ are in $\cK^-$.
\end{example}

\subsection{Families of classic functions in $\cK$}   \label{sect:classic}

Examples given so far suggest us that $\cK^+$ is a rather thin subset of $\cK$. Therefore,
it is a bit surprising that we find a number of one-parameter families in $\cK^+$ in the
examples below.  Each of these families shows some type of symmetry and   monotonicity in
the $ \leqc $ order on maximally suitable intervals.  In all these cases, the symmetry
condition  $x k(x) = k \ofxinv $ can be easily checked and  it is rather straightforward
to use the Pick mapping property to  verify that they are in $\cK$.  Although the most
intriguing family is associated with the WYD skew information, it is also rather complex.

\begin{example}\label{ex:heinz}\rm(Heinz type means)\quad
The family of  functions  
\be  \label{heinz}
k_\alpha^\H(x) \equiv {2\over x^\alpha+x^{1-\alpha}}, \qquad \alpha \in [0,1],
\ee
has the dual family
$$
\wh{k}_\alpha^\H(x) \equiv \frac{1}{ k_\alpha^\H \big( x^{-1} \big) }
= {x^{-\alpha}+x^{-1+\alpha} \over2},\qquad0\le\alpha\le1.
$$
which were used in \cite{TKRWV}.
One easily recovers the Heinz type means via \eqref{HKmean} since
$$
{y \over k_\alpha^\H(x/y)} = {x^\alpha y^{1-\alpha}+x^{1-\alpha}y^\alpha \over2},
\qquad \alpha \in [0,1].
$$
In addition to $k_{1/2}^\H(x)=x^{-1/2} = \wh{k}_{1/2}^\H(x)$, important special cases are
$$
k_0^\H(x) = k_1^\H(x) = \frac{2}{1+x} = k_1^\ext(x), \qquad
\wh{k}_0^\H(x) =  \wh{k}_1^\H(x) = \frac{ 1+x}{2x} =  k_0^\ext(x)
$$
reflecting the obvious symmetry around $x = \half$.
 
Since $e^{t/2}k_\alpha^\H(e^t)=1/\cosh\bigl(\bigl(\alpha-\half\bigr)t\bigr)$ is
positive definite, $k_\alpha^\H\in\cK^+$ for any $\alpha\in[0,1]$ and
$\wh{k}_\alpha^\H\in\cK^-$ for any $\alpha\in[0,1]$. A different proof of the former was
in \cite[Example 3]{BP}.

If $ 0 \le\alpha\le\beta\le \half$, then $k_\alpha^\H\leqc k_\beta^\H$
(see \cite[Section 2]{HK1}) so that the pair of functions $k_\alpha^\H$ for
$\alpha \in \big[0, \half\big]$ and $\wh{k}_\alpha^\H$ for $\alpha \in \big[\half, 1\big]$
can be regarded as a single family which increases in the $\leqc$ order from the smallest
to the largest element of $\cK$. 

Moreover, whenever $0\le\alpha\le\beta\le\half$,
$$
{k_\alpha^\H(e^t)\over k_\beta^\H(e^t)}
={\cosh\left(\left({1\over2}-\beta\right)t\right)
\over\cosh\left(\left({1\over2}-\alpha\right)t\right)}
$$
is infinitely divisible by
\cite[Theorem 1]{BK}. 
\end{example}

\begin{example} \label{ex:binom}  \rm(Binomial means or power means)\quad
The functions
$$
k_\alpha^\B(x)\equiv  \bigg( \frac{2}{x^\alpha+1} \bigg) ^{1/\alpha},
\qquad\alpha\in[-1,1],
$$
are easily verified to be in $\cK$ as observed in \cite[Theorem 3\,(i)]{Na} and correspond
to the binomial (or power) means 
$$
{y \over k_\alpha^\B(x/y)} = \bigg({x^\alpha+y^\alpha\over2}\biggr)^{1/\alpha},
\qquad\alpha\in[-1,1].
$$
Interesting special cases are
\begin{align*}
&k_{-1}^\B(x)=\frac{1+x}{2x} =k_0^\ext(x), \qquad
\ \ k_{-1/2}^\B(x) = \frac{(1 + \sqrt{x})^2}{4x} = \wh{k}_{1/2}^\WYD(x), \\
&k_0^\B(x)= \ds{ \lim_{\alpha \raw 0 } } \, k_\alpha^\B(x) = x^{-1/2}, \qquad
k_{1/2}^\B(x) = {4\over(1 + \sqrt{x})^2} = k_{1/2}^\WYD(x), \\
&k_1^\B(x)={2\over 1+x}=k_1^\ext(x),
\end{align*}
where $k_p^\WYD$ is given in Example \ref{ex:WYD}.
Moreover,  $k_{\alpha}^\B(x)= \wh{k}_{-\alpha}^\B(x)$  which implies that for this family
$$
(\Omega_D^\alpha)^{-1} = \Omega_{D^{-1}}^{-\alpha}, \qquad \alpha\in[-1,1],
$$
with the obvious abuse of notation. It follows from \cite[Theorem 9]{Ko3} that if
$-1\le \beta \le \alpha \le 1$ then $k_\alpha^\B\leqc k_\beta^\B$, so that we have a
decreasing family in the $ \leqc $ order. Since $k_0^\B(x)=x^{-1/2}$, we conclude
\begin{itemize}
\item $k_\alpha^\B\in\cK^+$ if and only if $\alpha \in [0,1]$, 
\item $k_\alpha^\B\in\cK^-$ if and only if $\alpha \in [-1,0]$.
\end{itemize}  
Moreover, $k_\alpha^\B(e^t)/k_\beta^\B(e^t)$ is infinitely divisible whenever
$\beta\le\alpha$ \cite[Theorem 9]{Ko3}.
\end{example}

\begin{example}\label{ex:ALG}\rm(Power difference means)\quad
The family of functions
$$
k_\alpha^\PD(x)\equiv \frac{\alpha}{\alpha-1} \cdot \frac{x^{\alpha-1}-1}{x^\alpha-1},
\qquad\alpha\in[-1,2],
$$
gives the family of power difference means considered in \cite{HK1,HK2}. In fact,
\begin{equation}\label{(4.5)} 
{y \over k_\alpha^\PD(x/y)} = M_\alpha(x,y) \equiv
\frac{\alpha-1}{\alpha} \cdot {x^\alpha-y^\alpha\over x^{\alpha-1}-y^{\alpha-1}},
\end{equation}
whose family is also called the A-L-G interpolation means since it interpolates the
arithmetic, the logarithmic and the geometric means by allowing us to recover all of these
as special cases
\begin{align*}
&k_{-1}^\PD(x) = {1+x\over2x}, \qquad
k_0^\PD(x)=\lim_{\alpha\to0}k_\alpha^\PD(x)={x-1\over x\log x}, \\
&k_{1/2}^\PD(x) = x^{-1/2}, \qquad
\ k_1^\PD(x) = \lim_{\alpha\to 1}k_\alpha^\PD(x)={\log x\over x-1}, \\
&k_2^\PD(x)= {2\over 1+x}.
\end{align*}
It is known \cite[Proposition 4.2]{HK1} that $k_\alpha^\PD\in\cK$ for all
$\alpha\in[-1,2]$. Moreover, we have $k_\alpha^\PD = \wh{k}^\PD_{1-\alpha}$, which implies
that for this family
$$
(\Omega_D^\alpha)^{-1}  = \Omega_{D^{-1}}^{1-\alpha}, \qquad  \alpha\in[-1,2],
$$
with  the obvious abuse of notation.  If $-1\le \beta \le \alpha \le2$ then
$k_\alpha^\PD\leqc k_\beta^\PD$ (see \cite[Theorem 2.1]{HK1}), so that we have another
increasing family.  Thus, since $k_{1/2}^\PD(x)=x^{-1/2}$, we can conclude
\begin{itemize} 
\item $k_\alpha^\PD\in\cK^+$ if and only if $\alpha\in\bigl[\half,2\bigr]$,
\item $k_\alpha^\PD\in\cK^-$ if and only if $\alpha\in\bigl[-1,\half\bigr]$.
\end{itemize} 
Moreover, the monotonicity can be strengthened to the infinite divisibility of
$k_\alpha^\PD(e^t)/\allowbreak k_\beta^\PD(e^t)$ for $\beta\le\alpha$ by \cite[Theorem 5]{Ko3}.
\end{example}

\begin{example} \label{ex:WYD} \rm(WYD family)\quad
One of the best known families in $\cK$ is an outgrowth of the Wigner-Yanase-Dyson skew
information discussed in Section~\ref{sect:back} which leads to the functions
\begin{equation}\label{WYD}
k_p^\WYD(x)\equiv {1\over p(1-p)}\cdot{(1-x^p)(1-x^{1-p})\over(1-x)^2},\qquad p\in[-1,2].
\end{equation}
This family is symmetric around $p=\half$, and the special cases $p=0,1$ should be
understood by continuity, i.e.,
$$
k_1^\WYD(x) =k_0^\WYD(x) =  \lim_{p\to1}k_p^\WYD(x)={\log x\over x-1}.
$$
Other important special cases are
\begin{align*}
&k_{1/2}^\WYD(x)={4\over(1+\sqrt x)^2},\qquad
k_{-1/2}^\WYD(x)=k_{3/2}^\WYD(x)={4\over3}\cdot{1+\sqrt x+x\over\sqrt x(1+\sqrt x)^2}, \\
&k_{-1}^\WYD(x)=k_2^\WYD(x)={1+x\over2x}=k_0^\ext(x).
\end{align*}

We can summarize the CP situation for this family as follows:
\begin{itemize}
\item[(a)] $ k_p^\WYD  \in \cK^+$ if and only if $p \in [0,1]$,
\item[(b)]  $ k_p^\WYD  \in \cK^-$ if and only if
$p \in \big[-1,-\half \big] \cup \big[\tfrac{3}{2}, 2 \big]$.
\end{itemize}

For $p \in \big[\half, 2 \big]$ the functions $k_p^\WYD $ increase monotonically   
with respect to the $ \leqc $ order.  Set $r\equiv p+q-1$, $\alpha\equiv p/r$,
$\beta\equiv q/r$ so that $r>0$ and $0<\alpha<\beta$. We note
\begin{linenomath} \begin{align*}
{k_p^\WYD(x)\over k_q^\WYD(x)}
&={q(1-q)\over p(1-p)}\cdot{(1-x^p)(1-x^{1-p})\over(1-x^q)(1-x^{1-q})} \\
&={\beta(\alpha-1)\over\alpha(\beta-1)}\cdot
{(1-x^{r\alpha})(1-x^{r(\beta-1)})\over(1-x^{r\beta})(1-x^{r(\alpha-1)})}
={M_\alpha(x^r,1)\over M_\beta(x^r,1)},
\end{align*} \end{linenomath}
where $M_\alpha(x,y)$ is the power difference mean defined by \eqref{(4.5)} (for any real
parameter $\alpha$). Therefore, when $\half\le p\le q\le2$, $k_p^\WYD(e^t)/k_q^\WYD(e^t)$
is infinitely divisible by \cite[Theorem 5]{Ko3} and in particular
$k_p^\WYD\leqc k_q^\WYD$.

Thus, the functions $k_p^\WYD$ form a smooth family which are in $\cK^+$ up to $p=1$ when
$p$ increases from $\half$. Therefore, $k_1^\WYD$ lies on the boundary of $\cK^+$, and we
conjecture that it is an extreme point of $\cK^+$.
 
The operator $\Omega_D^k$ for $k=k_1^\WYD$ is given by \eqref{BKM}, which implies that
$k_p^\WYD \in \cK^+$ for $p = 0,1$.  In the proof of \cite[Theorem~2]{BP},
explicit (double) integral representations were obtained for $\Omega_D^k$ when
$k= k_p^\WYD$ for $p \in (0,1)$ in such a way that the CP of $\Omega_D^k$ immediately
follows and hence  $ k_p^\WYD \in \cK^+$ for $p \in (0,1)$. This gives the ``if'' part of
(a). An alternate proof of this, as well as details for the remaining claim above are
given in Section~\ref{sect:WYD}.  This requires results from Section~\ref{sect:sinh} which
are of independent interest.

Unlike other families we consider, the functions
$\wh{k}^\WYD_p(x) = 1/k_p^\WYD\big( {x}^{-1} \big)$  do not belong to the WYD family.
Despite the extensive study of WYD metrics, there seems to have been little attention
given to this dual family
$$
\wh{k}_p^\WYD(x) \equiv p (1-p) \frac{(1-x)^2}{(x - x^{1-p})(x - x^p)},
\qquad p\in[-1,2].
$$
This is symmetric around $p=\half$ and special cases are
$$
\wh{k}_{1/2}^\WYD(x)={(1+\sqrt x)^2\over4x},\qquad
\wh{k}_1^\WYD(x)={x-1\over x\log x},\qquad
\wh{k}_2^\WYD(x)={2\over 1+x}=k_1^\ext(x).
$$
By \eqref{relK^+K^-} and (b) above the functions $\wh{k}_p^\WYD$ are in $\cK^+$ for
$p \in \big[ \frac{3}{2}, 2\big]$ and in $\cK^-$  for $p \in [0,1]$.
\end{example}

\begin{example} \label{ex:Stl} \rm(Stolarsky means)\quad
As in the WYD example above,  the dual of the Stolarsky family
gives a different family.   In this case, we introduce both  
$$
k_\alpha^\St(x) \equiv
\bigg( \frac{ x^\alpha -1 }{\alpha (x - 1)} \bigg)^{1\over 1 - \alpha}, \qquad 
\wh{k}_\alpha^\St(x) \equiv
\biggl({x^{1-\alpha}-x\over\alpha(1-x)}\biggr)^{1\over\alpha-1}, \qquad \alpha \in [-2,2].
$$
It is known \cite[Theorem 3\,(iii)]{Na} (also \cite[Theorem 3]{BP}) that
$k_\alpha^\St \in \cK $ for $\alpha\in[-2,2]$ and this range for $\alpha$ such that
$k_\alpha^\St\in\cK$ is optimal. The functions  $k_\alpha^\St(x) $ correspond to the
familiar family of Stolarsky means as
\be   \label{Stmean}
{y \over k_\alpha^\St(x/y)} = y \, \wh{k}_\alpha^\St(y/x) = S_\alpha(x,y) \equiv
 \bigg( \frac{ x^\alpha-y^\alpha  }{ \alpha(x-y)}\bigg)^{1\over\alpha -1 }.
\ee 
The mean $S_1(x,y)=e^{-1}(x^x/y^y)^{1/(x-y)}$ for $\alpha = 1$ is called the identric mean. 

The functions $k_\alpha^\St$ include more familiar special cases than $\wh{k}_\alpha^\St$
as follows:
\begin{align*}
&k_2^\St(x) = \frac{2}{1+x}, \qquad
k_1^\St(x) =\lim_{\alpha\to1}k_\alpha^\St(x )=  e \,x^{x\over1-x}, \qquad
k_{1/2}^\St(x) = \frac{4}{(1+\sqrt{x})^2}, \\
&k_0^\St(x)=\lim_{\alpha\to0}k_\alpha^\St(x)=\frac{\log x}{x-1}, \qquad
k_{-1}^\St(x)=x^{-1/2}, \qquad k_{-2}^\St(x) = \bigg( \frac{1+x}{2x^2} \bigg)^{1/3}, 
\end{align*}
which provide an interesting comparison with the other families, as shown in Table 1.

 When
$$
S_{\alpha,\beta}(x,y) \equiv
\biggl({\beta(x^\alpha-y^\alpha)\over\alpha(x^\beta-y^\beta)}\biggr)^{1\over\alpha-\beta},
$$
it was proved in \cite[Theorem 12]{Ko3} that
$S_{\alpha,\beta}(e^x,1)/S_{\alpha',\beta'}(e^x,1)$ is infinitely divisible as long as
$\alpha\le\alpha'$ and $\beta\le\beta'$. Since in particular
$S_{\alpha,1}(x,y)=S_\alpha(x,y)$, it follows from \eqref{Stmean} that this implies that
when $\alpha\le\beta$ the dual family  $\wh{k}_\alpha^\St\leqc \wh{k}_\beta^\St$ increases
and $k_\beta^\St\leqc k_\alpha^\St$  decreases.   We can then conclude that
\begin{itemize}
\item $k_\alpha^\St\in\cK^+$ and $\wh{k}_\alpha^\St\in\cK^-$  if and only if
$\alpha\in[-1,2]$,
\item $k_\alpha^\St\in\cK^-$ and $\wh{k}_\alpha^\St\in\cK^+$ if and only if
$\alpha\in[-2,-1]$.
\end{itemize}

As remarked above, the dual functions form a different family with special cases
\begin{align*}
&\wh{k}_{-2}^\St(x)= \bigg( \frac{2}{x(1+x)} \bigg)^{1/3}, \qquad\quad
\wh{k}_{-1}^\St(x)=x^{-1/2}, \qquad\qquad
\wh{k}_0^\St(x)=\lim_{\alpha\to0}\wh{k}_\alpha^\St(x)=\frac{x-1}{x \log x}, \\
&\wh{k}_{1/2}^\St(x)={(1+\sqrt x)^2\over4x}, \qquad 
\wh{k}_1^\St(x)=\lim_{\alpha\to1}\wh{k}_\alpha^\St(x)= e^{-1} \,x^{1\over x-1}, \qquad 
\wh{k}_2^\St(x)={1+x\over2x}=\wh{k}_0^\ext(x).
\end{align*}
The pair $k_{1-\alpha}^\St$ for  $-1\le\alpha\le2$ and $\wh{k}_{\alpha}^\St$ for
$-1\le\alpha\le2$ can be regarded as a single family which increases in the $\leqc$ order
from $k_0^\ext$ to $k_1^\ext$.  The functions  $k_1^\St$ and $ \wh{k}_{-2}^\St$  give
new members of  $\cK^+$  which do not not appear in any of the other families.  
Moreover,   $ \wh{k}_{-2}^\St$  must lie on the boundary of both $\cK^+$ and  $\cK$,  
which implies that   $\cK^+$
touches the boundary of $\cK$ at the interior of a face.  It seems reasonable to conjecture
that  $ \wh{k}_{-2}^\St$ is an extreme point of $\cK^+$.
\end{example}

It is interesting to compare the behavior of these  examples as the
parameters $\alpha$ and $p$ change, as summarized in Table~\ref{tab1} and
Figure~\ref{fig1}.

\begin{table}[h] \centerline{
\begin{tabular} {|c | c | c | c |c |} \hline
& 4.6\ \ $k_\alpha^\B$ & 4.7\ \ $k_\alpha^\PD$ &
4.8\ \ $k_p^\WYD$ & 4.9\ \ $k_\alpha^\St$ \\  \hline
$\frac{2}{1+x} $& 1 & 2 & & 2  \\   \hline
$e x^{x/(1-x)} $ & & &  & 1 \\    \hline
$\frac{4}{(1 + \sqrt{x} )^2} $ &  $\half$ & & $\half$ & $ \half$ \\   \hline
$\frac{\log x}{x-1} $ & & 1 & 0 & 0 \\   \hline
$x^{-1/2} $ & 0 & $\half$ && $-1$ \\  \hline
\end{tabular} }
\caption{Summary of common crossing points}   \label{tab1} 
\end{table}    
The Stolarsky family is the only one which goes through all
of the indicated points.   The WYD family is the only one which 
does not begin and end at the smallest and largest elements,
and moves outside of both $\cK^+$ and $\cK^-$ for some
parameter range.

\begin{figure}[h]
\centerline{\includegraphics[height=8cm]{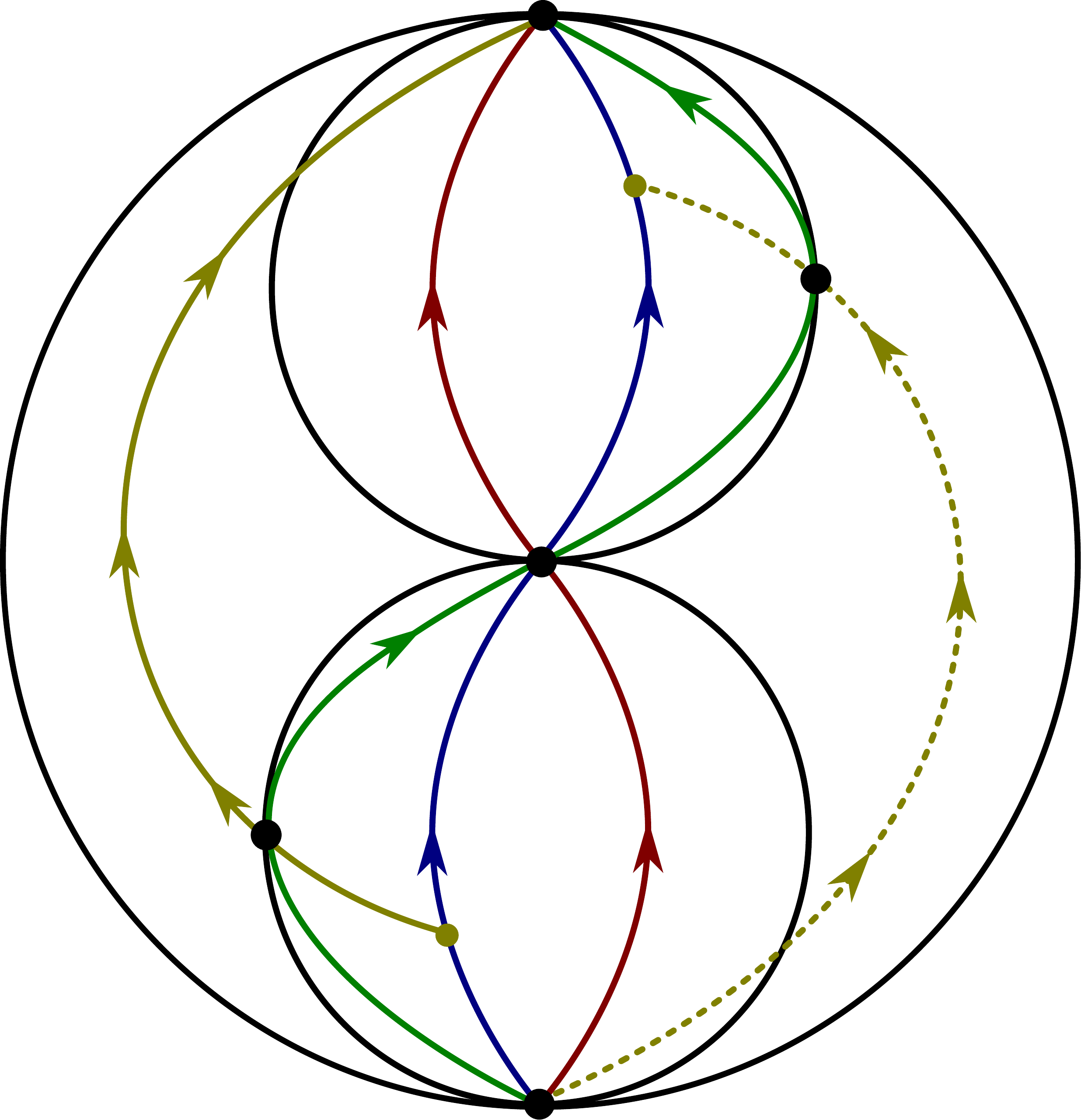}}
\caption{Schematic diagram of families in $\cK$ parameterized so that 
they  increase in the $\leqc$ order.
The lower ball corresponds to $\cK^+$ and the upper ball  to
$\cK^-$.  The three curves inside $\mathcal{K}^+\cup\mathcal{K}^-$
beginning at the smallest member $2/(1+x)$ are described from right to left.
The rightmost curve (red) describes the Heinz family $k_\alpha^{\mathrm{H}}$
($0\le\alpha\le1/2$) and $\widehat k_\alpha^{\mathrm{H}}$ ($1/2\le\alpha\le1$);
the next (blue) curve the binomial family $k_{-\alpha}^{\mathrm{B}}$
($-1\le\alpha\le1$); the next (green) curve the power difference family
$k_{-\alpha}^{\mathrm{PD}}$ ($-2\le\alpha\le1$).
The brown curve on the left the WYD family $k_p^\WYD$ in the
range  $p \in [\half,2]$ and the dotted brown curve on the right the dual WYD family.
 The crossings at  $4/(1 + \sqrt{x} )^2 $ and  $\log x/(x-1) $ can easily be seen.
The  complex Stolarsky family, which is the only one which 
starts at  the smallest $2/(1+x)$ and goes through both of these
crossings while remaining in $\cK^+$ before reaching $x^{-1/2} $, is not shown.}   \label{fig1}
  \end{figure} 

\subsection{Geometric bridges}  \label{sect:geombrdg}

In Examples \ref{Example 4.2} and \ref{Example 4.3} of Section \ref{sect:exampav} we
considered arithmetic weighted averages of $2/(1+x)$ (the smallest of $\cK^+$) or
$x^{-1/2}$ (the largest of $\cK^+$) with extreme points of $\cK$, and noticed that such
averages can be in $\cK^+$ in rather limited cases. In this section we consider
a different type of averages, often called a {\it geometric bridge}, which is defined as
weighted geometric means $[k_1(x)]^{1-\lambda}[k_2(x)]^\lambda$, $0\le\lambda\le1$, of
$k_1,k_2\in\cK$. We first show that $\cK$ and $\cK^\pm$ are all closed under geometric
bridge interpolations as far as some infinite divisibility condition is satisfied for
$\cK^\pm$. The equivalence of (ii) and (iii) in the next theorem implies that a similar
result holds for $\cF$ and $\cF^\pm$.
   \pagebreak
\begin{prop}\label{Proposition 5.1}
If $k_1,k_2\in\cK$, then for every $\lambda\in[0,1]$ the function
$[k_1(x)]^{1-\lambda} [k_2(x)] ^\lambda$ is also in $\cK$. Moreover, if $k_1,k_2\in\cK^+$
{\rm(}resp., $\cK^-${\rm)} and one of the following conditions is satisfied, then for every
$\lambda\in[0,1]$ the function $[k_1(x)]^{1-\lambda}[k_2(x)]^\lambda$ is also in $\cK^+$
{\rm(}resp., $\cK^-${\rm)}\,$:$
\begin{itemize}
\item[\rm(i)] both $e^{t/2}k_1(e^t)$ and $e^{t/2}k_2(e^t)$ are infinitely divisible,
\item[\rm(ii)] $k_2(e^t)/k_1(e^t)$ is infinitely divisible,
\item[\rm(iii)] $k_1(e^t)/k_2(e^t)$ is infinitely divisible.
\end{itemize}
\end{prop}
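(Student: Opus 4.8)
The plan is to treat the three assertions ($\cK$, then $\cK^+$, then $\cK^-$) in turn, in each case reducing to a statement about positive definite functions via the characterizations already established. Write $g(x)\equiv[k_1(x)]^{1-\lambda}[k_2(x)]^\lambda$. Normalization $g(1)=1$ and positivity $g>0$ are immediate, and the symmetry $xg(x)=g(x^{-1})$ follows from the one-line computation $g(x^{-1})=[k_1(x^{-1})]^{1-\lambda}[k_2(x^{-1})]^\lambda=[xk_1(x)]^{1-\lambda}[xk_2(x)]^\lambda=x\,g(x)$, using $xk_i(x)=k_i(x^{-1})$. The substantive point for $g\in\cK$ is operator convexity, which by Theorem~\ref{thm:kequiv}, (a)$\iff$(d), is equivalent to operator monotonicity of $1/g=[1/k_1]^{1-\lambda}[1/k_2]^\lambda$. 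Each $1/k_i$ is operator monotone (Theorem~\ref{thm:kequiv}), hence a positive Pick function. I would invoke the classical fact that a weighted geometric mean of positive operator monotone functions is again operator monotone: by the Pick mapping property $1/k_i$ carries the upper half-plane $\bC^+$ into $\{\arg\in(0,\pi)\}$, so $\log(1/k_i)$ is single-valued and analytic on $\bC^+$ with imaginary part in $[0,\pi]$; then $[1/k_1]^{1-\lambda}[1/k_2]^\lambda=\exp\bigl((1-\lambda)\log(1/k_1)+\lambda\log(1/k_2)\bigr)$ again has imaginary-part-of-log in $[0,\pi]$, hence is Pick and positive on $(0,\infty)$, i.e.\ operator monotone. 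This yields $g\in\cK$.

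For the $\cK^+$ assertion, set $\psi_i(t)\equiv e^{t/2}k_i(e^t)$, which is positive definite by Theorem~\ref{thm:CP2}(c). Since $e^{t/2}=(e^{t/2})^{1-\lambda}(e^{t/2})^\lambda$, the function attached to $g$ is $\psi_g=\psi_1^{1-\lambda}\psi_2^\lambda$, and it suffices to show this is positive definite. Under (i), infinite divisibility makes $\psi_1^{1-\lambda}$ and $\psi_2^\lambda$ positive definite, and their product is positive definite by the product rule (property (c) following Definition~\ref{def:PD/ID}). Under (ii) or (iii) I would instead use the factorizations $\psi_1^{1-\lambda}\psi_2^\lambda=\psi_1\,(\psi_2/\psi_1)^\lambda=\psi_2\,(\psi_1/\psi_2)^{1-\lambda}$, noting $\psi_2/\psi_1=k_2(e^t)/k_1(e^t)$ and $\psi_1/\psi_2=k_1(e^t)/k_2(e^t)$; the hypothesized infinite divisibility makes the relevant fractional power positive definite, and multiplication by the positive definite base factor $\psi_1$ (resp.\ $\psi_2$) preserves positive definiteness. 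Hence $\psi_g$ is positive definite and $g\in\cK^+$.

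For the $\cK^-$ assertion the clean move is to replace $\psi_i$ by $\chi_i(t)\equiv e^{-t/2}/k_i(e^t)=1/\psi_i(t)$, which is positive definite precisely because $k_i\in\cK^-$ means $x^{-1/2}\leqc k_i$, i.e.\ $e^{-t/2}/k_i(e^t)$ is positive definite (Theorem~\ref{thm:CP2} and the remark following it). Then $1/\psi_g=\chi_1^{1-\lambda}\chi_2^\lambda$, and the identical factorization argument—now with $\chi_1/\chi_2=k_2(e^t)/k_1(e^t)$ and $\chi_2/\chi_1=k_1(e^t)/k_2(e^t)$—disposes of (ii) and (iii). The genuinely delicate point is condition (i) in the $\cK^-$ case, where it is degenerate: $\psi_i$ infinitely divisible forces $\psi_i$ positive definite, hence $k_i\in\cK^+$ by Theorem~\ref{thm:CP2}, so $k_i\in\cK^+\cap\cK^-=\{x^{-1/2}\}$ by Theorem~\ref{thm:unique}, whence $g=x^{-1/2}\in\cK^-$ trivially.

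I expect the only real obstacle to be the operator-monotone geometric-mean fact used in the first paragraph, as it is the sole analytic input beyond the positive-definite bookkeeping; once the Pick mapping property is in hand this is standard, but it must be stated carefully. The remaining work is routine given Theorems~\ref{thm:CP2} and~\ref{thm:unique} and the product rule for positive definite functions, the only place to take care being to keep straight which of the two ratios is assumed infinitely divisible in (ii) versus (iii) and to pair it with the correct base factor in the factorization.
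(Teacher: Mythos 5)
Your proof is correct and follows essentially the same route as the paper's: the Pick mapping property gives operator monotonicity of the weighted geometric mean (hence $g\in\cK$), and the factorization $\psi_1^{1-\lambda}\psi_2^{\lambda}=\psi_1(\psi_2/\psi_1)^{\lambda}=\psi_2(\psi_1/\psi_2)^{1-\lambda}$ combined with the product rule for positive definite functions handles (i)--(iii). The only difference is cosmetic: the paper dispatches the $\cK^-$ case in one line via the duality $k\mapsto\wh{k}$ and \eqref{relK^+K^-}, whereas you work with $e^{-t/2}/k_i(e^t)$ directly (the same functions in disguise) and, usefully, make explicit that condition (i) is degenerate in the $\cK^-$ case by Theorem~\ref{thm:unique}.
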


\begin{proof}    
To prove the first assertion, let $k_1,k_2\in\cK$; then by Theorem \ref{thm:kequiv} 
they have the Pick mapping property, from which it follows that
$[k_1(x)]^{1-\lambda}[k_2(x)]^\lambda$ also has this property and hence is operator
monotone decreasing. Since the symmetry condition in Theorem \ref{thm:kequiv} is obvious,
we conclude that $k_1^{1-\lambda}k_2^\lambda \in \cK$.

To prove the second assertion, let $k_1,k_2\in\cK^+$ and $0<\lambda<1$. When (i) is
satisfied, $[e^{t/2}k_1(e^t)]^{1-\lambda}$ and $[e^{t/2}k_2(e^t)]^\lambda$ are
positive definite and hence so is the product
$e^{t/2}[k_1(e^t)]^{1-\lambda}[k_2(e^t)]^\lambda$. We note 
$$
e^{t/2}[k_1(e^t)]^{1-\lambda}[k_2(e^t)]^\lambda
=e^{t/2}k_1(e^t)\biggl({k_2(e^t)\over k_1(e^t)}\biggr)^\lambda
=e^{t/2}k_2(e^t)\biggl({k_1(e^t)\over k_2(e^t)}\biggr)^{1-\lambda}
$$
so that we get the desired positive definiteness from either (ii) or (iii).

Finally, the assertion for $\cK^-$ is easily verified by taking
$\wh{k}_j(x)\equiv 1/k_j\ofxinv$ and using \eqref{relK^+K^-}.
\end{proof}

Recall that all the one-parameter families in $\cK^+$ given in Examples
\ref{ex:heinz}--\ref{ex:Stl}
satisfy the property of infinite divisibility (an order stronger
than $\leqc$). Therefore, the above proposition implies that geometric bridges
joining $k_1,k_2$ in each of these family sits inside $\cK^+$. 

\begin{example}\rm
Consider the bridge
$$
k_\alpha(x) \equiv  [ k_\alpha^\St(x) ]^{1 - \alpha} [x^{-1/2}]^\alpha
= \frac{x^{\alpha/2} - x^{-\alpha/2}}{ \alpha ( x-1)}.  
$$
By Proposition~\ref{Proposition 5.1} this is in $\cK^+$ for $\alpha \in [0,1]$.  In fact, 
$k_\alpha \in \cK$ in the larger range $\alpha \in [0,2]$.  One way to see this is to
observe that  $g_\beta(x) = x^{-\beta}(1-x) = x^{-\beta} - x^{1-\beta}$
is operator convex for $\beta \in [0,1]$.  Then  it follows from \cite[Theorem II.13]{LR} that
$$
\frac{g_\beta(x) +  x g_\beta(x^{-1} ) }{(x-1)^2}
= \frac{ x^\beta - x^{-\beta}}{x - 1} =  2 \beta  \, k_{2 \beta}(x)
$$ 
is a multiple of a function in $\cK$ for $\beta \in [0,1]$. 

Since
$$
e^tk_\alpha(e^{2t})={1\over\alpha}\cdot{\sinh(\alpha t)\over\sinh t},
$$
it is easy to see by Lemma \ref{lemm:hyperb}\,(1) that $k_\alpha\in\cK^+$ if and only if
$\alpha\in[0,1]$ while $k_\alpha\in\cK^-$ if and only if $\alpha\in[1,2]$.   
It is also
known \cite[Theorem 2]{BK} that $k_\alpha(e^t)/k_\beta(e^t)$ is infinitely divisible
whenever $\alpha\le\beta$.  Given the special cases
\begin{align*}
&k_0(x)=\lim_{\alpha\to0}k_\alpha(x)={\log x\over x-1},\qquad
k_{1/2}(x)={2\over x^{1/4}+x^{3/4}}=k_{1/4}^\H(x), \\
&k_1(x)=x^{-1/2},\qquad\qquad\qquad\quad\ \ k_2(x)={1+x\over2x}=k_0^\ext(x),
\end{align*}
it follows that  $k_\alpha(x)$ is a family which increases on $[0,2] $ in the $\leqc$ order
from  $\log x/(x-1)$ to $(1+x)/2x $.

The connection between $g_\beta$ and $ k_{2 \beta} $ is interesting because, as mentioned in Section~\ref{sect:back},
$g(x) = (x-1)^2 k(x) $ is always an operator convex function  with the properties needed
to define a symmetric quasi-entropy.   Although one can begin with a function $g(x)$ which
does {\em not}  satisfy $g(x) = x g(x^{-1}) $ and generate a function   $k \in \cK$, it is not at all
obvious how to reverse the process without obtaining a  symmetric $g$.   In this case, we have  
found an asymmetric $g$, in particular $g_{1/2}(x) = x^{-1/2}  - x^{1/2} $,  which
generates the key function $k(x) = x^{-1/2} \in \cK$.   The associated quasi-entropies do not seem
to have been studied previously, but appeared recently in \cite{Rusk2}.

\end{example}

The remaining examples are
concerned with geometric bridges joining $k_1^\ext$ and other extreme points of $\cK$
which require a more difficult analysis.

\begin{example}\label{ex:geombrdg}\rm
For $\mu,\nu,\lambda\in[0,1]$ we define
$$
g_{\mu,\nu,\lambda}(x)\equiv k_\mu^\ext(x)^{1-\lambda}k_\nu^\ext(x)^\lambda
=k_\mu^\ext(x)\biggl({k_\nu^\ext(x)\over k_\mu^\ext(x)}\biggr)^\lambda
$$
with $k_{\nu}^{\ext}$ given by (\ref{extpt}). This is in $\cK$ by
Proposition \ref{Proposition 5.1}. A special case
$$
g_{1,0,\lambda}(x)=k_1^\ext(x)\biggl({k_0^\ext(x)\over k_1^\ext(x)}\biggr)^\lambda
=x^{-\lambda}\biggl({2\over1+x}\biggr)^{1-2\lambda},
\qquad0\le\lambda\le1,
$$
was treated in \cite[Example 5]{BP}. We have
$$
e^{t/2}g_{1,0,\lambda}(e^t)=\biggl({1\over\cosh(t/2)}\biggr)^{1-2\lambda},
$$
which is positive definite exactly when $0\le\lambda\le\half$ since $1/\cosh t$ is
infinitely divisible (see \cite[Theorem 1]{BK} for instance). Therefore, $g_{1,0,\lambda}$
is in $\cK^+$ if and only if $0\le\lambda\le\half$.
\end{example}

\begin{example}  \label{ex:geom.gen} \rm
For the more general case 
\begin{equation}\label{(4.7)} 
g_{1,\nu,\lambda}(x)=k_1^\ext(x)\biggl({k_\nu^\ext(x)\over k_1^\ext(x)}\biggr)^\lambda
=\biggl({2\over1+x}\biggr)^{1-2\lambda}
\biggl({(1+\nu)^2\over(x+\nu)(1+\nu x)}\biggr)^\lambda,
\end{equation}
which increase pointwise with $\lambda\in[0,1]$ from $k_1^\ext$ to $k_\nu^\ext$.
Its behavior (in the present context) when $\nu \in (0,1)$ seems much more mysterious.
Our results here are:
\begin{itemize}
\item[(i)] the pointwise order of $g_{1,\nu,\lambda}$ in $\lambda$ can be also strengthened
to the $\leqc$ order, and consequently the set
$ \{\lambda\in[0,1]:g_{1,\nu,\lambda}\in\cK^+\} $
is a subinterval $[0,\lambda_c(\nu)]$,
\item[(ii)] for each $\nu \in (0,1)$ the critical value $\lambda_c(\nu)$ satisfies
\be \label{critical}
{1\over4}\le\lambda_c(\nu)\le{1\over3}.
\ee
\end{itemize} 
The proof requires some lengthy computations of Fourier transforms, which will be
presented in Section 6.3.      Unfortunately, we do not have any   information
about the form of $\lambda_c(\nu) $. 
\end{example}

\begin{example} \label{ex:hans} \rm
It is worth noting that a family of modified bridges
\be \label{Hansen-bridge}
g_{1,1-\lambda,\lambda}(x)=\biggl({2\over1+x}\biggr)^{1-2\lambda}
\biggl({(2-\lambda)^2\over(1+x-\lambda)(1+(1-\lambda)x)}\biggr)^\lambda,
\qquad0\le\lambda\le1,
\ee
joining $k_1^\ext$ and $k_0^\ext$ was constructed in \cite{Ha} for the explicit purpose of
finding a one-parameter family which increases from  $k_1^\ext$ and $k_0^\ext$ in the
pointwise order and all of whose elements except for $\lambda=1$ are regular (here
$k\in\cK$ is regular if $\lim_{x\searrow0}k(x)<+\infty$).  Without the regularity requirement, the families in Examples 4.5--4.7  and $4.9$ have this property in the
stronger $\leqc$ order.

From the same computation as in Lemma \ref{Lemma 6.4} (Section \ref{sect:geombrdgpf}) we
observe that the set $\Lambda\equiv\{\lambda\in[0,1]:g_{1,1-\lambda,\lambda}\in\cK^+\}$
includes $\bigl[0,{1\over4}\bigr]$. Koenraad Audenaert did
some numerical work suggesting that $g_{1,1-\lambda,\lambda}$ is not in $\cK^+$ for
$\lambda \geq 0.3$ giving a CP crossing at a point slightly smaller than $0.3$
which would be consistent with \eqref{critical}.
However, we do not know   strong monotonicity in
the $\leqc$ order for the family \eqref{Hansen-bridge}.  To conclude that
$\Lambda$ is of the form $[0,a]$ we would need a stronger result, e.g.,
that $\lambda_c(\nu) $ is monotone  in $\nu$. 
 \end{example}

\section{Positive definite functions}   \label{sect:proofs}

In this section, we present results on positive definiteness and infinite divisibility of
certain functions involving hyperbolic functions, which are needed in our proofs. The
study here is considered as a continuation of \cite{Ko1,Ko2,Ko3}, which are of independent
interest.

\subsection{Positive definiteness of $\sinh$ ratios}  \label{sect:sinh}

We investigate positive definiteness of the function
$$
f(t)\equiv \frac{\sinh (at)\sinh (bt)}{\sinh^2 t} 
$$
with $a, b >0$. If $a,b \leq 1$, then $f(t)$ is a positive definite function as the
product of two such functions (see Lemma \ref{lemm:hyperb}\,(1)).  It is actually
infinitely divisible as is explained in \cite{BK,Ko1} for instance. We will show that the
converse also holds true.

\begin{thm}\label{Theorem 5.1}
The function $f(t)$ is positive definite if and only if $a,b \leq 1$.
\end{thm}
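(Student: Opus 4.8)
The plan is to split the equivalence into its two implications, with the ``if'' direction essentially free and the converse carrying all the weight. First I would dispose of ``if'': for $a\in(0,1]$ the function $\sinh(at)/\sinh t$ is positive definite by Lemma~\ref{lemm:hyperb}(1) (the endpoint $a=1$ giving the constant $1$), and likewise for $b$; since $f(t)=\bigl(\sinh(at)/\sinh t\bigr)\bigl(\sinh(bt)/\sinh t\bigr)$ is a product of positive definite functions it is positive definite. For the converse, suppose $f$ is positive definite. Then $f$ is bounded, $|f(t)|\le f(0)=ab$, whereas $f(t)\sim e^{(a+b-2)t}$ as $t\to+\infty$; hence $a+b\le 2$. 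It therefore remains to exclude $\max\{a,b\}>1$. Using the symmetry in $a,b$, assume for contradiction that $a=\max\{a,b\}>1$; together with $a+b\le2$ and $b>0$ this forces $1<a<2$ and $0<b<1$.

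The main step is to compute the Fourier transform $\hat f(s)=\int_{-\infty}^\infty e^{-ist}f(t)\,dt$ and to read off the sign of its tail. When $a+b<2$ the function $f$ decays like $e^{(a+b-2)|t|}$, so $\hat f$ is a genuine continuous function, and by Bochner's theorem positive definiteness forces $\hat f(s)\ge0$ for all $s$. Now $f$ extends meromorphically to $\mathbb{C}$ with poles only at $z=in\pi$, $n\in\bM$; the pole at the origin is removable because the numerator vanishes there to second order. Fixing $s>0$, I would integrate $e^{-isz}f(z)$ around the rectangle with corners $\pm R$ and $\pm R-iM$, where $M=(N+\tfrac12)\pi$. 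As $R\to\infty$ the vertical sides vanish since $f(x-iy)\to0$ exponentially as $|x|\to\infty$, and on the bottom edge $|e^{-isz}|=e^{-sM}\to0$ as $N\to\infty$ while the integral of $|f|$ over that edge stays bounded, giving
\[
\hat f(s)=-2\pi i\sum_{n=1}^\infty \operatorname{Res}_{z=-in\pi}\bigl(e^{-isz}f(z)\bigr),\qquad s>0.
\]

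Next I would evaluate the residues. Near $z_0=-in\pi$ one has $\sinh^2(z_0+w)=w^2\bigl(1+\tfrac13 w^2+\cdots\bigr)$, so for $h(z)=e^{-isz}\sinh(az)\sinh(bz)$ the residue at the double pole is simply $h'(z_0)$. Using $\sinh(az_0)=-i\sin(an\pi)$ and $\cosh(az_0)=\cos(an\pi)$ (and similarly for $b$), a short computation yields
\[
\operatorname{Res}_{z=-in\pi}=-i\,e^{-n\pi s}\bigl[a\cos(an\pi)\sin(bn\pi)+b\sin(an\pi)\cos(bn\pi)-s\,\sin(an\pi)\sin(bn\pi)\bigr].
\]
Each term of $\hat f(s)$ thus carries the factor $e^{-n\pi s}$, so as $s\to+\infty$ the $n=1$ term dominates, and within it the $s$-linear part dominates in turn, giving
\[
\hat f(s)\sim 2\pi\, s\, e^{-\pi s}\,\sin(\pi a)\sin(\pi b)\qquad(s\to+\infty).
\]
The positivity of the constant $2\pi$ is pinned down by consistency with the ``if'' direction: when $a,b\in(0,1)$ the function is positive definite, so $\hat f\ge0$, and indeed $\sin(\pi a)\sin(\pi b)>0$ there. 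In our situation $1<a<2$ and $0<b<1$ give $\sin(\pi a)<0<\sin(\pi b)$, whence $\hat f(s)<0$ for all large $s$, contradicting $\hat f\ge0$. Therefore $a\le1$, and by symmetry $b\le1$.

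The hard part will be the rigorous justification of the residue expansion — the uniform bound on the bottom edge and the convergence of the series — together with the borderline case $a+b=2$, i.e.\ $b=2-a$ with $1<a<2$, in which $f(t)\to1$ is no longer integrable. There $\hat f$ is a measure, not a function: by Bochner's theorem $f=\hat\mu$ for a finite positive measure $\mu$, and the same residue computation identifies the absolutely continuous density of $\mu$ on $(0,\infty)$, which again behaves like $2\pi s\,e^{-\pi s}\sin(\pi a)\sin(\pi b)<0$ for large $s$. Since a positive measure cannot have negative density, this closes the remaining case and completes the proof.
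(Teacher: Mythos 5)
Your main line of attack is sound and, in its central analytic step, genuinely different from the paper's. Both proofs dispose of the ``if'' direction via the product of $\sinh(at)/\sinh t$ and $\sinh(bt)/\sinh t$, both kill $a+b>2$ by unboundedness, and both detect the failure of positive definiteness for $1<a<2$, $0<b<1$, $a+b<2$ by showing that the Fourier transform has a negative tail of the form $2\pi s\,e^{-\pi s}\sin(a\pi)\sin(b\pi)$. The difference lies in how that transform is computed. The paper first reduces to rational $a=m/n$, $b=k/n$ with $n,m,k$ even (multiplying by the positive definite correction $\sinh(a't)\sinh(b't)/(\sinh(at)\sinh(bt))$), precisely so that the integrand is periodic under $z\mapsto z+in\pi$ and a rectangle of height $n\pi$ produces $(1-e^{-n\pi s})\hat f(s)$ plus finitely many residues. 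You instead push a rectangle down to depth $(N+\tfrac12)\pi$ and sum the residues at all $z=-in\pi$, $n\ge1$. This works: on the bottom edge $|\sinh(t-i(N+\tfrac12)\pi)|=\cosh t$, so that edge contributes $O(e^{-s(N+1/2)\pi})$ uniformly in $N$; the residue series is dominated by a geometric series in $e^{-\pi s}$; and your formula $\mathrm{Res}=h'(-in\pi)$ remains valid even when the pole degenerates (i.e.\ when $an$ or $bn$ is an integer), and agrees with the paper's $n=1$ computation. What your route buys is that the rationality reduction disappears entirely; what it costs is the bookkeeping of an infinite (but harmless) residue sum.

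The one genuine gap is the borderline case $a+b=2$, $a\ne b$. Your proposed repair --- viewing $\hat f$ as a measure and rerunning the residue computation for its density --- does not go through as stated: the contour estimates require $f(z)\to0$ as $\Re z\to\pm\infty$ in the strip, whereas for $a+b=2$ one has $f(t-iy)\to1$, so the vertical edges do not vanish and the horizontal integrals are not absolutely convergent. But this case is already killed by the boundedness inequality you invoked at the outset: positive definiteness gives $|f(t)|\le f(0)=ab$, while $\lim_{t\to\infty}f(t)=1$ and $ab=a(2-a)=1-(a-1)^2<1$ whenever $a\ne1$. This one-line observation is exactly how the paper handles that case; substitute it for your measure-theoretic sketch and the proof is complete.
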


When $a+b >2$, we have $\lim_{t \to \pm \infty} f(t)=\infty$ so that $f(t)$ cannot be
positive definite. When $a+b=2$ and $a \neq b$, the obvious estimate
$$
f(0)=ab < \left(\frac{a+b}{2}\right)^2=1=\lim_{t \to \pm\infty}f(t)
$$
also shows failure of positive definiteness.

We will assume $a+b < 2$, and we must show that $f(t)$ is not positive definite as long as
$a>1$ (and hence $0 < b <1$).  For this purpose it suffices to deal with $a, b$ rational. 
Indeed, if $f(t)$ were positive definite for such $a,b$ (and the result is known for such
rational parameters), then with $a',b'$ rational satisfying $1 < a' \leq a$ and
$0 < b' \leq b$  the product
$$
f(t)\,\frac{\sinh(a't)\sinh(b't)}{\sinh(at)\sinh(bt)}
=\frac{\sinh(a't)\sinh(b't)}{\sinh^2t}
$$
would be positive definite, a contradiction.

Hence, we will assume that $a,b$ are rational in the rest. Obviously we can further assume
\begin{equation}\label{(5.1)} 
\mbox{$a=\frac{m}{n} > 1$, \ $b=\frac{k}{n} >0$, \ $a+b < 2$ with $n,m,k \in \bN$ even}.
\end{equation}
The most delicate part in our proof for Theorem \ref{Theorem 5.1} is covered in the next
lemma, and the rest of the subsection will be devoted to its proof.
\begin{lemma}\label{Lemma 5.2}
The function $f(t)$ cannot be positive definite for $a,b$ rational described by
$(\ref{(5.1)})$.
\end{lemma}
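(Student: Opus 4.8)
The plan is to invoke Bochner's theorem (property (b) following Definition~\ref{def:PD/ID}) in its dual form. The constraints in $(\ref{(5.1)})$ force $a\in(1,2)$ and $b\in(0,1)$, and $a+b<2$, so $f$ decays like $e^{-(2-a-b)|t|}$ as $t\to\pm\infty$ and is an even $L^1$ function; hence its Fourier transform $\hat f(\xi)=\int_{-\infty}^\infty e^{-i\xi t}f(t)\,dt$ is a genuine continuous function, and $f$ is positive definite if and only if $\hat f\ge0$ everywhere. Thus it suffices to exhibit a single $\xi$ with $\hat f(\xi)<0$. Rather than computing $\hat f$ in closed form, I would extract its leading behaviour as $\xi\to+\infty$, which is governed by the poles of $f$ nearest the real axis.

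The function $f$ is meromorphic with double poles at $t=i\pi\ell$ for every nonzero integer $\ell$ (at $\ell=0$ the numerator cancels the pole, leaving $f(0)=ab$). I would shift the contour from $\bR$ down to $\bR-ic$ with $\pi<c<2\pi$, crossing only the pole at $t=-i\pi$. The decay estimate above makes the displaced integral absolutely convergent and of size $O(e^{-c\xi})$, so
\[
\hat f(\xi)=-2\pi i\,\operatorname{Res}_{t=-i\pi}\!\big[e^{-i\xi t}f(t)\big]+O(e^{-c\xi}).
\]
To evaluate the residue I would record the local Laurent data. Writing $v=t+i\pi$ and using $\sinh t=-\sinh v$, $\sinh(at)=i\sin(a\pi)\cosh(av)+\cos(a\pi)\sinh(av)$ (and similarly for $b$), a short expansion gives
\[
f(t)=\frac{A}{v^2}+\frac{B}{v}+O(1),\qquad A=-\sin(a\pi)\sin(b\pi).
\]
Since $e^{-i\xi t}=e^{-\pi\xi}(1-i\xi v+\cdots)$ near $t=-i\pi$, the residue equals $e^{-\pi\xi}(B-i\xi A)$, whence
\[
\hat f(\xi)=2\pi e^{-\pi\xi}\big(\xi\,\sin(a\pi)\sin(b\pi)-C\big)+O(e^{-c\xi}),
\]
with $C$ a real constant arising from $B$. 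As $c>\pi$, the error is $o(\xi e^{-\pi\xi})$, so the bracketed term controls the sign for large $\xi$.

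The conclusion is then immediate: $a\in(1,2)$ gives $\sin(a\pi)<0$ while $b\in(0,1)$ gives $\sin(b\pi)>0$, so $\sin(a\pi)\sin(b\pi)<0$ and hence $\hat f(\xi)<0$ for all sufficiently large $\xi$. By Bochner's theorem $f$ is not positive definite, proving the lemma. This route in fact never uses rationality of $a,b$; the hypotheses in $(\ref{(5.1)})$ are exactly what the product/bootstrap argument preceding the lemma needs, and evenness of $m,n,k$ is what would make an \emph{exact} residue summation over all poles $t=-i\pi\ell$ tractable (an alternative proof), but the sign is already decided by the single pole at $t=-i\pi$.

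The step I expect to be the main obstacle is the rigorous control of the contour shift together with the local computation: one must verify that $f(s-ic)$ is integrable in $s$ uniformly, so that the displaced integral really is $O(e^{-c\xi})$ and genuinely subdominant to the $\xi e^{-\pi\xi}$ residue term, and one must compute the leading Laurent coefficient $A=-\sin(a\pi)\sin(b\pi)$ correctly. It is a welcome simplification that the simple-pole coefficient $B$, and hence $C$, is irrelevant to the sign; everything else is bookkeeping with the expansions of $\sinh(at)$, $\sinh(bt)$ and $\sinh^2 t$ about $t=-i\pi$.
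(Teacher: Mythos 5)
Your proof is correct, and it takes a genuinely different route from the paper's. The paper closes a rectangular contour of height $n\pi$ and exploits the periodicity $\sinh(t+in\pi)=\sinh t$, $\sinh(a(t+in\pi))=\sinh(at)$ (this is precisely where the rationality and evenness in \eqref{(5.1)} enter), so that the two horizontal edges combine into $(1-e^{-n\pi s})\int f(t)e^{ist}\,dt$; it then computes the residues at $z_1=i\pi$ and $z_{n-1}=i(n-1)\pi$ exactly and argues that the residues at the intermediate poles $i\ell\pi$, $2\le\ell\le n-2$, only contribute lower-order terms, arriving at the same dominant behaviour $\sin(a\pi)\sin(b\pi)\cdot s\cosh((n/2-1)\pi s)/\sinh(n\pi s/2)$. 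You instead shift the contour by a non-period amount $c\in(\pi,2\pi)$, pick up the single double pole at $-i\pi$, and bound the displaced integral crudely by $O(e^{-c\xi})$, which is automatically subdominant to $\xi e^{-\pi\xi}$; the leading Laurent coefficient $A=-\sin(a\pi)\sin(b\pi)$ then decides the sign. What your version buys is that it never uses rationality of $a,b$ at all, so the reduction paragraph preceding the lemma (the bootstrap via $\sinh(a't)\sinh(b't)/\sinh(at)\sinh(bt)$) becomes unnecessary and Theorem \ref{Theorem 5.1} follows directly for all real $a>1$, $b>0$ with $a+b<2$; what the paper's version buys is an exact finite residue formula for the Fourier transform, in the same spirit as the explicit computations of Section \ref{sect:cosh}. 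One cosmetic remark: with $v=t+i\pi$ the correct local expansion is $\sinh(at)=-i\sin(a\pi)\cosh(av)+\cos(a\pi)\sinh(av)$ (your constant term has the opposite sign), but since the same slip occurs in both factors it cancels in the product and your stated value of $A$, hence the conclusion, is unaffected.
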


For a fixed $s \in \bR$ we set
$$
F(z) \equiv f(z)\,e^{isz}
\ \left( 
=
\frac{\sinh (az)\sinh (bz)}{\sinh^2 z} \, e^{isz}
=\frac{\sinh \left(\mbox{{\small $\frac{m}{n}$}}\,z\right)
\sinh \left(\mbox{{\small $\frac{k}{n}$}}\,z\right)}{\sinh^2 z} \, e^{isz}
\right)
$$
with $a,b$ given by (\ref{(5.1)}), and compute its integral along the following
rectangle $\Gamma$:
$$
\begin{array}{lll}
\Gamma_1&z=t,& t: -R \to R,\\
\Gamma_2&z=R+is,& s: 0 \to n\pi,\\
\Gamma_3&z=t+in\pi,& t: R \to -R,\\
\Gamma_4&z=-R+is,& s: n\pi \to 0.
\end{array}
$$

We observe
\begin{linenomath} \begin{align*}
&
\sinh(t+in\pi)=\sinh t,\\
&
\sinh\left(\mbox{{\small $\frac{m}{n}$}}\,(t+in\pi)\right)
=\sinh\left(\mbox{{\small $\frac{m}{n}$}}\,t+im\pi\right)
=\sinh\left(\mbox{{\small $\frac{m}{n}$}}\,t\right),\\
&
\sinh\left(\mbox{{\small $\frac{k}{n}$}}\,(t+in\pi)\right)
=\sinh\left(\mbox{{\small $\frac{k}{n}$}}\,t+ik\pi\right)
=\sinh\left(\mbox{{\small $\frac{k}{n}$}}\,t\right)
 \end{align*} \end{linenomath}
(since $n,m,k$ are even) so that we have $F(t+in\pi)=f(t)\,e^{is(t+in\pi)}$ and
\begin{linenomath} \begin{align}\label{(5.2)} 
\int_{\Gamma_1 \cup \Gamma_3}F(z)\,dz
&=
\int_{-R}^{\,R}f(t)\,e^{ist}dt+\int_{R}^{\,-R}f(t)\,e^{ist}e^{-n\pi s}dt
\notag\\
&=
\left(1-e^{-n\pi s}\right)\int_{-R}^{R}f(t)\,e^{ist}dt 
\notag\\
&=
2e^{-n\pi s/2}\sinh(n\pi s/2)\int_{-R}^{R}f(t)\,e^{ist}dt. 
 \end{align} \end{linenomath}
Since $a+b < 2$, we have $f(z) \to 0$ uniformly on the strip
$\{z \in \bC;\, 0 \leq \Im z \leq n\pi\}$  as $\Re z \to \pm\infty$ and hence
\begin{equation}\label{(5.3)} 
\lim_{R \to \infty} \int_{\Gamma_2 \cup \Gamma_4} F(z)\,dz=0.
\end{equation}
Therefore, the Fourier transform of $f(t)$ can be computed  from $\int_{\Gamma}F(z)\,dz$.
If the Fourier transform fails to be positive, then Lemma \ref{Lemma 5.2} follows from
Bochner's theorem.

Note that $z=0,in\pi$ are zeros of $\sinh^2 z$ of order $2$. However, these two points are
also zeros for 
$\sinh\left(\mbox{{\small $\frac{m}{n}$}}\,z\right)$, 
$\sinh\left(\mbox{{\small $\frac{k}{n}$}}\,z\right)$
so that $z=0,in\pi$ are removable singularities for $F(z)$. The poles (inside of $\Gamma$)
closest to $\Gamma$ are
$$
z_1=i\pi \quad \mbox{and} \quad z_{n-1}=i(n-1)\pi.
$$
Note that $z_1, z_{n-1}$ are zeros for $\sinh^2 z$ (appearing in the denominator) of order
$2$ and that they are not zeros for
$\sinh\left(\mbox{{\small $\frac{m}{n}$}}\,z\right)$ and
$\sinh\left(\mbox{{\small $\frac{k}{n}$}}\,z\right)$
(due to $1<a=\frac{m}{n}<2$ and $0<b=\frac{k}{n}<1$). Thus, we conclude that
$z=z_1, z_{n-1}$ are double poles for $F(z)$.

We begin with computation of the residue $\mbox{Res}(F(z);z_1)$ at $z=z_1$. Thanks to
$\sinh z=-\sinh(z-i\pi)$ (or by direct computation) the power series expansion of
$\sinh z$ around $z_1=i\pi$ is given by
\begin{linenomath} \begin{align*}
\sinh z 
&=
-\bigl((z-z_1)+(z-z_1)^3/3!+(z-z_1)^5/5!+\cdots\bigr)\\
&=
-(z-z_1)\bigl(1+(z-z_1)^2/3!+(z-z_1)^4/5!+\cdots\bigr).
 \end{align*} \end{linenomath}
We thus get the following Laurent series expansion:
\begin{linenomath} \begin{align}\label{(5.4)} 
\frac{1}{\sinh^2z}
&=
\frac{1}{(z-z_1)^2} \cdot
\frac{1}{\bigl(1+(z-z_1)^2/3!+(z-z_1)^4/5!+\cdots\bigr)^2}
\notag\\
&=
\frac{1}{(z-z_1)^2} \cdot
\frac{1}{1+(z-z_1)^2/3+\mbox{higher even powers}}
\notag\\
&=
\frac{1}{(z-z_1)^2} 
\left(1-(z-z_1)^2/3+\mbox{higher even powers}\right).
 \end{align} \end{linenomath}
Since 
$$
\frac{d^{\ell}}{dz^{\ell}}
\,e^{isz}\Big|_{z=z_1}=(is)^{\ell}e^{isz}\big|_{z=z_1}=(is)^{\ell}e^{-\pi s},
$$
we have
\begin{equation}\label{(5.5)} 
e^{isz}=e^{-\pi s}
\left(1+is(z-z_1)-s^2(z-z_1)^2/2+\cdots\right).
\end{equation}
Computations
\begin{linenomath} \begin{align*}
&
\sinh\left(\mbox{{\small $\frac{m}{n}$}}\,z_1\right)
=i\sin\left(\mbox{{\small $\frac{m}{n}$}}\,\pi\right)
\ \bigl(=i\sin(at)\bigr),\\
&
\frac{d}{dz}\sinh\left(\mbox{{\small $\frac{m}{n}$}}\,z\right)\Big|_{z=z_1}
=\mbox{{\small $\frac{m}{n}$}}
\cosh\left(\mbox{{\small $\frac{m}{n}$}}\,z\right)\big|_{z=z_1}
=\mbox{{\small $\frac{m}{n}$}}\cos\left(\mbox{{\small $\frac{m}{n}$}}\,\pi\right)
\bigl(=a\cos(at)\bigr)
 \end{align*} \end{linenomath}
give rise to
\begin{equation}\label{(5.6)} 
\sinh\left(\mbox{{\small $\frac{m}{n}$}}\,z\right)
=i\sin(a\pi)+a\cos(a\pi)(z-z_1)+\cdots,
\end{equation}
and similarly
\begin{equation}\label{(5.7)} 
\sinh\left(\mbox{{\small $\frac{k}{n}$}}\,z\right)
=i\sin(b\pi)+b\cos(b\pi)(z-z_1)+\cdots.
\end{equation}
From (\ref{(5.4)})--(\ref{(5.7)})
the Laurent series expansion of $F(z)$ around $z=z_1$ is given by
\begin{linenomath} \begin{align}\label{(5.8)} 
&
\frac{e^{-\pi s}}{(z-z_1)^2}
\left(1-(z-z_1)^2/3+\mbox{higher even powers}\right)
\bigl(1+is(z-z_1)+\cdots\bigr)
\notag\\
&
\quad
\times 
\bigl(i\sin(a\pi)+a\cos(a\pi)(z-z_1)+\cdots\bigr)
\bigl(i\sin(b\pi)+b\cos(b\pi)(z-z_1)+\cdots\bigr). 
 \end{align} \end{linenomath}
The residue $\mbox{Res}(F(z);z_1)$ is nothing but the coefficient of $(z-z_1)^{-1}$ here,
i.e., that of $(z-z_1)$ in the product of the above four brackets (multiplied by
$e^{-\pi s}$). Since the starting term is $1$ and a $(z-z_1)$-term is absent in the first
bracket, what we have to compute is the coefficient of $(z-z_1)$ in the product of the
last three brackets. In this way we arrive at
\begin{linenomath} \begin{align*}
\mbox{Res}(F(z);z_1)
&=
e^{-\pi s}
\bigl(
i\sin(a\pi) \cdot b\cos(b\pi)+a\cos(a\pi) \cdot i\sin(b\pi)\\
& \hskip 7cm
+is \cdot i\sin(a\pi) \cdot i\sin(b\pi))\\
&=
ie^{-\pi s}
\bigl(
a\cos(a\pi)\sin(b\pi)+b\cos(b\pi)\sin(a\pi)
-s\sin(a\pi)\sin(b\pi)
\bigr).
 \end{align*} \end{linenomath}

We next move to computation of the residue $\mbox{Res}(F(z);z_{n-1})$ at
$z=z_{n-1} \ \bigl(=i(n-1)\pi\bigr)$.
Because of $\sinh z=-\sinh(z-(n-1)\pi i) \ \bigl(=-\sinh(z+\pi i) \bigr)$ with $n$ even
we have
\begin{linenomath} \begin{align*}
\sinh z 
&=
-\bigl((z-z_{n-1})+(z-z_{n-1})^3/3!+(z-z_{n-1})^5/5!+\cdots\bigr)\\
&=
-(z-z_{n-1})\bigl(1+(z-z_{n-1})^2/3!+(z-z_{n-1})^4/5!+\cdots\bigr)
 \end{align*} \end{linenomath}
(with the identical coefficients as in the power expansion around $z=z_1$), and hence
we have
$$
\frac{1}{\sinh^2z}
=\frac{1}{(z-z_{n-1})^2} 
\left(1-(z-z_{n-1})^2/3+\mbox{higher even powers}\right)
$$
again (see (\ref{(5.4)})).  Also, since
$\frac{d^{\ell}}{dz^{\ell}}\,e^{isz}\big|_{z=z_{n-1}}
=(is)^{\ell}e^{isz}\big|_{z=z_{n-1}}=(is)^{\ell}e^{-(n-1)\pi s}$,
(\ref{(5.5)}) has to be replaced by
$$
e^{isz}=e^{-(n-1)\pi s}\left(1+is(z-z_{n-1})-s^2(z-z_{n-1})^2/2+\cdots\right).
$$
So far we have not seen changes of coefficients except the obvious modification that
the factor $e^{-\pi s}$ in (\ref{(5.5)}) was replaced by $e^{-(n-1)\pi s}$.
On the other hand, since
\begin{linenomath} \begin{align*}
\sinh\left(\mbox{{\small $\frac{m}{n}$}}\,z_{n-1}\right)
&=i\sin\left(\mbox{{\small $\frac{m}{n}$}}\,(n-1)\pi\right)
=i\sin\left(m\pi-\mbox{{\small $\frac{m}{n}$}}\,\pi\right)\\
&=-i\sin\left(\mbox{{\small $\frac{m}{n}$}}\,\pi\right)
\ \bigl(=-i\sin(at)\bigr), \\
\frac{d}{dz}\sinh\left(\mbox{{\small $\frac{m}{n}$}}\,z\right)\big|_{z=z_{n-1}}
&=\mbox{{\small $\frac{m}{n}$}}
\cosh\left(\mbox{{\small $\frac{m}{n}$}}\,z\right)\big|_{z=z_{n-1}}
=\mbox{{\small $\frac{m}{n}$}}\cos\left(\mbox{{\small $\frac{m}{n}$}}\,(n-1)\pi\right)\\
&=\mbox{{\small $\frac{m}{n}$}}\cos\left(m\pi-\mbox{{\small $\frac{m}{n}$}}\,\pi\right)
=\mbox{{\small $\frac{m}{n}$}}\cos\left(\mbox{{\small $\frac{m}{n}$}}\,\pi\right)
\ \bigl(=a\cos(at)\bigr),
 \end{align*} \end{linenomath}
the power series expansions (\ref{(5.6)}) and (\ref{(5.7)}) are replaced by
\begin{linenomath} \begin{align*}
\sinh\left(\mbox{{\small $\frac{m}{n}$}}\,z\right)
&=
-i\sin(a\pi)+a\cos(a\pi)(z-z_{n-1})+\cdots,\\
\sinh\left(\mbox{{\small $\frac{k}{n}$}}\,z\right)
&=
-i\sin(b\pi)+b\cos(b\pi)(z-z_{n-1})+\cdots
 \end{align*} \end{linenomath}
with constant terms of the opposite sign. The four relevant expansions are now at our
disposal, and the same reasoning as before (see the product (\ref{(5.8)})) gives us the
following conclusion:
\begin{linenomath} \begin{align*}
\mbox{Res}(F(z);z_{n-1})
&=
e^{-(n-1)\pi s}
\bigl(
-i\sin(a\pi) \cdot b\cos(b\pi)
-a\cos(a\pi) \cdot i\sin(b\pi)\\
& \hskip 5cm
+is\,(-i\sin(a\pi))(-i\sin(b\pi))
\bigr)\\
&=
ie^{-(n-1)\pi s}
\bigl(
-a\cos(a\pi)\sin(b\pi)-b\cos(b\pi)\sin(a\pi)\\
& \hskip 6.8cm
-s\sin(a\pi)\sin(b\pi)
\bigr).
 \end{align*} \end{linenomath}

The sum (multiplied by $2\pi i$) of the two residues we have computed so far can be
rearranged in the following way:
\begin{linenomath} \begin{align*}
&
\hskip -0.2cm
2 \pi i 
\bigl(
\mbox{Res}(F(z);z_1)+\mbox{Res}(F(z);z_{n-1})
\bigr)\\
&
=2 \pi 
\bigl[
e^{-\pi s}
\bigl(
-a\cos(a\pi)\sin(b\pi)-b\cos(b\pi)\sin(a\pi)+s\sin(a\pi)\sin(b\pi)
\bigr)\\
& \qquad \qquad
+e^{-(n-1)\pi s}
\bigl(
a\cos(a\pi)\sin(b\pi)+b\cos(b\pi)\sin(a\pi)+s\sin(a\pi)\sin(b\pi)
\bigr)
\bigr]\\
&
=2\pi e^{-n\pi s/2}
\bigl[
e^{(n/2-1)\pi s}
\bigl(
-a\cos(a\pi)\sin(b\pi)-b\cos(b\pi)\sin(a\pi)+s\sin(a\pi)\sin(b\pi)
\bigr)\\
& \qquad \qquad
+e^{-(n/2-1)\pi s}
\bigl(
a\cos(a\pi)\sin(b\pi)+b\cos(b\pi)\sin(a\pi)+s\sin(a\pi)\sin(b\pi)
\bigr)
\bigr]\\
&
=4\pi e^{-n\pi s/2}
\bigl[
-\bigl(a\cos(a\pi)\sin(b\pi)+b\cos(b\pi)\sin(a\pi)\bigr)\sinh\left((n/2-1)\pi s\right)\\
& \hskip 7cm
+s\sin(a\pi)\sin(b\pi)\cosh\left((n/2-1)\pi s\right)
\bigl].
 \end{align*} \end{linenomath}
Therefore, by recalling (\ref{(5.2)}) and (\ref{(5.3)}) we conclude
\begin{linenomath} \begin{align}
&\frac{1}{2\pi}\int_{-\infty}^{\,\infty}f(t)\,e^{ist}\,dt \nn\\
&\quad=\frac{1}{\sinh(n\pi s/2)}\,\Big[
\sin(a\pi)\sin(b\pi) \cdot s\cosh\left((n/2-1)\pi s\right) \nn\\
&\hskip3.5cm
-\bigl(a\cos(a\pi)\sin(b\pi)+b\cos(b\pi)\sin(a\pi)\bigr)
\sinh\left((n/2-1)\pi s\right) \nn\\
&\hskip3.5cm+\mbox{lower order terms} \label{(5.9)}
\Big].
 \end{align} \end{linenomath}
A few remarks concerning ``lower order terms" are in order. Other candidates for poles
(inside of $\Gamma$) of $F(z)$ are
$$
z_{\ell}=i\ell \pi \quad (\mbox{for $\ell=2,3,\dots,n-2$}),
$$
where nature of singularities at these points (i.e., removable  singularities or poles of
order at most $2$) is determined 
according to values of $\sinh\left(\mbox{{\small $\frac{m}{n}$}}\,z_{\ell}\right)$,
$\sinh\left(\mbox{{\small $\frac{k}{n}$}}\,z_{\ell}\right)$
appearing in the numerator. Anyway, residues arising from them give us linear combinations 
of factors of the forms
$$
\sinh\left(\left(n/2-\ell'\right)\pi s\right), 
\quad \cosh\left(\left(n/2-\ell'\right)\pi s\right)
\quad \mbox{with $\ell'=2,3,\dots,n/2$}
$$
in the above big bracket \eqref{(5.9)} (possibly with the linear factor $s$ for double
poles). Indeed, the only source for exponential factors is  the power series expansions
of $e^{isz}$ around $z=z_{\ell}$ (see (\ref{(5.5)})), which actually gives rise to
$$
e^{isz_{\ell}}=e^{-\ell\pi s}=e^{-n\pi s/2}e^{(n/2-\ell)\pi s}
\quad (\ell=2,3,\dots,n-2).
$$
Thus, by recalling the factor $e^{-n\pi s/2}$ appearing in (\ref{(5.2)}), we get the
assertion.

The dominant term (as $s \to \pm\infty$) in the numerator of the Fourier transform is
$$
\sin(a\pi)\sin(b\pi) \cdot s\cosh\left((n/2-1)\pi s\right),
$$
and we observe
$$
\sin(a\pi)\sin(b\pi)<0
$$
thanks to $1 < a < 2$ and $0 < b < 1$ (see (\ref{(5.1)})). Consequently, the Fourier
transform takes negative values for $|s|$ large (i.e., failure of positive definiteness
for $f(s)$), and Lemma \ref{Lemma 5.2} has been proved.

\subsection{Fourier transform of $\left((\cosh(t/2)+\alpha)(\cosh t+\beta)\right)^{-1}$}  \label{sect:cosh}

Detailed information on positive definiteness for
$\left(\cosh^k(t/2)(\cosh t+\beta)^m\right)^{-1}$ will be needed to prove results on
geometric bridges in Example \ref{ex:geom.gen}. However, a direct computation for its
Fourier transform based on residue calculus seems hopeless due to the fact that poles of
higher orders have to be considered. Instead, in this subsection we compute the Fourier
transform in the special case $k=m=1$ with the additional parameter $\alpha$ as in the
theorem below (and then in Section \ref{sect:infdiv2} we will check higher order partial
derivatives relative to $\alpha$ and $\beta$ to achieve our goal).

\begin{thm}\label{Theorem 5.3}
For $\alpha \in (-1,1)$ and $\beta > 1$ we have
\begin{linenomath} \begin{align}\label{(5.10)} 
&
\frac{1}{4\pi}\int_{-\infty}^{\infty}
\frac{e^{ist}\,dt}{\bigl(\cosh(t/2)+\alpha\bigr)\bigl(\cosh t +\beta\bigr)}
\nonumber\\
& \qquad
=\frac{1}{\sinh(2\pi s)}
\Biggl[
\frac{\sinh(2\theta s)}{\sqrt{1-\alpha^2}\bigl(2\alpha^2-1+\beta\bigr)}
\nonumber\\
&
\hskip 3.5cm
-\,
\frac{
\sqrt{\frac{\beta-1}{2}} \cos(\lambda s) \sinh(\pi s)
-\alpha \sin(\lambda s) \cosh(\pi s)
}
{\sqrt{\beta^2-1}\bigl(\frac{\beta-1}{2}+\alpha^2\bigr)}
\Biggr],
 \end{align} \end{linenomath}
where $\theta=\cos^{-1}\alpha \in (0,\pi)$ and 
$\lambda=\log\left(\beta+\sqrt{\beta^2-1}\right)$, i.e., $\lambda>0$ is a solution of
$\cosh\lambda=\beta$.
\end{thm}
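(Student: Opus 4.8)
The plan is to evaluate the integral by residue calculus on a rectangular contour, in the spirit of the proof of Lemma~\ref{Lemma 5.2}. Write $f(z)\equiv\bigl[(\cosh(z/2)+\alpha)(\cosh z+\beta)\bigr]^{-1}$ and $F(z)\equiv f(z)e^{isz}$, and integrate $F$ over the rectangle $\Gamma$ with vertices $\pm R$ and $\pm R+4\pi i$. The height $4\pi$ is dictated by the periods of the two factors: $\cosh z$ has period $2\pi i$, but $\cosh(z/2)$ only has period $4\pi i$ (a shift by $2\pi i$ sends $\cosh(z/2)$ to $-\cosh(z/2)$), so $4\pi i$ is the smallest common vertical period of $f$. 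On the top edge $z=t+4\pi i$ one then has $f(z)=f(t)$ and $e^{isz}=e^{ist}e^{-4\pi s}$, so the two horizontal edges contribute
\[
(1-e^{-4\pi s})\int_{-R}^{R}f(t)e^{ist}\,dt=2e^{-2\pi s}\sinh(2\pi s)\int_{-R}^{R}f(t)e^{ist}\,dt,
\]
which is the source of the factor $1/\sinh(2\pi s)$ in the claimed formula. Since $|f(z)|$ decays like $e^{-3|\Re z|/2}$ uniformly on the strip $0\le\Im z\le 4\pi$ while $|e^{isz}|=e^{-s\,\Im z}$ stays bounded there, the two vertical edges vanish as $R\to\infty$, and the Fourier transform is recovered from $\int_\Gamma F\,dz$ by the residue theorem.

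Next I would enumerate the poles of $F$ in the open strip $0<\Im z<4\pi$. For $\alpha\in(-1,1)$ and $\beta>1$ there are none on the real axis or on the top edge, and all six interior poles are simple. The factor $\cosh(z/2)+\alpha$ vanishes at the two purely imaginary points $z=i(2\pi-2\theta)$ and $z=i(2\pi+2\theta)$, where $\theta=\cos^{-1}\alpha\in(0,\pi)$; the factor $\cosh z+\beta$ vanishes at the four points $\pm\lambda+i\pi$ and $\pm\lambda+3i\pi$, which lie \emph{off} the imaginary axis, with $\cosh\lambda=\beta$. I would compute each residue from $\mathrm{Res}(F;z_0)=e^{isz_0}/h'(z_0)$, $h$ the vanishing factor, using the evaluations $\cosh\bigl(i(2\pi\pm2\theta)\bigr)=\cos2\theta=2\alpha^2-1$ and $\sinh\bigl(i(\pi\mp\theta)\bigr)=\pm i\sqrt{1-\alpha^2}$ at the imaginary poles, together with $\sinh^2(\lambda/2)=(\beta-1)/2$, $\sinh(\pm\lambda+ij\pi)=\mp\sqrt{\beta^2-1}$ ($j$ odd), and $\cosh\bigl((\pm\lambda+i\pi)/2\bigr)=\pm i\sqrt{(\beta-1)/2}$ at the others.

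The grouping is where the clean answer emerges. The two imaginary poles carry exponentials $e^{isz_0}=e^{-2\pi s}e^{\pm2\theta s}$ and opposite signs of $\sinh(z_0/2)$, so their residues combine into a single $\sinh(2\theta s)$ term with denominator $\sqrt{1-\alpha^2}\,(2\alpha^2-1+\beta)$. For the remaining four poles the residues fall into two conjugate pairs (at $\pm\lambda+i\pi$ and at $\pm\lambda+3i\pi$); within each pair the denominators $\alpha\pm i\sqrt{(\beta-1)/2}$ combine to the real factor $\sqrt{\beta^2-1}\,\bigl(\tfrac{\beta-1}{2}+\alpha^2\bigr)$, the factors $e^{\pm is\lambda}$ assemble into $\cos(\lambda s)$ and $\sin(\lambda s)$, and after extracting $e^{-2\pi s}$ the two exponentials $e^{-\pi s}$ and $e^{-3\pi s}$ produce $\sinh(\pi s)$ and $\cosh(\pi s)$. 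Assembling all six residues, multiplying by $2\pi i$, and dividing by $2e^{-2\pi s}\sinh(2\pi s)$ then yields exactly \eqref{(5.10)}.

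The one genuinely delicate part is this last bookkeeping: tracking the factors of $i$ and the sign conventions for $\sinh(z_0/2)$, $\cosh(z_0/2)$ and $\sinh z_0$ at each of the six poles, and verifying that the conjugate-pair cancellations leave a manifestly real expression. Everything else — the contour setup, the periodicity relation, the decay estimate on the vertical edges, and the location of the poles — is routine. I would also remark that $2\alpha^2-1+\beta>0$ and $\tfrac{\beta-1}{2}+\alpha^2>0$ for $\alpha\in(-1,1)$, $\beta>1$, so that the right-hand side of \eqref{(5.10)} is well defined.
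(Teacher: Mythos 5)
Your proposal is correct and follows essentially the same route as the paper: the same rectangle of height $4\pi$, the same periodicity and decay arguments yielding the factor $2e^{-2\pi s}\sinh(2\pi s)$, the same six simple poles $2i(\pi\pm\theta)$ and $\pm\lambda+i\pi$, $\pm\lambda+3i\pi$, and the same pairwise grouping of residues into the $\sinh(2\theta s)$ term and the $\cos(\lambda s)\sinh(\pi s)$, $\sin(\lambda s)\cosh(\pi s)$ terms. The only cosmetic point is that $\mathrm{Res}(F;z_0)=e^{isz_0}/h'(z_0)$ should read $e^{isz_0}/\bigl(h'(z_0)g(z_0)\bigr)$ with $g$ the non-vanishing factor, but your subsequent evaluations make clear you are computing exactly this.
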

\begin{proof}
For a fixed $s \in \bR$ we set
$$
F(z)\equiv \frac{e^{isz}}{\bigl(\cosh(z/2)+\alpha\bigr)\bigl(\cosh z +\beta\bigr)}
$$
and compute its integral along the following rectangle $\Gamma$:
$$
\begin{array}{lll}
\Gamma_1&z=t,& t: -R \to R,\\
\Gamma_2&z=R+is,& s: 0 \to 4\pi,\\
\Gamma_3&z=t+4\pi i,& t: R \to -R,\\
\Gamma_4&z=-R+is,& s: 4\pi \to 0.
\end{array}
$$
Due to $\cosh((t+4\pi i)/2)=\cosh(t/2)$, $\cosh(t+4\pi i)=\cosh t$ and
$$
\lim_{R \to \infty}
\int_{\Gamma_2 \cup \Gamma_4} f(z)\,dz=0
$$
we have
\begin{linenomath} \begin{align*}
\lim_{R \to \infty}\int_{\Gamma}f(z)\,dz
&=\lim_{R \to \infty} \int_{\Gamma_1 \cup \Gamma_3}f(z)\,dz\\
&=\lim_{R \to \infty}
(1-e^{-4\pi s})
\int_{-R}^{R} \frac{e^{ist}ds}{\bigl(\cosh(t/2)+\alpha\bigr)\bigl(\cosh t +\beta\bigr)}\\
&=2e^{-2\pi s}\sinh(2\pi s)\int_{-\infty}^{\infty}
\frac{e^{ist}ds}{\bigl(\cosh(t/2)+\alpha\bigr)\bigl(\cosh t +\beta\bigr)},
 \end{align*} \end{linenomath}
and we will compute $\int_{\Gamma} f(z)\,dz$ by residue calculus.

It is easy to see that we have the following six simple poles inside of $\Gamma$:
$$
z_0=2i(\pi-\theta), \quad z_1=2i(\pi+\theta), \quad
\xi^{\pm}_0=i\pi\pm\lambda, \quad \xi^{\pm}_1=3i\pi\pm\lambda.
$$
When $\alpha=1$ (i.e., $\theta=0$), $2\pi i$ is a double pole. However, we assumed
$\alpha \in (-1,1)$ to avoid this complication. Note that the Fourier transform formula
(\ref{(5.10)}) itself remains valid for $\alpha=1$  by the obvious limiting argument with
the understanding
$$
\frac{\sinh(2\theta s)}{\sqrt{1-\alpha^2}}\bigg|_{\alpha=1}
=\lim_{\alpha \nearrow 1}\frac{\sinh(2\theta s)}{\sqrt{1-\alpha^2}}=2s
$$
(see \eqref{(5.17)} below).  We note
$$
\text{Res}(z_j,F(z))
=\frac{e^{isz_j}}{\frac{1}{2}\sinh(z_j/2)\bigl(\cosh z_j+\beta\bigr)},
\qquad j=0,1,
$$
and observe
\begin{linenomath} \begin{align*}
& e^{isz_0}=e^{-2(\pi-\theta)s}, \quad  e^{isz_1}=e^{-2(\pi+\theta)s},\\
&
\sinh(z_0/2)=i\sin(\pi-\theta)=i\sin\theta=i\sqrt{1-\alpha^2},\\
&
\sinh(z_1/2)=i\sin(\pi+\theta)=-i\sin\theta=-i\sqrt{1-\alpha^2},\\
&
\cosh(z_0)=\cos(2(\pi-\theta))=\cos(2\theta)=2\cos^2\theta-1=2\alpha^2-1,\\
&
\cosh(z_1)=\cos(2(\pi+\theta))=\cos(2\theta)=2\alpha^2-1.
 \end{align*} \end{linenomath}
Thus, we compute
\begin{linenomath} \begin{align*}
\text{Res}(z_0;F(z))&=
\frac{e^{-2(\pi-\theta)s}}{\frac{i}{2}\sqrt{1-\alpha^2}\bigl(2\alpha^2-1+\beta\bigr)}
\,=\,-\frac{2ie^{-2(\pi-\theta)s}}{\sqrt{1-\alpha^2}\bigl(2\alpha^2-1+\beta\bigr)},\\
\text{Res}(z_1;F(z))&=
\frac{e^{-2(\pi+\theta)s}}{-\,\frac{i}{2}\sqrt{1-\alpha^2}\bigl(2\alpha^2-1+\beta\bigr)}
\,=\,
\frac{2ie^{-2(\pi+\theta)s}}{\sqrt{1-\alpha^2}\bigl(2\alpha^2-1+\beta\bigr)},\\
 \end{align*} \end{linenomath}
and consequently we have
\begin{linenomath} \begin{align}\label{(5.11)} 
\text{Res}(z_0;F(z))+\text{Res}(z_1;F(z))
&=-\,\frac{2ie^{-2\pi s} \left(e^{2\theta s}-e^{-2\theta s}\right)}
{\sqrt{1-\alpha^2}\bigl(2\alpha^2-1+\beta\bigr)} \nonumber\\
&=-\,\frac{4ie^{-2\pi s} \sinh(2\theta s)}{\sqrt{1-\alpha^2}\bigl(2\alpha^2-1+\beta\bigr)}.
 \end{align} \end{linenomath}

We note
$$
\text{Res}(\xi^{\pm}_j;F(z))
=\frac{e^{is\xi^{\pm}_j}}{\bigl(\cosh(\xi^{\pm}_j/2)+\alpha\bigr)\sinh \xi^{\pm}_j},
\qquad j=0,1.
$$
We observe
\begin{linenomath} \begin{align*}
&
e^{is\xi^{\pm}_0}=e^{-\pi s \pm i\lambda s}, \quad
e^{is\xi^{\pm}_1}=e^{-3\pi s \pm i\lambda s},\\
&
\cosh(\xi^{\pm}_0/2)=\cosh((i\pi \pm \lambda)/2)=\pm i\sinh(\lambda/2)
=\pm i \sqrt{\frac{\cosh \lambda-1}{2}}=\pm i \sqrt{\frac{\beta-1}{2}},\\
&
\cosh(\xi^{\pm}_1/2)=\cosh((3i\pi \pm \lambda)/2)=\mp i\sinh(\lambda/2)
=\mp i \sqrt{\frac{\beta-1}{2}},\\
&
\sinh(\xi^{\pm}_0)=\sinh(i\pi\pm\lambda)=\mp\sinh\lambda=\mp\sqrt{\beta^2-1},\\ 
&
\sinh(\xi^{\pm}_1)=\sinh(3i\pi\pm\lambda)=\mp\sqrt{\beta^2-1}, 
 \end{align*} \end{linenomath}
and hence
\begin{linenomath} \begin{align*}
\text{Res}(\xi^{\pm}_0;F(z))
&=
\frac{e^{-\pi s \pm i\lambda s}}
{\Bigl(\pm i \sqrt{\frac{\beta-1}{2}}+\alpha\Bigr)\bigl(\mp \sqrt{\beta^2-1}\bigr)}
=
\frac{ie^{-\pi s \pm i\lambda s}}
{\sqrt{\beta^2-1}\Bigl(\sqrt{\frac{\beta-1}{2}}\mp i\alpha\Bigr)}\\
&=
\frac{ie^{-\pi s \pm i\lambda s}\Bigl( \sqrt{\frac{\beta-1}{2}} \pm i\alpha \Bigr)}
{\sqrt{\beta^2-1}\bigl(\frac{\beta-1}{2}+\alpha^2\bigr)},\\
\text{Res}(\xi^{\pm}_1;F(z))
&=
\frac{e^{-3\pi s \pm i\lambda s}}
{\Bigl(\mp i \sqrt{\frac{\beta-1}{2}}+\alpha\Bigr)\bigl(\mp \sqrt{\beta^2-1}\bigr)}
=
-\,\frac{ie^{-3\pi s \pm i\lambda s}}
{\sqrt{\beta^2-1}\Bigl(\sqrt{\frac{\beta-1}{2}}\pm i\alpha\Bigr)}\\
&=
-\,\frac{ie^{-3\pi s \pm i\lambda s}\Bigl( \sqrt{\frac{\beta-1}{2}} \mp i\alpha \Bigr)}
{\sqrt{\beta^2-1}\bigl(\frac{\beta-1}{2}+\alpha^2\bigr)}.
 \end{align*} \end{linenomath}
We compute
\begin{linenomath} \begin{align*}
\text{Res}(\xi^{+}_0;F(z))+\text{Res}(\xi^{-}_0;F(z))
&=
\frac{
ie^{-\pi s}
\left[
e^{i\lambda s}
\Bigl(\sqrt{\frac{\beta-1}{2}} + i\alpha \Bigr)
+
e^{-i\lambda s}
\Bigl(\sqrt{\frac{\beta-1}{2}} - i\alpha \Bigr)
\right]
}
{\sqrt{\beta^2-1}\bigl(\frac{\beta-1}{2}+\alpha^2\bigr)}\\
&=
\frac{
2ie^{-\pi s}
\left[
\sqrt{\frac{\beta-1}{2}} \cos(\lambda s)-\alpha \sin(\lambda s)
\right]
}
{\sqrt{\beta^2-1}\bigl(\frac{\beta-1}{2}+\alpha^2\bigr)},\\
\text{Res}(\xi^{+}_1;F(z))+\text{Res}(\xi^{-}_1;F(z))
&=
-\,
\frac{
ie^{-3\pi s}
\left[
e^{i\lambda s}
\Bigl(\sqrt{\frac{\beta-1}{2}} - i\alpha \Bigr)
+
e^{-i\lambda s}
\Bigl(\sqrt{\frac{\beta-1}{2}} + i\alpha \Bigr)
\right]
}
{\sqrt{\beta^2-1}\bigl(\frac{\beta-1}{2}+\alpha^2\bigr)}\\
&=
-\,
\frac{
2ie^{-3\pi s}
\left[
\sqrt{\frac{\beta-1}{2}} \cos(\lambda s)+\alpha \sin(\lambda s)
\right]
}
{\sqrt{\beta^2-1}\bigl(\frac{\beta-1}{2}+\alpha^2\bigr)}.
 \end{align*} \end{linenomath}
Both the quantities have $\sin$, $\cos$, and we conclude
\begin{linenomath} \begin{align}\label{(5.12)} 
&
\text{Res}(\xi^{+}_0;F(z))+\text{Res}(\xi^{-}_0;F(z))
+\text{Res}(\xi^{+}_1;F(z))+\text{Res}(\xi^{-}_1;F(z))
\nonumber\\
& \qquad
=
\frac{
2i
\left[
\sqrt{\frac{\beta-1}{2}} \cos(\lambda s) \bigl(e^{-\pi s}-e^{-3\pi s} \bigr)
-\alpha \sin(\lambda s) \bigl( e^{-\pi s}+e^{-3\pi s}\bigr)
\right]
}
{\sqrt{\beta^2-1}\Bigl(\frac{\beta-1}{2}+\alpha^2\Bigr)}
\nonumber\\
& \qquad
=
\frac{
4ie^{-2\pi s}
\left[
\sqrt{\frac{\beta-1}{2}} \cos(\lambda s) \sinh(\pi s)
-\alpha \sin(\lambda s) \cosh(\pi s)
\right]
}
{\sqrt{\beta^2-1}\Bigl(\frac{\beta-1}{2}+\alpha^2\Bigr)}.
 \end{align} \end{linenomath}
The desired Fourier transform formula (\ref{(5.10)}) is obtained as the sum of
(\ref{(5.11)}) and (\ref{(5.12)}) (multiplied by $2\pi i$).
\end{proof}

\subsection{Analysis of
$\left((\cosh(t/2)+\alpha)(\cosh t+\beta)\right)^{-1}$}  \label{sect:infdiv}

Here, we recall the Kolmogorov theorem (a version of L\'evy-Khintchine formula):
A function $f(t)$ on $\bR$ is the characteristic function of an infinitely divisible
probability measure with finite second moment if and only if there exist a finite positive
measure $\nu$ and a $\gamma\in\bR$ such that
$$
\log f(t)=i\gamma t+\int_{-\infty}^\infty\biggl({e^{its}-1-its\over s^2}\biggr)\,d\nu(s).
$$
Detailed accounts can be found in \cite{L,GK} for instance. We note that functions $f(t)$
we are dealing with here are all smooth and hence the ``finite second moment" condition
is automatic (see \cite[Section 2.3]{L} for instance).

In the following lemma we state two explicit examples of the Kolmogorov theorem obtained
in \cite[Lemma 2\,(ii) and Lemma 16]{Ko3} for later use and for the convenience of the
reader:

\begin{lemma}\label{Lemma 5.4} \
\begin{itemize}
\item[\rm(i)] For $a>0$ and $\theta\in[0,\pi)$,
$$
\log\biggl({1+\cos\theta\over\cosh(at)+\cos\theta}\biggr)
=\int_{-\infty}^\infty\left(e^{its}-1-ist\right)
{\cosh(\theta s/a)\over s\sinh(\pi s/a)}\,ds.
$$
\item[\rm(ii)] For $a>0$ and $\lambda\ge0$,
$$
\log\biggl({1+\cosh\lambda\over\cosh(at)+\cosh\lambda}\biggr)
=\int_{-\infty}^\infty\left(e^{its}-1-ist\right)
{\cos(\lambda s/a)\over s\sinh(\pi s/a)}\,ds.
$$
\end{itemize}
\end{lemma}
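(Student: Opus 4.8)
The plan is to treat both identities together and reduce them to a single classical Fourier transform. First I would remove the scaling parameter $a$: substituting $s\mapsto as$ on the right-hand side and noting that the left-hand side depends on $t$ only through $at$, one checks that the identity with general $a>0$ and argument $t$ coincides with the identity for $a=1$ and argument $at$. So it suffices to prove both formulas for $a=1$. Next I would observe that (ii) is the analytic continuation of (i) under $\theta\mapsto i\lambda$: since $\cos(i\lambda)=\cosh\lambda$ and $\cosh(i\lambda s)=\cos(\lambda s)$, formula (ii) is exactly (i) after this substitution. Both sides of (i) are analytic in $\theta$ on the strip $\{|\Re\theta|<\pi\}$ (the integral converges there because $\cosh(\theta s)/\sinh(\pi s)$ decays like $e^{(|\Re\theta|-\pi)|s|}$), so once (i) holds for real $\theta\in(-\pi,\pi)$ it propagates to $\theta=i\lambda$, yielding (ii). Thus the whole lemma reduces to proving (i) with $a=1$.

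For (i) I would argue by differentiation in $t$. Both sides vanish at $t=0$, so it suffices to match their $t$-derivatives. Differentiating under the integral sign is justified by dominated convergence, since $\partial_t(e^{its}-1-ist)=is(e^{its}-1)$ and $(e^{its}-1)\cosh(\theta s)/\sinh(\pi s)$ is integrable and bounded near $s=0$. Using that $\cosh(\theta s)/\sinh(\pi s)$ is odd, the right-hand side derivative becomes $-2\int_0^\infty \sin(ts)\,\frac{\cosh(\theta s)}{\sinh(\pi s)}\,ds$, while the left-hand side derivative is $-\sinh t/(\cosh t+\cos\theta)$. Hence everything comes down to the Fourier sine transform
\[
\int_0^\infty \sin(ts)\,\frac{\cosh(\theta s)}{\sinh(\pi s)}\,ds
=\frac12\cdot\frac{\sinh t}{\cosh t+\cos\theta},
\]
after which integrating back from $t=0$ recovers (i).

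I would prove this last identity by residue calculus, integrating $g(z)=e^{itz}\cosh(\theta z)/\sinh(\pi z)$ around the rectangle $\{-R\le\Re z\le R,\ 0\le\Im z\le 2\}$, exploiting the quasi-periodicity $\sinh(\pi(z+2i))=\sinh(\pi z)$, indenting around the boundary poles of $1/\sinh(\pi z)$ at $z=0$ and $z=2i$, and picking up the interior pole at $z=i$. The vertical edges vanish as $R\to\infty$ because $\theta<\pi$. The main obstacle is that $\cosh(\theta z)$ is not periodic, so the top edge produces, besides the target integral, a companion integral $\int_0^\infty\cos(ts)\,\sinh(\theta s)/\sinh(\pi s)\,ds$; collecting the two horizontal edges together with the residues yields a small linear system in these two integrals whose solution is the stated closed form. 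This is the delicate step: one must track the principal-value contributions from the boundary poles, the decay on the horizontal segments, and the bookkeeping that disentangles the coupled integrals.

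As a consistency check, letting $\theta\to0$ reduces the displayed identity to the classical $\int_0^\infty \sin(ts)/\sinh(\pi s)\,ds=\tfrac12\tanh(t/2)$. Finally I would note that the density $s\cosh(\theta s)/\sinh(\pi s)$ in the associated measure is positive and finite (it is $\sim 1/\pi$ near $s=0$ and decays exponentially since $\theta<\pi$), so the resulting formula is a genuine Kolmogorov/L\'evy--Khintchine representation, which is precisely the form in which the lemma will be applied to deduce infinite divisibility.
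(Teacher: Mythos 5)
Your proposal is correct in substance, but note that the paper does not prove this lemma at all: it is imported verbatim from \cite[Lemma 2\,(ii) and Lemma 16]{Ko3}, so you are supplying a self-contained derivation where the authors rely on a citation. Your reduction chain is sound: the rescaling $s\mapsto as$ does reduce everything to $a=1$; the substitution $\theta\mapsto i\lambda$ does turn (i) into (ii), and the analytic-continuation step is legitimate because on the strip $|\Re\theta|<\pi$ one checks that $\cos\theta$ can only be real when $\Im\theta=0$ or $\Re\theta=0$, in which case $\cos\theta>-1$, so $\cosh(at)+\cos\theta$ never vanishes and a single-valued branch of the logarithm exists on this simply connected domain; the differentiation in $t$ is justified exactly as you say (the differentiated integrand is $O(1)$ near $s=0$ and exponentially small at infinity, locally uniformly in $t$); and the identity you reduce to,
\begin{equation*}
\int_0^\infty \sin(ts)\,\frac{\cosh(\theta s)}{\sinh(\pi s)}\,ds
=\frac{1}{2}\cdot\frac{\sinh t}{\cosh t+\cos\theta},\qquad |\theta|<\pi,
\end{equation*}
is a true classical Fourier sine transform (it specializes correctly to $\tfrac12\tanh(t/2)$ at $\theta=0$). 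The only part you leave genuinely schematic is the residue computation: because $\cosh(\theta z)$ is not $2i$-periodic, the top edge of your rectangle couples the target integral with $\int_0^\infty\cos(ts)\sinh(\theta s)/\sinh(\pi s)\,ds$, and one must solve the resulting complex-linear relation (two real equations in two real unknowns) while correctly accounting for the half-residues at the boundary poles $z=0$ and $z=2i$ and the full residue at $z=i$; this does close, but it is the step where all the actual work lives, and as written it is an outline rather than a proof. What your route buys is independence from \cite{Ko3}; what the citation buys the authors is avoiding precisely this bookkeeping, which in \cite{Ko3} is carried out by contour arguments of the same flavor as those in Section~\ref{sect:sinh} of this paper.
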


When $\alpha=0$ (i.e., $\theta=\pi/2$), (\ref{(5.10)}) reduces to
\begin{equation}\label{(5.13)} 
\int_{-\infty}^{\infty}
\frac{e^{ist}\,dt}{\cosh(t/2)\left(\cosh t+\beta\right)}
=
2\pi\,
\frac{
1-\sqrt{\frac{2}{\beta+1}}
\cos(\lambda s)
}
{(\beta-1)\cosh(\pi s)}
\quad (\geq 0),
\end{equation}
which corresponds to the special case $\alpha=0$ in the next result
(\cite[Theorem 4.13]{CM} and see also \cite[Section 7]{Ko3}).

\begin{cor}\label{Corollary 5.5}
We set
$$
G(t)\equiv 
\frac{1}{\bigl(\cosh(t/2)+\alpha\bigr)\bigl(\cosh t +\beta\bigr)}
$$
with $\alpha,\beta>-1$.
\begin{itemize}
\item[\rm(i)]  When $\beta>1$, $G(t)$ is positive definite if and only if
$\alpha \in (-1,0]$.
\item[\rm(ii)]  When $-1 <\beta \leq 1$, $G(t)$ is infinitely divisible for each
$\alpha \in (-1,\infty)$.
\end{itemize}
\end{cor}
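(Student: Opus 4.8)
The overall strategy is to factor $G$ as a product of two simpler functions and to exploit that a product of positive definite functions is positive definite (property (c) of Definition~\ref{def:PD/ID}), together with the observation that a product of infinitely divisible functions is infinitely divisible, since for each $r>0$ one has $G(t)^r=(\cosh(t/2)+\alpha)^{-r}(\cosh t+\beta)^{-r}$, a product of two positive definite functions. This already disposes of part (ii): I would show each factor of $G$ is separately infinitely divisible when $-1<\beta\le1$ and $\alpha>-1$. For $(\cosh t+\beta)^{-1}$ with $\beta\in(-1,1]$ I write $\beta=\cos\phi$, $\phi\in[0,\pi)$, and apply Lemma~\ref{Lemma 5.4}(i) with $a=1$; this exhibits $\log\bigl((1+\cos\phi)/(\cosh t+\cos\phi)\bigr)$ in L\'evy--Khintchine form with finite positive L\'evy measure $d\nu(s)=s\cosh(\phi s)/\sinh(\pi s)\,ds$, so $(\cosh t+\beta)^{-1}$ is a positive multiple of an infinitely divisible characteristic function. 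For $(\cosh(t/2)+\alpha)^{-1}$, valid for every $\alpha>-1$, I use Lemma~\ref{Lemma 5.4}(i) with $a=\tfrac12$ and $\alpha=\cos\psi$ when $\alpha\in(-1,1]$, and Lemma~\ref{Lemma 5.4}(ii) with $a=\tfrac12$ and $\alpha=\cosh\mu$ when $\alpha>1$.

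For the ``if'' direction of part (i), that $\alpha\in(-1,0]$ forces positive definiteness when $\beta>1$, I would use the factorization
$$
G(t)=\frac{\cosh(t/2)}{\cosh(t/2)+\alpha}\cdot\frac{1}{\cosh(t/2)\,(\cosh t+\beta)}.
$$
The second factor is positive definite because its Fourier transform is the manifestly non-negative expression \eqref{(5.13)} (the $\alpha=0$ case). The first factor equals $1$ when $\alpha=0$, and for $\alpha\in(-1,0)$ it equals $1+|\alpha|\,(\cosh(t/2)+\alpha)^{-1}$, a sum of the constant $1$ and a positive multiple of $(\cosh(t/2)+\alpha)^{-1}$; the latter is positive definite by Lemma~\ref{lemm:hyperb}(2) after the harmless rescaling $t\mapsto t/2$, which is legitimate since $\alpha\le1$. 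Hence both factors, and so $G$, are positive definite.

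For the ``only if'' direction, that $\alpha>0$ destroys positive definiteness when $\beta>1$, I would split into $\alpha\in(0,1)$ and $\alpha\ge1$. When $\alpha\in(0,1)$, Theorem~\ref{Theorem 5.3} provides the Fourier transform of $G$ explicitly, and it suffices to find one $s>0$ at which it is negative, equivalently (since $\sinh(2\pi s)>0$) one $s$ at which the bracket in \eqref{(5.10)} is negative. Here $\theta=\cos^{-1}\alpha\in(0,\tfrac\pi2)$, so $2\theta<\pi$ and the term $\sinh(2\theta s)$ grows only like $e^{2\theta s}$, while the competing term behaves like $\tfrac12 e^{\pi s}\bigl(\sqrt{(\beta-1)/2}\,\cos(\lambda s)-\alpha\sin(\lambda s)\bigr)$ with $\lambda>0$; rewriting the oscillatory factor as $R\cos(\lambda s+\varphi)$ with $R=\sqrt{(\beta-1)/2+\alpha^2}>0$ shows the bracket takes negative values along a sequence $s\to\infty$. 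For $\alpha\ge1$ I would reduce to this case: fixing any $\alpha'\in(0,1)$, the quotient $(\cosh(t/2)+\alpha)/(\cosh(t/2)+\alpha')=1+(\alpha-\alpha')(\cosh(t/2)+\alpha')^{-1}$ is positive definite by Lemma~\ref{lemm:hyperb}(2) since $\alpha>\alpha'$, so were $G$ with parameter $\alpha$ positive definite, multiplying by this quotient would make $G$ with parameter $\alpha'$ positive definite, contradicting the case just settled.

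I expect the main obstacle to be the sign analysis in the subcase $\alpha\in(0,1)$: once Theorem~\ref{Theorem 5.3} is available, all the remaining pieces are short product and multiplier arguments, but producing a provably negative value of the Fourier transform requires carefully comparing the exponential orders $e^{2\theta s}$ and $e^{\pi s}$ and verifying that the dominant oscillatory coefficient genuinely changes sign, which is exactly where the hypothesis $\beta>1$ (forcing $\lambda>0$) enters.
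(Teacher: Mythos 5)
Your treatment of part (i) matches the paper's proof essentially step for step: the factorization through \eqref{(5.13)} for $\alpha\in(-1,0]$, the sign analysis of the bracket in \eqref{(5.10)} for $\alpha\in(0,1)$ (where $2\theta<\pi$ makes the oscillatory $e^{\pi s}$-term dominant), and the multiplier reduction of $\alpha\ge1$ to a previously settled $\alpha'\in(0,1)$ are all exactly what the authors do. The first half of part (ii), for $\alpha\in(-1,1]$, is also fine: there $G$ really is a product of two infinitely divisible factors.

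The gap is in part (ii) for $\alpha>1$. You assert that $(\cosh(t/2)+\alpha)^{-1}$ is infinitely divisible for every $\alpha>-1$, invoking Lemma~\ref{Lemma 5.4}\,(ii) with $\alpha=\cosh\mu$ when $\alpha>1$. This is false: by Lemma~\ref{lemm:hyperb}\,(2) (after the rescaling $t\mapsto2t$), $(\cosh(t/2)+\alpha)^{-1}$ is not even positive definite when $\alpha>1$, so it certainly is not infinitely divisible, and the factor-by-factor argument collapses there. Lemma~\ref{Lemma 5.4}\,(ii) does provide a Kolmogorov-type representation, but its density $\cos(\mu s/a)/\bigl(s\sinh(\pi s/a)\bigr)$ is \emph{signed} for $\mu>0$, so it does not certify infinite divisibility of that factor alone. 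The paper's proof instead adds the two Kolmogorov densities coming from the two factors of $G$ and observes that the combined density
$$
\frac{\cos(2\theta s)+2\cosh(\pi s)\cosh(\lambda s)}{s\sinh(2\pi s)},
\qquad \theta=\log\bigl(\alpha+\sqrt{\alpha^2-1}\bigr),\quad \lambda=\cos^{-1}\beta,
$$
is non-negative, since $2\cosh(\pi s)\cosh(\lambda s)\ge2>|\cos(2\theta s)|$. The hypothesis $\beta\le1$ is exactly what makes the $(\cosh t+\beta)$-contribution the dominant positive term $\cosh(\lambda s)$ that absorbs the oscillation caused by $\alpha>1$. Your argument for that case needs to be repaired along these lines; the rest stands.
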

\begin{proof}
Assume $\beta>1$. Due to (\ref{(5.13)}) $G(t)$ is positive definite for $\alpha=0$ and
remains so for $\alpha \in (-1,0]$ as well thanks to positive definiteness of
$$
\frac{\cosh(t/2)}{\cosh(t/2)+\alpha}=1+\frac{-\alpha}{\cosh(t/2)+\alpha}
$$
(see Lemma \ref{lemm:hyperb}\,(2)).
When $\alpha \in (0,1)$, we have $\theta=\cos^{-1}\alpha \in (0,\pi/2)$ in (\ref{(5.10)}).
Thus, the dominant terms in the big bracket in the right side of (\ref{(5.10)}) are
$\cos(\lambda s)\sinh(\pi s)$ and $\sin(\lambda s)\cosh(\pi s)$ so that the quantity
in the big bracket takes both positive and negative values for $|s|$ large. 
To prove (i), it remains to show failure of positive definiteness for $\alpha \geq 1$.
However this follows from positive definiteness of
$$
\frac{\cosh(t/2)+\alpha}{\cosh(t/2) + \half}=1+\frac{\alpha-\half}{\cosh(t/2) + \half}
$$
for instance (and the already known failure of positive definiteness for $\alpha=\half$).

Next, assume $-1<\beta\le1$. The statement (ii) is obvious for $\alpha \in (-1,1]$, $G(t)$
being the product of two infinitely divisible functions under these circumstances. 
On the other hand, when $\alpha>1$, we have
$$
\log \bigl(
(1+\alpha)(1+\beta)G(t)
\bigr)
=\int_{-\infty}^{\infty}\left(
e^{ist}-1-ist
\right)
\left(\frac{\cos(2\theta s)}{s\sinh(2\pi s)}+\frac{\cosh(\lambda s)}
{s\sinh(\pi s)}\right)ds
$$
with $\theta=\log\left(\alpha+\sqrt{\alpha^2-1} \right)$ and $\lambda=\cos^{-1}\beta$
(by Lemma \ref{Lemma 5.4}). The density here can be written as
\begin{equation}\label{(5.14)} 
\frac{\cos(2\theta s)+2\cosh(\pi s)\cosh(\lambda s)}
{s\sinh(2\pi s)},
\end{equation}
which is certainly positive.
\end{proof}

\subsection{Analysis of
$\left(\cosh^k(t/2)(\cosh t+\beta)^m\right)^{-1}$}  \label{sect:infdiv2}

In this subsection we obtain a result which will be used in Theorem \ref{Theorem 6.6} of 
Section~\ref{sect:geombrdgpf} to obtain a bound for the interval in which
$g_{1,\nu,\lambda}$ is in $\cK^+$.

We assume $\beta > 1$ and $\alpha \in (-1,0]$ as in Corollary \ref{Corollary 5.5}\,(i). 
Under these circumstances the density (\ref{(5.14)}) is switched to
$$
\log\bigl((1+\alpha)(1+\beta)G(t)\bigr)
=\int_{-\infty}^\infty\left(e^{its}-1-ist\right)
{\cosh(2\theta s)+2\cos(\lambda s)\cosh(\pi s)\over s\sinh(2\pi s)}\,ds
$$
with $\theta=\cos^{-1}\alpha \in [\pi/2,\pi)$ and
$\lambda=\log\left(\beta+\sqrt{\beta^2-1}\right)$. 
Thus, the positive definite function $G(t)$ (Corollary \ref{Corollary 5.5}\,(i))
is infinitely divisible if and only if
$$
\cosh(2\theta s)+2\cos(\lambda s)\cosh(\pi s) \geq 0, \qquad s \in \bR.
$$ 
This is quite a delicate condition, but for the extreme value $\alpha=0$ 
(i.e., $\theta=\pi/2$) the condition simply means
$$
\left(1+2\cos(\lambda s)\right)\cosh(\pi s) \geq 0,
$$
and it is never fulfilled for any $\beta>1$ (which is exactly \cite[Theorem 15]{Ko3}).

We take higher order partial derivatives $\partial_{\beta}^{m-1}\partial_{\alpha}^{k-1}$
from the Fourier transform formula (\ref{(5.10)}) (with the variable $s$ fixed). It is
obvious that from the left side we get a scalar multiple of
$$
\int_{-\infty}^{\infty}
\frac{e^{ist}\,ds}{\bigl(\cosh(t/2)+\alpha\bigr)^k\bigl(\cosh t +\beta\bigr)^m},
$$
and for the special value $\alpha = 0$ the above integral reduces to
\begin{equation}\label{(5.15)} 
\int_{-\infty}^{\infty}
\frac{e^{ist}\,ds}{\cosh^k(t/2)\bigl(\cosh t +\beta\bigr)^m}.
\end{equation}
Therefore, behavior on the Fourier transform (\ref{(5.15)}) can be seen by computing
$\partial_{\beta}^{m-1}\partial_{\alpha}^{k-1}$ of the right side of (\ref{(5.10)})
at first and then by substituting $\alpha=0$.

The right side $R(\alpha,\beta)$ of the formula (\ref{(5.10)}) consists of the three terms:
\begin{linenomath} \begin{align}\label{(5.16)} 
R(\alpha,\beta)
&=
F_0(\alpha,\beta) \,\frac{\sinh(2\theta s)}{\sinh(2\pi s)}
-F_{c}(\alpha,\beta)\cos(\lambda s) \,\frac{\sinh(\pi s)}{\sinh(2\pi s)}
+F_{s}(\alpha,\beta)\sin(\lambda s) \,\frac{\cosh(\pi s)}{\sinh(2\pi s)}  \qquad 
\nn
\\  ~~ \nn \\
&=
F_0(\alpha,\beta) \,\frac{\sinh(2\theta s)}{\sinh(2\pi s)}
-
F_{c}(\alpha,\beta) \,\frac{\cos(\lambda s)}{2\cosh(\pi s)}
+F_{s}(\alpha,\beta) \,\frac{\sin(\lambda s)}{2\sinh(\pi s)}
 \end{align} \end{linenomath}
with
\begin{linenomath} \begin{align*}
F_0(\alpha,\beta)
&=
\frac{1}{\sqrt{1-\alpha^2}\bigl(2\alpha^2-1+\beta\bigr)},\\
F_c(\alpha,\beta)
&=
\frac{
\sqrt{\frac{\beta-1}{2}}}{\sqrt{\beta^2-1}\bigl(\frac{\beta-1}{2}+\alpha^2\bigr)},\\
F_s(\alpha,\beta)
&=
\frac{\alpha}{\sqrt{\beta^2-1}\bigl(\frac{\beta-1}{2}+\alpha^2\bigr)}.
 \end{align*} \end{linenomath}
We note
\begin{equation}\label{(5.17)} 
{d\theta\over d\alpha}=-\,\frac{1}{\sin \theta}=-\,\frac{1}{\sqrt{1-\alpha^2}},
\qquad
{d\lambda\over d\beta}=\frac{1}{\sqrt{\beta^2-1}},
\end{equation}
and consequently
\begin{linenomath} \begin{align*}
&
\partial_{\alpha} \sinh(2\theta s)=-\,\frac{2s\cosh(2\theta s)}{\sqrt{1-\alpha^2}}, \qquad
\partial_{\alpha} \cosh(2\theta s)=-\,\frac{2s\sinh(2\theta s)}{\sqrt{1-\alpha^2}},\\
&
\partial_{\beta} \sin(\lambda s)=\frac{s\cos(\lambda s)}{\sqrt{\beta^2-1}}, \qquad
\partial_{\beta} \cos(\lambda s)=-\,\frac{s\sin(\lambda s)}{\sqrt{\beta^2-1}}.
 \end{align*} \end{linenomath}

We begin with the first term in (\ref{(5.16)}). Since  $\sinh(2\theta s)$ and
$\cosh(2\theta s)$ behave like  constants against $\partial_{\beta}$,
$\partial_{\beta}^{m-1}\partial_{\alpha}^{k-1}$ of the first term is a polynomial of $s$
of degree at most $k-1$ with coefficients $\sinh(2\theta s)$,
$\cosh(2\theta s)$, $1/\sinh(2\pi s)$ and so on. Therefore, the substitution $\alpha=0$
(i.e., $\theta=\pi/2$) gives rise to a polynomial of $s$ of degree at most $k-1$ with
coefficients containing
$$
\frac{\sinh(2\theta s)}{\sinh(2\pi s)}\bigg|_{\theta=\pi/2}=\frac{1}{2\cosh(\pi s)},
\qquad 
\frac{\cosh(2\theta s)}{\sinh(2\pi s)}\bigg|_{\theta=\pi/2}=\frac{1}{2\sinh(\pi s)}. 
$$

The same procedure for the second and third terms in (\ref{(5.16)}) obviously gives rise
to a polynomial of $s$ of degree at most $m-1$. It is important to make sure that
the order is exactly $m-1$, and we will closely check the coefficient of $s^{m-1}$. 
For this purpose we begin with the third term in (\ref{(5.16)}) and we note
\begin{linenomath} \begin{align*}
&F_s(\alpha,\beta)
=
\frac{1}{2\sqrt{\beta^2-1}}
\left(\frac{1}{\alpha+i\sqrt{\frac{\beta-1}{2}}}
+
\frac{1}{\alpha-i\sqrt{\frac{\beta-1}{2}}}
\right),\\
&
\partial_{\alpha}^{k-1} F_{s}(\alpha,\beta)
=
\frac{(-1)^{k-1} (k-1)!}{2\sqrt{\beta^2-1}}
\left(
\frac{1}{\left(\alpha+i\sqrt{\frac{\beta-1}{2}}\right)^{k}}
+\frac{1}{\left(\alpha-i\sqrt{\frac{\beta-1}{2}}\right)^{k}}
\right),\\
&
\partial_{\alpha}^{k-1}
\left(F_{s}(\alpha,\beta)\cdot\frac{\sin(\lambda s)}{2\sinh(\pi s)}
\right) 
=
\partial_{\alpha}^{k-1} F_{s}(\alpha,\beta)
\,
\frac{\sin(\lambda s)}{2\sinh(\pi s)}.
 \end{align*} \end{linenomath}
So far no $s$-terms show up because $\lambda$ just depends on $\beta$. We then take
derivatives relative to $\beta$. It is plain to see that the highest $s^{m-1}$-term
arises from
$$
\partial_{\alpha}^{k-1} F_{s}(\alpha,\beta)
\,
\partial_{\beta}^{m-1}
\left(\frac{\sin(\lambda s)}{2\sinh(\pi s)}\right)
=\partial_{\alpha}^{k-1} F_{s}(\alpha,\beta)
\,
\frac{\partial_{\beta}^{m-1} \sin(\lambda s)}{2\sinh(\pi s)}.
$$
From (\ref{(5.17)}) we also easily observe
\begin{equation}\label{(5.18)} 
\hskip 0.2cm
\partial_{\beta}^{m-1} \sin(\lambda s)
=
\begin{cases}
{\displaystyle \pm\frac{s^{m-1}}{(\beta^2-1)^{(m-1)/2}}
\,\sin(\lambda s)+\mbox{lower $s$-terms}}
& \text{(for $m$ odd)},\\
{\displaystyle \pm\frac{s^{m-1}}{(\beta^2-1)^{(m-1)/2}}
\,\cos(\lambda s)+\mbox{lower $s$-terms}}
& \text{(for $m$ even)}.
\end{cases}
\end{equation}
By substituting $\alpha=0$, we observe
$$
\partial_{\alpha}^{k-1} F_{s}(\alpha,\beta)
\Big|_{\alpha=0}
=
\frac{(-1)^{k-1} (k-1)!}{2\sqrt{\beta^2-1}}
\left(
\frac{1}{\left(i\sqrt{\frac{\beta-1}{2}}\right)^{k}}
+\frac{1}{\left(-i\sqrt{\frac{\beta-1}{2}}\right)^{k}}
\right)
\neq 0
$$
as long as $k$ is even.  From the discussion so far, for $k$ even the highest
$s^{m-1}$-term arising from
$$
\partial_{\beta}^{m-1}
\partial_{\alpha}^{k-1}
\left(F_{s}(\alpha,\beta)\cdot\frac{\sin(\lambda s)}{2\sinh(\pi s)}
\right)\bigg|_{\alpha=0} 
$$
is a non-zero scalar (of course depending upon $\beta$) multiple of
$$
\frac{s^{m-1}\sin(\lambda s)}{\sinh(\pi s)} \quad (\mbox{for $m$ odd})
\quad
\mbox{or}
\quad
\frac{s^{m-1}\cos(\lambda s)}{\sinh(\pi s)} \quad (\mbox{for $m$ even})
$$
depending upon the parity of $m$.

We next move to the second term in (\ref{(5.16)}). We note
\begin{linenomath} \begin{align*}
&F_c(\alpha,\beta)
=
\frac{1}{2i\sqrt{\beta^2-1}}
\left(\frac{1}{\alpha+i\sqrt{\frac{\beta-1}{2}}}
-
\frac{1}{\alpha-i\sqrt{\frac{\beta-1}{2}}}
\right),\\
&\partial_{\alpha}^{k-1} F_{c}(\alpha,\beta)
=
\frac{(-1)^{k-1} (k-1)!}{2i\sqrt{\beta^2-1}}
\left(
\frac{1}{\left(\alpha+i\sqrt{\frac{\beta-1}{2}}\right)^{k}}
-\frac{1}{\left(\alpha-i\sqrt{\frac{\beta-1}{2}}\right)^{k}}
\right).
 \end{align*} \end{linenomath}
The presence of the minus sign this time in the big bracket enables us
to conclude
\bee
\partial_{\alpha}^{k-1} F_{c}(\alpha,\beta)\Big|_{\alpha=0} \neq 0
\eee
for $k$ odd.  Since the formula akin to (\ref{(5.18)}) is available to $\cos(\lambda s)$,
for $k$ odd the highest $s^{m-1}$-term arising from
\bee
\partial_{\beta}^{m-1}
\partial_{\alpha}^{k-1}
\left(F_{c}(\alpha,\beta)\,\frac{\cos(\lambda s)}{2\cosh(\pi s)}
\right)\bigg|_{\alpha=0} 
\eee
is a non-zero scalar multiple of
\bee
\frac{s^{m-1}\cos(\lambda s)}{\cosh(\pi s)} \quad (\mbox{for $m$ odd})
\quad
\mbox{or}
\quad
\frac{s^{m-1}\sin(\lambda s)}{\cosh(\pi s)} \quad (\mbox{for $m$ even})
\eee
this time.

Summing up the discussions so far, we conclude: For $s$ large the leading terms of\break
$\partial_\beta^{m-1}\partial_\alpha^{k-1}R(\alpha,\beta)$ (which arise form the first
term and the last two terms  in (\ref{(5.16)}) respectively) are  non-zero scalar
multiples of
$$
\frac{s^{k-1}}{e^{\pi s}} \quad \mbox{and} \quad
\frac{s^{m-1}\sin(\lambda s+\delta)}{e^{\pi s}}
$$
regardless of the parity of $k$.

We are now ready to prove the following result:
\begin{thm}\label{Theorem 5.6}
We assume $\beta>1$ and set
$$
H(t)\equiv \frac{1}{\cosh^k(t/2)\bigl(\cosh t+\beta\bigr)^m}
$$
for positive integers $k,m$.
\begin{itemize}
\item[\rm(i)] $H(t)$ is positive definite if and only if $k \geq m$.
\item[\rm(ii)] $H(t)$ is infinitely divisible if and only if $k \geq 2m$.
\end{itemize}
\end{thm}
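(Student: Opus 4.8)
The plan is to handle the two parts with different tools: part (i) by splitting off a soft ``if'' direction from a Fourier-analytic ``only if'' direction, and part (ii) entirely through the Kolmogorov (L\'evy--Khintchine) representation combined with Lemma~\ref{Lemma 5.4}. The point is that almost all of the genuinely hard analysis is already in hand, so the work is chiefly to assemble it correctly.

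For the ``if'' direction of (i) I would avoid computing the Fourier transform of $H$ altogether. The base case $k=m=1$ is settled by \eqref{(5.13)}, whose right-hand side is manifestly nonnegative since $\sqrt{2/(\beta+1)}<1$ for $\beta>1$; hence $\bigl(\cosh(t/2)(\cosh t+\beta)\bigr)^{-1}$ is positive definite. For general $k\ge m$ I would then write
\[
H(t)=\left(\frac{1}{\cosh(t/2)(\cosh t+\beta)}\right)^{m}\cdot\frac{1}{\cosh^{k-m}(t/2)},
\]
and invoke the fact that products of positive definite functions are again positive definite: the first factor is a power of a positive definite function, the second is a power of the positive definite $1/\cosh(t/2)$, so the product is positive definite. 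For the ``only if'' direction I would use the leading-term analysis carried out just before the theorem, which shows that for large $s$ the Fourier transform of $H$ equals, up to the positive factor $e^{-\pi s}$, a nonzero combination of $s^{k-1}$ and the oscillatory term $s^{m-1}\sin(\lambda s+\delta)$. When $k<m$ the oscillatory term carries the higher power of $s$ and therefore dominates, forcing the transform to be negative for suitable large $|s|$; by Bochner's theorem $H$ is then not positive definite.

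For part (ii) I would pass to logarithms and apply Lemma~\ref{Lemma 5.4}. Writing $\log\bigl((1+\beta)^m H(t)\bigr)=k\log\frac{1}{\cosh(t/2)}+m\log\frac{1+\beta}{\cosh t+\beta}$ and using Lemma~\ref{Lemma 5.4}(i) with $\theta=\pi/2$, $a=\tfrac12$ (so that $\cosh(\pi s)/\sinh(2\pi s)=1/(2\sinh(\pi s))$) together with Lemma~\ref{Lemma 5.4}(ii) with $a=1$, $\cosh\lambda=\beta$, I obtain the single representation
\[
\log\bigl((1+\beta)^m H(t)\bigr)=\int_{-\infty}^{\infty}(e^{its}-1-ist)\,
\frac{\tfrac{k}{2}+m\cos(\lambda s)}{s\sinh(\pi s)}\,ds.
\]
Since $s\sinh(\pi s)>0$ for $s\neq0$, the L\'evy density is nonnegative for all $s$ exactly when $\tfrac{k}{2}+m\cos(\lambda s)\ge0$, i.e.\ when its minimum $\tfrac{k}{2}-m$ is nonnegative, that is $k\ge2m$. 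By the Kolmogorov theorem this yields infinite divisibility when $k\ge2m$; for the converse I would use uniqueness of the representing measure, so that if $H$ were infinitely divisible the density above would have to be nonnegative, which fails once $k<2m$ at any $s_0$ with $\cos(\lambda s_0)=-1$.

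The main obstacle is the ``only if'' half of (i): it is the single place that genuinely needs quantitative control of the Fourier transform rather than a product argument or a sign check on a L\'evy density, and its delicate point is verifying that the coefficient of the dominant oscillatory term $s^{m-1}\sin(\lambda s+\delta)$ is nonzero. This is precisely what the residue computation and leading-term bookkeeping preceding the theorem provide, so in the writeup the task is to cite that analysis cleanly rather than to reprove it. The only remaining care, minor by comparison, is the normalization (using $H(0)=(1+\beta)^{-m}$) and the uniqueness bookkeeping in part (ii), ensuring the explicitly computed density is indeed the Kolmogorov measure.
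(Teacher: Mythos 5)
Your proposal is correct and follows essentially the same route as the paper: the product decomposition $H=\bigl(\cosh(t/2)(\cosh t+\beta)\bigr)^{-m}\cdot\cosh^{-(k-m)}(t/2)$ together with \eqref{(5.13)} for the ``if'' half of (i), the leading-term analysis of the Fourier transform preceding the theorem for the ``only if'' half, and the Kolmogorov representation via Lemma~\ref{Lemma 5.4} giving the L\'evy density $\bigl(k+2m\cos(\lambda s)\bigr)/\bigl(2s\sinh(\pi s)\bigr)$ for (ii). Your explicit appeal to uniqueness of the representing measure in the converse of (ii) only makes explicit what the paper leaves implicit.
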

\begin{proof} We set $\lambda=\log\left(\beta+\sqrt{\beta^2-1} \right)>0$ as in
Theorem \ref{Theorem 5.3}.

(i) \ Firstly we assume $k \geq m$. Since
$$
\frac{1}{\cosh^k(t/2)\bigl(\cosh t+\beta\bigr)^m}=
\frac{1}{\cosh^{k-m}(t/2)}
\left(\frac{1}{\cosh(t/2)\bigl(\cosh t+\beta\bigr)}\right)^m,
$$
positive definiteness of $H(t)$ follows from Corollary \ref{Corollary 5.5}\,(i)
(or rather the paragraph before the corollary). On the other hand, when $k<m$, for $s$
large the dominant term in (\ref{(5.15)}) (i.e., the Fourier transform of $H(t)$) is
$$
\frac{s^{m-1}\sin(\lambda s+\delta)}{e^{\pi s}}
$$
as was mentioned in the paragraph right before the theorem. Thus, the Fourier transform
admits both positive and negative values and hence $H(t)$ cannot be positive definite.

(ii) \ By Lemma \ref{Lemma 5.4} we have
$$
\log\bigl((1+\beta)^mH(t)\bigr)
=\int_{-\infty}^{\infty}\left(e^{ist}-1-ist\right)F(s)\,ds
$$
with
$$
F(s)=\frac{k}{2s\sinh(\pi s)}+\frac{m\cos(\lambda s)}{s\sinh(\pi s)}
=\frac{k+2m\cos(\lambda s)}{2s\sinh(\pi s)}.
$$
Thus, $H(t)$ is infinitely divisible if and only if $k+2m\cos(\lambda s) \geq 0$, i.e.,
$k \geq 2m$.
\end{proof}

Note that the optimal case in (ii) (i.e., case $k=2m$) corresponds to infinite divisibility
of $\bigl((\cosh t+1)(\cosh t+\beta)\bigr)^{-1}$ (see \cite[\S 7.1]{Ko3}) because of
$\cosh^2(t/2)=(\cosh t+1)/2$. The function $1/\cosh(t/2)$ is positive definite (and
indeed infinitely divisible) while $1/(\cosh t+\beta)$ (with $\beta >1$) is not (see
Lemma \ref{lemm:hyperb}\,(2)). Thus, intuition might suggest that as far as the function
$H(t)$ is concerned one has higher (resp., lower) chance for positive definiteness and/or
infinite  divisibility as $k$ (resp., $m$) increases. The theorem completely clarifies
where proper balance is taken.

\section{Proofs of results in Section~\ref{sect:examps}}  \label{sect:exampfs}

\subsection{Results on WYD family}  \label{sect:WYD}

For the functions $k_p^{\WYD}$, $p \in [-1,2]$, defined by \eqref{WYD}, we will prove the
next results stated in Example \ref{ex:WYD}.

\begin{thm} \label{Theorem 6.1} \
\begin{itemize}
\item[\rm(a)] The function $k_p^\WYD$ belongs to $\cK^+$ if and only if $p \in [0,1]$,
\item[\rm(b)] The function $k_p^\WYD$ belongs to $\cK^-$ if and only if
$p \in \big[-1,-\half \big] \cup \big[\tfrac{3}{2}, 2 \big]$.
\end{itemize}
\end{thm}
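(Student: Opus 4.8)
The plan is to translate both statements, via Theorem~\ref{thm:CP2}, into the positive definiteness of explicit ratios of hyperbolic functions, so that part (a) falls directly to Theorem~\ref{Theorem 5.1} and part (b) to an asymptotic growth estimate together with a single base computation. Since $k_p^\WYD=k_{1-p}^\WYD$, membership in $\cK^+$ and $\cK^-$ is invariant under $p\mapsto1-p$, so throughout I would restrict to $p\in\bigl[\half,2\bigr]$ and recover the rest by this symmetry. The starting point is the identity
\begin{equation*}
e^{t/2}k_p^\WYD(e^t)=\frac{1}{p(1-p)}\cdot\frac{\sinh(pt/2)\sinh((1-p)t/2)}{\sinh^2(t/2)},
\end{equation*}
obtained by writing each factor as $1-e^{at}=-2e^{at/2}\sinh(at/2)$. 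Because positive definiteness is invariant under the rescaling $t\mapsto 2s$, this function is positive definite exactly when $\sinh(ps)\sinh((1-p)s)/\sinh^2 s$ is (the scalar $1/p(1-p)$ being positive on $\bigl[\half,1\bigr]$).

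For part (a), on $\bigl[\half,1\bigr)$ I would apply Theorem~\ref{Theorem 5.1} with $a=p$ and $b=1-p$, both lying in $(0,1)$, to conclude positive definiteness and hence $k_p^\WYD\in\cK^+$; the endpoint $p=1$ follows because $\cK^+$ is closed under pointwise limits, or directly because $k_1^\WYD=\log x/(x-1)$ yields the BKM map \eqref{BKM}. For $p\in(1,2]$ I would use $\sinh((1-p)s)=-\sinh((p-1)s)$ to rewrite the function as the positive multiple $\frac{1}{p(p-1)}\sinh(ps)\sinh((p-1)s)/\sinh^2 s$, with $a=p>1$; Theorem~\ref{Theorem 5.1} then shows it is \emph{not} positive definite, so $k_p^\WYD\notin\cK^+$. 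The $p\mapsto1-p$ symmetry supplies $p\in[-1,0)$ and hence (a) in full.

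For part (b) I would pass to the dual through \eqref{relK^+K^-}, so that $k_p^\WYD\in\cK^-\iff\wh k_p^\WYD\in\cK^+$, and compute
\begin{equation*}
e^{t/2}\wh k_p^\WYD(e^t)=p(1-p)\cdot\frac{\sinh^2(t/2)}{\sinh(pt/2)\sinh((1-p)t/2)},
\end{equation*}
the reciprocal of the function above. This equals $1$ at $t=0$, so by the boundedness property of positive definite functions it cannot be positive definite once it is unbounded. A direct asymptotic count gives growth $\sim C\,e^{(2-|p|-|1-p|)|t|/2}$, and the exponent is strictly positive precisely for $p\in\bigl(-\half,\tfrac32\bigr)$; this single estimate eliminates all such $p$ from $\cK^-$ in one stroke, including the limiting cases $p=\half,1$.

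It then remains to prove the positive ``if'' direction for $p\in\bigl[\tfrac32,2\bigr]$, which I expect to be the main obstacle, since boundedness alone does not give positive definiteness. The plan is to settle the single base point $p=\tfrac32$ explicitly: using $\sinh(3s/2)=\sinh(s/2)\bigl(3+4\sinh^2(s/2)\bigr)$ and $\cosh^2(s/2),\sinh^2(s/2)=(\cosh s\pm1)/2$ one finds
\begin{equation*}
e^{t/2}\wh k_{3/2}^\WYD(e^t)=\frac34+\frac34\cdot\frac{1}{2\cosh(t/2)+1},
\end{equation*}
which is positive definite by Lemma~\ref{lemm:hyperb}\,(2) (with $\beta=\half\le1$) plus a constant, so $k_{3/2}^\WYD\in\cK^-$. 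To propagate this across the interval I would invoke the $\leqc$-monotonicity of the family established in Example~\ref{ex:WYD}: for $\tfrac32\le p\le2$ one has $x^{-1/2}\leqc k_{3/2}^\WYD\leqc k_p^\WYD$, and transitivity of $\leqc$ gives $x^{-1/2}\leqc k_p^\WYD$, that is, $k_p^\WYD\in\cK^-$. The symmetry $p\mapsto1-p$ then yields $p\in[-1,-\half]$, completing (b). The only delicate points are the $p=\tfrac32$ reduction and careful bookkeeping of the growth exponent; everything else is partial-order formalism and appeal to results already in hand.
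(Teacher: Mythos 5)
Your proposal is correct and follows essentially the same route as the paper: reduce via Theorem~\ref{thm:CP2} to positive definiteness of $\sinh$-ratios, settle (a) with Theorem~\ref{Theorem 5.1}, kill $p\in\bigl[\half,\tfrac32\bigr)$ in (b) by unboundedness, compute the $p=\tfrac32$ case explicitly as a constant plus a multiple of $1/\bigl(\cosh(t/2)+\half\bigr)$, and propagate to $\bigl[\tfrac32,2\bigr]$. The only cosmetic difference is that for the last step the paper factors $\sinh^2t/\bigl(\sinh(pt)\sinh((p-1)t)\bigr)$ through the $p=\tfrac32$ case and applies Lemma~\ref{lemm:hyperb}\,(1) directly, whereas you invoke the $\leqc$-monotonicity of the family from Example~\ref{ex:WYD} together with transitivity; these are the same observation in different clothing, and both are legitimate since that monotonicity is established independently of Theorem~\ref{Theorem 6.1}.
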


\begin{proof}
We may and do assume $p \in \bigl[\half,2\bigr]$ in view of the symmetry
$k_p^\WYD=k_{1-p}^\WYD$, and set
\begin{linenomath} \begin{align*}
&
f_p(t)\equiv e^{t}k_p^\WYD(e^{2t})=\frac{1}{p(1-p)}\cdot
\frac{\sinh (pt) \sinh ((1-p)t)}{\sinh^2 t},\\
&
g_p(t)\equiv e^{t}/k_p^\WYD(e^{-2t})=p(1-p)\cdot
\frac{\sinh^2 t}{\sinh (pt) \sinh ((1-p)t)}
\notag
 \end{align*} \end{linenomath}
for convenience. Theorem \ref{thm:CP2} says that we have to determine when $f_p(t)$ and
$g_p(t)$ are positive definite.

Positive definiteness for $f_p(t)$ with $p \in \bigl[\half,1\bigr]$ is well-known (where
$f_1(t)$ is understood  as $t/\sinh t$). When $p \in (1,2]$, we have
$$
f_p(t)=\frac{1}{p(p-1)}\cdot\frac{\sinh (pt) \sinh ((p-1)t)}{\sinh^2 t},
$$
which fails to be positive definite thanks to Theorem \ref{Theorem 5.1} (due to the
presence of $\sinh (pt)$ with $p>1$ in the numerator). Thus, we have shown that $f_p(t)$
is positive definite if and only if $p \in \bigl[\half,1\bigr]$, that is, (a) is proved.

To prove (b), we next check positive definiteness for $g_p(t)$. When
$p \in \bigl[\half,1\bigr]$, we have
$$
\lim_{t \to \pm\infty}g_p(t)=+\infty
$$
(where $g_1(t)$ is understood as $\sinh t/t$) so that $g_p(t)$ cannot be positive definite. 
We now move to the case $p \in (1,2]$ so that we use the expression
$$
g_p(t)=p(p-1)\,\frac{\sinh^2 t}{\sinh (pt) \sinh ((p-1)t)}.
$$
When $2>p+(p-1)$ (i.e., $p < {3\over2}$), $g_p(t)$ once again diverges as $t \to \pm\infty$
and fails to be positive definite. Thus, it remains to show positive definiteness for
$p \geq {3\over2}$. For the extreme value $p={3\over2}$ we compute
\begin{linenomath} \begin{align*}
g_{3/2}(t)
&=
\frac{3}{4}\cdot\frac{\sinh^2 t}{\sinh(3t/2)\sinh(t/2)}
=\frac{3}{4}\cdot\frac{\sinh^2 t}{\sinh^2(t/2)\left(4\cosh^2(t/2)-1\right)}\\
&=
\frac{3\cosh^2(t/2)}{4\cosh^2(t/2)-1}
=
\frac{3\left(\cosh t+1\right)/2}{2\left(\cosh t+1\right)-1}
=\frac{3}{4}
\left(1+\frac{\half}{\cosh t+\half}
\right).
 \end{align*} \end{linenomath}
Since $1/\left(\cosh t + \half\right)$ is positive definite (see
Lemma \ref{lemm:hyperb}\,(2)), so is $g_{3/2}(t)$. Finally, the obvious identity
$$
\frac{\sinh^2 t}{\sinh (pt) \sinh ((p-1)t)}
=
\frac{\sinh^2 t}{\sinh(3t/2)\sin(t/2)}
\cdot
\frac{\sinh(3t/2)\sin(t/2)}{\sinh(pt)\sin((p-1)t)}
$$
takes care of the remaining case (i.e., $p \in \bigl({3\over2},2\bigr]$).
\end{proof}

\subsection{Proofs for Example \ref{Example 4.3}}   \label{sect:convextpf}

We now prove the claims in  Example \ref{Example 4.3}.

\begin{thm}\label{thm:newext}
We assume $\nu \in (0,1)$ and $\lambda\in[0,1]$. Then, the function 
$a_{1,\nu ,\lambda}$ defined in \eqref{afunc} belongs to ${\mathcal K}^+$ if and only if 
\be  \label{nucond}
\lambda \leq \frac{2\sqrt{\nu}}{ \left(1+\sqrt{\nu}\right)^2}
 = \frac{2}{ \big(\nu^{1/4} + \nu^{-1/4} \big)^2} .
\ee
Moreover,  $a_{1,\nu ,\lambda}$ is an extreme point in ${\mathcal K}^+$ if and only if 
equality holds in \eqref{nucond}.
\end{thm}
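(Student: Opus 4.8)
The plan is to use the characterization in Theorem~\ref{thm:CP2}(c): $a_{1,\nu,\lambda}\in\cK^+$ if and only if $e^{t/2}a_{1,\nu,\lambda}(e^t)$ is positive definite. First I would assemble this function explicitly. Using $e^{t/2}\cdot\frac{2}{1+e^t}=1/\cosh(t/2)$ and the expression $e^{t/2}k_\nu^\ext(e^t)=C\,\frac{\cosh(t/2)}{\cosh t+\beta}$ with $C=\frac{(1+\nu)^2}{2\nu}$ and $\beta=\frac{1+\nu^2}{2\nu}$ (both recorded in the proof of Proposition~\ref{prop:simpord} and in Example~\ref{Example 4.1}), and then replacing $\cosh^2(t/2)$ by $(\cosh t+1)/2$, I would bring the two terms over the common denominator $\cosh(t/2)(\cosh t+\beta)$ and split off the bounded part, arriving at
\[
e^{t/2}a_{1,\nu,\lambda}(e^t)=\frac{P}{\cosh(t/2)}-\frac{\lambda C(\beta-1)}{2}\cdot\frac{1}{\cosh(t/2)(\cosh t+\beta)},\qquad P=\tfrac{\lambda C}{2}+(1-\lambda).
\]
The computation that makes everything collapse is the identity $\beta+1=C$, valid precisely because $(1+\nu)^2=(1+\nu^2)+2\nu$.

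Next I would take Fourier transforms. The transform of $1/\cosh(t/2)$ is $2\pi/\cosh(\pi s)\ge0$, and the transform of $\bigl(\cosh(t/2)(\cosh t+\beta)\bigr)^{-1}$ is supplied by \eqref{(5.13)} (the $\alpha=0$ case of Theorem~\ref{Theorem 5.3}), which applies since $\beta>1$ for $\nu\in(0,1)$. Substituting, the factor $\beta-1$ cancels, and using $\beta+1=C$ the two $1/\cosh(\pi s)$ contributions combine while the constant parts cancel against $P$, leaving
\[
\frac{1}{2\pi}\int_{-\infty}^\infty e^{ist}\,e^{t/2}a_{1,\nu,\lambda}(e^t)\,dt=\frac{(1-\lambda)+\lambda\sqrt{C/2}\,\cos(\lambda_0 s)}{\cosh(\pi s)},
\]
with $\lambda_0=\log(\beta+\sqrt{\beta^2-1})=-\log\nu>0$. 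Because $\cos(\lambda_0 s)$ ranges over all of $[-1,1]$, nonnegativity of this transform for every $s$ is equivalent to $1-\lambda\ge\lambda\sqrt{C/2}$; and $\sqrt{C/2}=\frac{1+\nu}{2\sqrt\nu}$ turns this into exactly $\lambda\le\frac{2\sqrt\nu}{(1+\sqrt\nu)^2}$, i.e.\ \eqref{nucond}. By Bochner's theorem this settles the membership half of the theorem.

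For the extremality claim I would exploit that $\cK$ is a Bauer simplex with extreme points $k_\nu^\ext$, so the representing measure of Theorem~\ref{thm:intrep} is unique. Writing $\lambda_*=\frac{2\sqrt\nu}{(1+\sqrt\nu)^2}$, the function $a_{1,\nu,\lambda_*}$ has representing measure $\lambda_*\delta_\nu+(1-\lambda_*)\delta_1$. If $a_{1,\nu,\lambda_*}=(1-\theta)g_1+\theta g_2$ with $g_1,g_2\in\cK^+$ and $\theta\in(0,1)$, uniqueness forces each representing measure to be supported in $\{\nu,1\}$, so $g_i=a_{1,\nu,\mu_i}$ for some $\mu_i\in[0,1]$; membership in $\cK^+$ gives $\mu_i\le\lambda_*$, while matching the mass at $\nu$ gives $(1-\theta)\mu_1+\theta\mu_2=\lambda_*$, which forces $\mu_1=\mu_2=\lambda_*$ and hence $g_1=g_2$. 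Thus equality in \eqref{nucond} yields an extreme point; consistently, at $\lambda=\lambda_*$ the transform above equals $c\,(1+\cos(\lambda_0 s))/\cosh(\pi s)$ with $c>0$, which vanishes periodically, the hallmark of a boundary point. Conversely, for $\lambda\in(0,\lambda_*)$ the function is the midpoint of the two distinct members $a_{1,\nu,\lambda\pm\epsilon}\in\cK^+$ for small $\epsilon>0$, so it is not extreme. (The degenerate endpoint $\lambda=0$ simply returns $k_1^\ext$, the smallest element of $\cK^+$, which is trivially extreme.)

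The main obstacle is essentially bookkeeping: carrying out the reduction so that the cancellation $\beta+1=C$ and the cancellation of the constant terms against $P$ are transparent, and invoking \eqref{(5.13)} with the correct normalization. The genuinely hard analytic input, namely the Fourier transform in Theorem~\ref{Theorem 5.3}, is already established in Section~\ref{sect:cosh}; here the only delicate point is ensuring that the $\cos(\lambda_0 s)$ term actually attains the value $-1$ (guaranteed by $\lambda_0>0$), so that the positivity requirement collapses to a single scalar inequality, and recognizing that the Bauer simplex structure is what reduces extremality in the whole of $\cK^+$ to the boundary property along the segment.
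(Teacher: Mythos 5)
Your proposal is correct and follows essentially the same route as the paper's proof: the same decomposition of $e^{t/2}a_{1,\nu,\lambda}(e^t)$ into a multiple of $1/\cosh(t/2)$ minus a multiple of $\bigl(\cosh(t/2)(\cosh t+\beta)\bigr)^{-1}$ with $\beta=(1+\nu^2)/2\nu$, the same appeal to \eqref{(5.13)} to reduce positive definiteness to the single scalar inequality $\lambda\sqrt{(\beta+1)/2}\le 1-\lambda$, and the same uniqueness-of-representing-measure (Bauer simplex) argument for extremality. Your parenthetical about $\lambda=0$ actually flags a point the paper glosses over: $a_{1,\nu,0}=k_1^\ext$ is itself extreme in $\cK^+$, so the ``only if'' half of the extremality claim must be read as restricted to $\lambda>0$.
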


\begin{proof}
With $\beta=(1+\nu^2)/2\nu \ (>1)$ we compute
\begin{linenomath} \begin{align*}
e^{t/2}a_{1,\nu,\lambda}(e^t)
&=
\lambda(\beta+1)\,\frac{\cosh(t/2)}{\cosh t+\beta}+(1-\lambda)\,\frac{1}{\cosh(t/2)}\\
&=
\frac{\lambda(\beta+1)\left(\cosh t+1\right)/2+(1-\lambda)\left(\cosh t+\beta\right)}
     {\cosh(t/2)\left(\cosh t+\beta\right)}\\
&=
\frac{\bigl(\lambda(\beta-1)/2+1\bigr)\left(\cosh t+\beta\right)-\lambda(\beta^2-1)/2}
     {\cosh(t/2)\left(\cosh t+\beta\right)}\\
&=
\frac{\lambda(\beta-1)/2+1}{\cosh(t/2)}
-\,
\frac{\lambda(\beta^2-1)/2}{\cosh(t/2)\left(\cosh t+\beta\right)}.
 \end{align*} \end{linenomath}
Let us recall
(\ref{(5.13)})
(where the symbol $\lambda$ for $\cosh^{-1}\beta$ there is changed to $\alpha$ to avoid
the obvious confusion) and
$$
\int_{-\infty}^{\infty}\frac{e^{ist}\,dt}{\cosh(t/2)}
=\frac{2\pi}{\cosh(\pi s)}. 
$$
The Fourier transform is thus given by
\begin{linenomath} \begin{align*}
\frac{1}{\pi}\int_{-\infty}^{\infty}
e^{t/2}a_{1,\nu,\lambda}(e^t)\,e^{ist}\,dt
&=
\frac{\lambda(\beta-1)+2}{\cosh(\pi s)}
-\,
\frac{\lambda(\beta+1)\left(1-\sqrt{\frac{2}{\beta+1}}\cos(\alpha s)\right)}{\cosh(\pi s)}\\
&=
\frac{N(s)}{\cosh(\pi s)}
 \end{align*} \end{linenomath}
with the numerator
\begin{linenomath} \begin{align*}
N(s)
&\equiv 2(1-\lambda)+\lambda\sqrt{2\left(\beta+1\right)}\cos(\alpha s).
 \end{align*} \end{linenomath}
This computation says that $e^{t/2}a_{1,\nu,\lambda}(e^t)$ is positive definite 
(i.e., $a_{1,\nu,\lambda} \in {\mathcal K}^+$) if and only if
$$
\lambda\sqrt{2\left(\beta+1\right)} \leq 2(1-\lambda).
$$
It is obviously satisfied for $\lambda=0$ (which corresponds  to the obvious positive
definiteness of $e^{t/2}k_1^{\ext}(e^t)$) while for $\lambda >0$ the requirement is
the same as
$$
\sqrt{\frac{1+\beta}{2}} \leq \frac{1-\lambda}{\lambda}
\ \Longleftrightarrow \
\lambda \leq \left(\sqrt{\frac{1+\beta}{2}}+1\right)^{-1}.
$$
Finally, we compute 
$$
\sqrt{\frac{1+\beta}{2}}+1
=
\sqrt{\frac{(1+\nu)^2}{4\nu}}+1
=\frac{\left(1+\sqrt{\nu}\right)^2}{2\sqrt{\nu}},
$$
which proves the first part.

Now, we set $\lambda(\nu)\equiv2\sqrt\nu/(1+\sqrt\nu)^2$ and prove the second
part. It is obvious that $a_{1,\nu ,\lambda}$ is not an extreme point of $\cK^+$ if
$\lambda<\lambda(\nu)$. To prove the converse we assume that
$a_{1,\nu ,\lambda(\nu)}(x)=\lambda k_1(x)+(1-\lambda)k_2(x)$ with some
$\lambda \in (0,1)$, $k_1,k_2 \in {\mathcal K}^+$ and hence
$$
k_i(x)=\int_{[0,1]}k_{\nu}^{ext}(x)\,dm_i(\nu), \qquad i=1,2,
$$
with representing probability measures $m_i$ on $[0,1]$. From the uniqueness of a
representing measure we have
$$
\lambda m_1 + (1-\lambda) m_2=(1-\lambda (\nu))\delta_1+\lambda(\nu)\delta_{\nu }. 
$$
In particular, we have
\begin{equation}\label{(6.2)} 
\left\{
\begin{array}{l}
\lambda m_1(\{1\})+(1-\lambda)m_2(\{1\})=1-\lambda (\nu ),\\
\lambda m_1(\{\nu \})+(1-\lambda)m_2(\{\nu \})=\lambda (\nu ),
\end{array}
\right.
\end{equation}
and they sum up to
\begin{equation}\label{(6.3)} 
\lambda \bigl(m_1(\{1\})+m_1(\{\nu \}\bigr)
+(1-\lambda)\bigl(m_2(\{1\})+m_2(\{\nu \})\bigr)=1.
\end{equation}
On the other hand, we have $m_i(\{1\})+m_i(\{\nu \}) \leq 1$ ($i=1,2$) because $m_i$'s are
probability measures. Therefore, (\ref{(6.3)}) guarantees $m_i(\{1\})+m_i(\{\nu \}) = 1$
($i=1,2$). Hence, both of $m_1, m_2$ are supported on the two-point set $\{\nu ,1\}$,
that is, of the form 
$$
m_i=(1-a_i)\delta_1+a_i\delta_{\nu } \quad (i=1,2)
$$
with $a_i=m_i(\{\nu \}) \in [0,1]$. 
Since $k_i \in {\mathcal K}^+$ ($i=1,2$), the first part of the theorem implies
$a_i \leq \lambda (\nu)$ and hence the second equation of (\ref{(6.2)}) forces
$a_1=a_2=\lambda (\nu)$, i.e., $k_1=k_2=a_{1,\nu ,\lambda (\nu )}$.
\end{proof}

Since $2\sqrt{\nu}/\left(1+\sqrt{\nu}\right)^2 < \half$  (with $\nu \neq 1$), we have
$a_{1,\nu,\lambda} \not\in {\mathcal K}^+$  as long as $\lambda \geq \half$ (and
$\nu \in [0,1)$).  Choose the extreme value
$\lambda=\lambda(\nu)=\left(\sqrt{(\beta+1)/2}+1\right)^{-1}$ with $\beta=(1+\nu^2)/2\nu$.
Then, it is straightforward to compute
$$
e^{t/2}a_{1,\nu,\lambda_0}(e^t)
=
\sqrt{\frac{\beta+1}{2}} \cdot \frac{\cosh t +\sqrt{2(\beta+1)}-1}
                                    {\cosh(t/2)\left(\cosh t+\beta \right)}.
$$
It is not clear if this positive definite function is infinitely divisible.

\subsection{Results on geometric bridges}  \label{sect:geombrdgpf}

We consider  the geometric bridges $g_{1,\nu,\lambda}(x)$, $0 \leq \lambda \leq 1$, 
between $k_1^{\ext}$ and $k_{\nu}^{\ext}$ with $\nu \in [0,1)$ given by \eqref{(4.7)}
and described  in Example \ref{ex:geom.gen}.
Since the case $\nu=0$ was settled in Example \ref{ex:geombrdg}:, we assume
$\nu \in (0,1)$. Our main result is
\begin{thm}  \label{thm:geombd}
For each fixed $\nu \in (0,1)$, there is a critical $\lambda_c$
{\rm(}dependent on $\nu${\rm)} such that the function $g_{1,\nu,\lambda}$ is in $\cK^+$
for $\lambda \in \big[0,\lambda_c \big]$. Moreover,
\be  \label{geom.crit}
\frac{1}{4} \leq \lambda_c(\nu) \leq \frac{1}{3} \qquad
\mbox{for each}\quad \nu \in (0,1).
\ee
\end{thm}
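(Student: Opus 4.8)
The plan is to convert the entire statement into the positive definiteness of a single scalar function and then read off both bounds from the Lévy--Khintchine machinery of Section~\ref{sect:proofs}. By Theorem~\ref{thm:CP2}, $g_{1,\nu,\lambda}\in\cK^+$ iff $\phi_\lambda(t):=e^{t/2}g_{1,\nu,\lambda}(e^t)$ is positive definite. Splitting $e^{t/2}=(e^{t/2})^{1-2\lambda}(e^t)^\lambda$ and distributing it over the two factors of \eqref{(4.7)}, and using $2e^{t/2}/(1+e^t)=1/\cosh(t/2)$ together with $(1+\nu)^2e^t/[(e^t+\nu)(1+\nu e^t)]=(\beta+1)/(\cosh t+\beta)$, I would first record the closed form
\[
\phi_\lambda(t)=(\beta+1)^\lambda\,\frac{1}{\cosh^{1-2\lambda}(t/2)\,(\cosh t+\beta)^\lambda},
\qquad \beta:=\frac{1+\nu^2}{2\nu}>1 .
\]
This displays $\phi_\lambda$ as a product of (generally non-integer) powers of exactly the two functions whose \emph{integer} powers were analysed in Theorem~\ref{Theorem 5.6}, which is the fact I will ultimately exploit.

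To obtain the threshold structure I would establish the $\leqc$-monotonicity of $g_{1,\nu,\lambda}$ in $\lambda$. By Definition~\ref{def:order} this is the positive definiteness of
\[
\frac{g_{1,\nu,\lambda_1}(e^t)}{g_{1,\nu,\lambda_2}(e^t)}
=\Bigl(\tfrac{4\nu}{(1+\nu)^2}\Bigr)^{\lambda_2-\lambda_1}\Bigl(\frac{\cosh t+\beta}{\cosh t+1}\Bigr)^{\lambda_2-\lambda_1},\qquad \lambda_1\le\lambda_2 .
\]
The key point is that $(\cosh t+\beta)/(\cosh t+1)$ is \emph{infinitely divisible}: subtracting the two instances of Lemma~\ref{Lemma 5.4}\,(ii) for $\log(\cosh t+\beta)$ and $\log(\cosh t+1)$ gives $\log\frac{\cosh t+\beta}{\cosh t+1}=\mathrm{const}+\int(e^{its}-1-ist)\frac{1-\cos(\lambda_0 s)}{s\sinh(\pi s)}\,ds$ with $\lambda_0=\cosh^{-1}\beta$, and the density $\frac{1-\cos(\lambda_0 s)}{s\sinh(\pi s)}\ge0$. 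Hence every positive power is positive definite, so $g_{1,\nu,\lambda_1}\leqc g_{1,\nu,\lambda_2}$ whenever $\lambda_1\le\lambda_2$. Since $\cK^+=\{k:k\leqc x^{-1/2}\}$ and $\leqc$ is transitive, the good set $S:=\{\lambda:g_{1,\nu,\lambda}\in\cK^+\}$ is downward closed; it is also closed because $\cK^+$ is closed under pointwise limits and $\lambda\mapsto g_{1,\nu,\lambda}$ is pointwise continuous. Thus $S=[0,\lambda_c]$, giving the critical value.

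For the lower bound $\lambda_c\ge\tfrac14$ I would apply the same Kolmogorov representation to $\phi_\lambda$ directly: Lemma~\ref{Lemma 5.4}\,(i) at $a=\tfrac12,\theta=\tfrac\pi2$ yields density $\frac{1}{2s\sinh(\pi s)}$ for $-\log\cosh(t/2)$, and Lemma~\ref{Lemma 5.4}\,(ii) yields $\frac{\cos(\lambda_0 s)}{s\sinh(\pi s)}$ for $-\log(\cosh t+\beta)$, so
\[
\log\phi_\lambda(t)=\mathrm{const}+\int_{-\infty}^{\infty}(e^{its}-1-ist)\,
\frac{1}{s\sinh(\pi s)}\Bigl[\tfrac{1-2\lambda}{2}+\lambda\cos(\lambda_0 s)\Bigr]\,ds .
\]
The bracket is nonnegative for all $s$ exactly when $\tfrac{1-2\lambda}{2}\ge\lambda$, i.e. $\lambda\le\tfrac14$, and for such $\lambda$ the density is nonnegative, so $\phi_\lambda$ is infinitely divisible and in particular positive definite.

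The upper bound $\lambda_c\le\tfrac13$ is where the non-integer exponents must be overcome, and I expect this to be the main obstacle, since (as noted in Section~\ref{sect:cosh}) a direct residue computation of the Fourier transform of $\phi_\lambda$ is not available. The device is to clear denominators by taking integer powers: for a \emph{rational} $\lambda=p/q\in(\tfrac13,\tfrac12)$, if $\phi_\lambda$ were positive definite then so would be the product $\phi_\lambda^{\,q}=c\,\bigl[\cosh^{q-2p}(t/2)(\cosh t+\beta)^{p}\bigr]^{-1}$. Here $k:=q-2p$ and $m:=p$ are positive integers with $k<m$ (because $\lambda>\tfrac13\Rightarrow q<3p$ while $\lambda<\tfrac12\Rightarrow q>2p$), so Theorem~\ref{Theorem 5.6}\,(i) forbids positive definiteness, a contradiction. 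Thus every rational $\lambda\in(\tfrac13,\tfrac12)$ lies outside $S$, and the downward closure from the second paragraph upgrades this to $S\cap(\tfrac13,1]=\emptyset$. Combined with $\lambda_c\ge\tfrac14$, this proves \eqref{geom.crit}.
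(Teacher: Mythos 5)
Your proof is correct and follows essentially the same route as the paper: reduce to positive definiteness of $e^{t/2}g_{1,\nu,\lambda}(e^t)$ via Theorem~\ref{thm:CP2}, establish $\leqc$-monotonicity in $\lambda$ from the infinite divisibility of $(\cosh t+\beta)/(\cosh t+1)$ (the paper's Lemma~\ref{Lemma 6.5}), get $\lambda_c\ge\tfrac14$ from the Kolmogorov density $\bigl[\tfrac{1-2\lambda}{2}+\lambda\cos(\lambda_0 s)\bigr]/(s\sinh(\pi s))$ (Lemma~\ref{Lemma 6.4}), and get $\lambda_c\le\tfrac13$ by raising a rational-exponent case to an integer power and invoking Theorem~\ref{Theorem 5.6}\,(i) (Theorem~\ref{Theorem 6.6}). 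Your only departures are small refinements: you restrict the rational to $(\tfrac13,\tfrac12)$ so that $k=q-2p\ge1$ (a point the paper leaves implicit) and you spell out why the good set is closed, so there is nothing to correct.
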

This will follow from a series of lemmas and theorems below, which are of independent
interest.  First, observe that one can rewrite \eqref{(4.7)} as
\begin{linenomath} \begin{align*}
g_{1,\nu,\lambda}(x)
& =x^{-1/2}\biggl({2\over x^{1/2}+x^{-1/2}}\biggr)^{1-2\lambda}
\left({(1+\nu)^2\over2\nu
\left({x+x^{-1}\over2}+{1+\nu^2\over2\nu}\right)
}\right)^\lambda.  \\
\intertext{and define for $\beta=\frac{1+\nu^2}{2\nu} >1 $,} 
f_{\nu,\lambda}(t) & \equiv e^{t/2}g_{1,\nu,\lambda}(e^t)
= \frac{1}{\cosh^{1-2\lambda}(t/2)}
\left(\frac{1+\beta}{\cosh t +\beta}\right)^{\lambda}.
 \end{align*} \end{linenomath} 
Recall that Theorem \ref{thm:CP2} implies that $g_{1,\nu,\lambda} \in \cK^+$ if and
only if  $f_{\nu,\lambda}(t)$ is positive definite.   
\begin{lemma}\label{Lemma 6.4}
The function $f_{\nu,\lambda}(t)$ is infinitely divisible if and only if
$0 \leq \lambda \leq {1\over4}$.
\end{lemma}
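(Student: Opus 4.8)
The plan is to put $\log f_{\nu,\lambda}(t)$ into the L\'evy--Khintchine form provided by the Kolmogorov theorem recalled at the start of Section~\ref{sect:infdiv}, and then to read off infinite divisibility as nonnegativity of the resulting L\'evy density. Splitting the logarithm as
\[
\log f_{\nu,\lambda}(t)=(1-2\lambda)\log\frac{1}{\cosh(t/2)}+\lambda\log\frac{1+\beta}{\cosh t+\beta},
\]
I would apply Lemma~\ref{Lemma 5.4}\,(i) to the first summand with $a=\tfrac12$ and $\theta=\pi/2$ (so that $\cos\theta=0$ and $1+\cos\theta=1$), and Lemma~\ref{Lemma 5.4}\,(ii) to the second summand with $a=1$ and $\mu\equiv\log\bigl(\beta+\sqrt{\beta^2-1}\bigr)>0$, so that $\cosh\mu=\beta$. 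This produces the single integral representation
\[
\log f_{\nu,\lambda}(t)=\int_{-\infty}^{\infty}(e^{its}-1-ist)\left[(1-2\lambda)\frac{\cosh(\pi s)}{s\sinh(2\pi s)}+\lambda\frac{\cos(\mu s)}{s\sinh(\pi s)}\right]ds.
\]

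Next I would simplify the bracketed density. Using $\sinh(2\pi s)=2\sinh(\pi s)\cosh(\pi s)$ the first term collapses to $\tfrac{1-2\lambda}{2}\cdot\tfrac{1}{s\sinh(\pi s)}$, so the whole density factors as
\[
F(s)=\frac{1}{s\sinh(\pi s)}\left(\frac{1-2\lambda}{2}+\lambda\cos(\mu s)\right).
\]
Since $f_{\nu,\lambda}(0)=1$ and $f_{\nu,\lambda}$ is real, even and positive, the Kolmogorov theorem tells us that $f_{\nu,\lambda}$ is infinitely divisible in the sense of Definition~\ref{def:PD/ID} exactly when the (unique) L\'evy density $F(s)$ is nonnegative for every $s\neq0$. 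As the prefactor $1/(s\sinh(\pi s))$ is strictly positive for all $s\neq0$, this reduces to the scalar inequality
\[
\frac{1-2\lambda}{2}+\lambda\cos(\mu s)\ge0\qquad\text{for all }s\in\bR.
\]

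Finally, because $\mu>0$ the quantity $\cos(\mu s)$ attains every value in $[-1,1]$ as $s$ ranges over $\bR$, so the infimum over $s$ of the left-hand side equals $\tfrac{1-2\lambda}{2}-\lambda=\tfrac{1-4\lambda}{2}$; this is nonnegative precisely when $\lambda\le\tfrac14$, which together with $\lambda\ge0$ yields the asserted range. The argument is essentially bookkeeping once the two halves of Lemma~\ref{Lemma 5.4} are in hand, so I do not expect a serious obstacle; the only points requiring care are matching the parameters $a,\theta,\mu$ correctly in Lemma~\ref{Lemma 5.4} and confirming that the L\'evy measure $s^2F(s)\,ds$ is finite (which follows from the exponential decay of $F(s)$ as $|s|\to\infty$ together with the $O(1)$ behaviour of $s^2F(s)$ near $s=0$). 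As a sanity check, the endpoints are consistent: $\lambda=0$ recovers the infinitely divisible function $1/\cosh(t/2)$, while $\lambda=\tfrac14$ sits exactly on the boundary.
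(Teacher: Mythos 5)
Your proposal is correct and follows essentially the same route as the paper: split $\log f_{\nu,\lambda}$ into the two logarithmic pieces, invoke Lemma~\ref{Lemma 5.4} to obtain the L\'evy density $F(s)=\bigl(1-2\lambda+2\lambda\cos(\mu s)\bigr)/\bigl(2s\sinh(\pi s)\bigr)$, and read off nonnegativity as $\lambda\le\tfrac14$. The only cosmetic difference is that you derive the density of $\log(1/\cosh(t/2))$ from Lemma~\ref{Lemma 5.4}\,(i) with $a=\tfrac12$, $\theta=\pi/2$ and the identity $\sinh(2\pi s)=2\sinh(\pi s)\cosh(\pi s)$, whereas the paper simply quotes it; the conclusion and the use of uniqueness in the Kolmogorov representation for the ``only if'' direction are identical.
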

\begin{proof}
With $\alpha=\log\left(\beta+\sqrt{\beta^2-1}\right) > 0$ by Lemma \ref{Lemma 5.4} we have
\begin{linenomath} \begin{align*}
\log f_{\nu,\lambda}(t)
&=\lambda\log\left(\frac{1+\beta}{\cosh t +\beta}\right)
+(1-2\lambda)\log \left(\frac{1}{\cosh(t/2)}\right)\\
&=
\int_{-\infty}^{\infty}\left(e^{ist}-1-ist \right)F(s)\,ds
 \end{align*} \end{linenomath}
with
$$
F(s)=\lambda \,\frac{\cos(\alpha s)}{s\sinh(\pi s)}
+(1-2\lambda) \,\frac{1}{2s\sinh(\pi s)}\\
=\frac{2\lambda\bigl(\cos(\alpha s)-1\bigr)+1}{2s\sinh(\pi s)}.
$$
The minimum of $\cos(\alpha s)-1$ is $-2$ so that the above density $F(s)$ is non-negative
exactly when $-4\lambda+1 \geq 0$.
\end{proof}

We prove that for fixed $\nu$ the functions $g_{1,\nu,\lambda}$ increase monotonically
with $\lambda$ in the $\leqc$ order.
\begin{lemma}\label{Lemma 6.5}
If $\lambda^\prime\leq \lambda$, then $g_{1,\nu,\lambda^\prime} \leqc g_{1,\nu,\lambda}$
\end{lemma}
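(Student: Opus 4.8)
The plan is to prove directly that the quotient $g_{1,\nu,\lambda'}(e^t)/g_{1,\nu,\lambda}(e^t)$ is positive definite, which by Definition~\ref{def:order} is exactly the assertion $g_{1,\nu,\lambda'}\leqc g_{1,\nu,\lambda}$ (both functions lie in $\cK$ by Proposition~\ref{Proposition 5.1}, so the order is meaningful). Since the prefactor $e^{t/2}$ in $f_{\nu,\lambda}(t)=e^{t/2}g_{1,\nu,\lambda}(e^t)$ cancels in the ratio, it suffices to show that $f_{\nu,\lambda'}(t)/f_{\nu,\lambda}(t)$ is positive definite; in fact I will obtain the stronger conclusion that it is infinitely divisible. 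The crucial point is that the Kolmogorov (Lévy–Khintchine) representations of $\log f_{\nu,\lambda}$ and $\log f_{\nu,\lambda'}$ are already in hand from the proof of Lemma~\ref{Lemma 6.4}, and they differ only in the scalar weight multiplying one fixed density.

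Concretely, writing $\alpha=\log\bigl(\beta+\sqrt{\beta^2-1}\bigr)$ with $\beta=(1+\nu^2)/2\nu>1$ as in Lemma~\ref{Lemma 6.4}, that computation gives
\[
\log f_{\nu,\lambda}(t)=\int_{-\infty}^{\infty}\bigl(e^{its}-1-ist\bigr)\,\frac{2\lambda\bigl(\cos(\alpha s)-1\bigr)+1}{2s\sinh(\pi s)}\,ds .
\]
Subtracting the analogous expression with $\lambda'$ in place of $\lambda$ yields
\[
\log\frac{f_{\nu,\lambda'}(t)}{f_{\nu,\lambda}(t)}=\int_{-\infty}^{\infty}\bigl(e^{its}-1-ist\bigr)\,\frac{(\lambda-\lambda')\bigl(1-\cos(\alpha s)\bigr)}{s\sinh(\pi s)}\,ds ,
\]
so the relevant density is $(\lambda-\lambda')\bigl(1-\cos(\alpha s)\bigr)\big/\bigl(s\sinh(\pi s)\bigr)$.

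Next I would verify the two hypotheses of the Kolmogorov theorem. First, the density is non-negative: $\lambda-\lambda'\ge0$ by assumption, $1-\cos(\alpha s)\ge0$, and $s\sinh(\pi s)>0$ for $s\neq0$ (with a removable singularity at $s=0$). Second, the corresponding measure $s^2\cdot[\text{density}]\,ds=(\lambda-\lambda')\,s\bigl(1-\cos(\alpha s)\bigr)/\sinh(\pi s)\,ds$ is finite, since the integrand is $O(s^2)$ near $0$ and decays like $s\,e^{-\pi|s|}$ at infinity. Hence $f_{\nu,\lambda'}/f_{\nu,\lambda}$ is the characteristic function of an infinitely divisible probability measure, in particular positive definite, which is precisely $g_{1,\nu,\lambda'}\leqc g_{1,\nu,\lambda}$. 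The degenerate case $\lambda'=\lambda$ is trivial, the ratio then being the constant $1$.

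The only genuine subtlety is bookkeeping of the direction of the ratio: one must take the smaller parameter over the larger, so that the surviving density carries the non-negative factor $\lambda-\lambda'$; the reciprocal ratio would produce a non-positive density and prove nothing. Beyond reusing the representation from Lemma~\ref{Lemma 6.4} and checking this sign, I expect no real obstacle, making this the lightest of the geometric-bridge lemmas. As a self-contained alternative one can avoid quoting that proof: using $k_1^\ext(e^t)/k_\nu^\ext(e^t)=\dfrac{4\nu}{(1+\nu)^2}\cdot\dfrac{\cosh t+\beta}{\cosh t+1}$ and applying Lemma~\ref{Lemma 5.4}(ii) to both $\cosh t+\beta$ and $\cosh t+1$ produces the same non-negative density, showing that $k_1^\ext(e^t)/k_\nu^\ext(e^t)$ is (up to a positive constant) infinitely divisible; raising it to the power $\lambda-\lambda'\in[0,1]$ then finishes the argument, since $g_{1,\nu,\lambda'}(e^t)/g_{1,\nu,\lambda}(e^t)=\bigl(k_1^\ext(e^t)/k_\nu^\ext(e^t)\bigr)^{\lambda-\lambda'}$.
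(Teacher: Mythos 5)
Your proof is correct and is essentially the paper's argument: both reduce to exhibiting the non-negative L\'evy density $(\lambda-\lambda')\bigl(1-\cos(\alpha s)\bigr)/\bigl(s\sinh(\pi s)\bigr)$ for $\log\bigl(f_{\nu,\lambda'}/f_{\nu,\lambda}\bigr)$, whether obtained by subtracting the two Kolmogorov representations from Lemma~\ref{Lemma 6.4} or (as in the paper and in your closing alternative) by writing the ratio as $\bigl(\tfrac{2}{1+\beta}\cdot\tfrac{\cosh t+\beta}{\cosh t+1}\bigr)^{\lambda-\lambda'}$ and invoking Lemma~\ref{Lemma 5.4}\,(ii).
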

\begin{proof}
It suffices to show positive definiteness of 
\begin{equation}\label{(6.5)} 
\frac{f_{\nu,\lambda^\prime}(t)}{f_{\nu,\lambda}(t)}
= \frac{g_{1,\nu,\lambda^\prime}(e^t) }{g_{1,\nu,\lambda}(e^t) } 
\end{equation}
for $\lambda^\prime\leq \lambda$. However, this ratio is equal to
\begin{linenomath} \begin{align*}
\frac{1}{\cosh^{2(\lambda-\lambda^\prime)}(t/2)}
\left(\frac{\cosh t+\beta}{1+\beta}\right)^{\lambda-\lambda^\prime}
&=\left(\frac{1}{\cosh^2(t/2)} \cdot \frac{\cosh t+\beta}{1+\beta}
\right)^{\lambda-\lambda^\prime}\\
&=\left(\frac{2}{1+\beta} \cdot \frac{\cosh t+\beta}{\cosh t+1}
\right)^{\lambda-\lambda^\prime}.
 \end{align*} \end{linenomath}
On the other hand, by Lemma \ref{Lemma 5.4}\,(ii) we have
$$
\log\left(\frac{2}{1+\beta} \cdot \frac{\cosh t+\beta}{\cosh t+1}\right)
=\int_{-\infty}^{\infty}\left(e^{ist}-1-ist \right)
\frac{1-\cos(\alpha s)}{s\sinh(\pi s)}\,ds
$$
with $\alpha=\log\left(\beta+\sqrt{\beta^2-1}\right)>0$. The density
$(1-\cos(\alpha s))/s\sinh(\pi s)$ here is positive so that $(\cosh t+\beta)/(\cosh t+1)$
is infinitely divisible and hence the ratio (\ref{(6.5)}) (with
$\lambda^\prime\le\lambda$) is positive definite.
\end{proof}

The  monotonicity shown above  implies that for each fixed $\nu$, the set 
$$
\left\{ \lambda \in [0,1]:\, g_{1,\nu,\lambda} \in \cK^+\right\}
$$
is a subinterval $[0,\lambda_c]$, for which we prove that the critical value
$\lambda_c=\lambda_c(\nu)$ satisfies \eqref{geom.crit} above. The lower bound ${1\over4}$
follows immediately from Lemma \ref{Lemma 6.4}. The upper bound follows immediately from
Theorem~\ref{Theorem 6.6} below.

\begin{thm} \label{Theorem 6.6}
When $\lambda > {1\over3}$,  the function $g_{1,\nu,\lambda}$  does not belong to $\cK^+$
for any $\nu \in (0,1)$.
\end{thm}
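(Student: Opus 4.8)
The plan is to reduce, via Theorem~\ref{thm:CP2}, to the assertion that $f_{\nu,\lambda}(t)=e^{t/2}g_{1,\nu,\lambda}(e^t)$ fails to be positive definite whenever $\lambda>\tfrac13$. The obstruction is that $f_{\nu,\lambda}$ carries the fractional exponents $1-2\lambda$ and $\lambda$, so a direct residue computation of its Fourier transform (as in Sections~\ref{sect:sinh}--\ref{sect:cosh}) is not directly available. The governing idea is to clear these fractional powers by raising $f_{\nu,\lambda}$ to an integer power, thereby landing inside the integer-exponent family $H(t)$ whose positive definiteness is completely settled in Theorem~\ref{Theorem 5.6}\,(i).

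First I would reduce to rational $\lambda\in\bigl(\tfrac13,\tfrac12\bigr)$. By Theorem~\ref{thm:CP2}\,(b) we have $k\in\cK^+\iff k\leqc x^{-1/2}$, and by the monotonicity Lemma~\ref{Lemma 6.5} the family $g_{1,\nu,\lambda}$ is increasing in $\lambda$ in the $\leqc$ order; hence $\{\lambda:g_{1,\nu,\lambda}\in\cK^+\}$ is a down-set, and failure at some $r$ propagates to every $\lambda\ge r$. For a given $\lambda>\tfrac13$ one may therefore choose a rational $r=p/q$ with $\tfrac13<r<\tfrac12$ and $r\le\lambda$ (if $\lambda\ge\tfrac12$ any rational in $\bigl(\tfrac13,\tfrac12\bigr)$ serves, and if $\tfrac13<\lambda<\tfrac12$ one takes a rational in $\bigl(\tfrac13,\lambda\bigr]$), and it suffices to prove $g_{1,\nu,r}\notin\cK^+$.

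Next, for $\lambda=p/q\in\bigl(\tfrac13,\tfrac12\bigr)$ with $p,q\in\bN$, I would compute the integer power
\[
f_{\nu,\lambda}(t)^q
=\frac{1}{\cosh^{q-2p}(t/2)}\left(\frac{1+\beta}{\cosh t+\beta}\right)^{p}
=(1+\beta)^p\,\frac{1}{\cosh^{k}(t/2)\bigl(\cosh t+\beta\bigr)^{m}},
\]
with $k=q-2p$ and $m=p$. The constraints $\tfrac13<p/q<\tfrac12$ translate exactly into $0<q-2p<p$, that is $1\le k<m$, so Theorem~\ref{Theorem 5.6}\,(i) shows this $H(t)$ is \emph{not} positive definite, and multiplication by the positive scalar $(1+\beta)^p$ does not affect this.

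The contradiction then closes the argument: if $f_{\nu,\lambda}$ were positive definite, so would be $f_{\nu,\lambda}^q$, since a positive integer power of a positive definite function is a product of positive definite functions and hence positive definite (property~(c) following Definition~\ref{def:PD/ID}); this contradicts the previous step. Thus $f_{\nu,\lambda}$ is not positive definite, so $g_{1,\nu,\lambda}\notin\cK^+$ for rational $\lambda\in\bigl(\tfrac13,\tfrac12\bigr)$, and the down-set property extends the conclusion to all $\lambda>\tfrac13$. The step requiring the most care is the reduction to $\bigl(\tfrac13,\tfrac12\bigr)$: the power trick yields a genuine (positive-exponent) member of the $H$-family only when $\lambda<\tfrac12$, which is precisely why Lemma~\ref{Lemma 6.5} is essential rather than merely a convenience.
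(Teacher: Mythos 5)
Your proof is correct and follows essentially the same route as the paper: reduce via Theorem~\ref{thm:CP2} and the $\leqc$-monotonicity of Lemma~\ref{Lemma 6.5} to a rational exponent, raise to an integer power to land in the family of Theorem~\ref{Theorem 5.6}, and invoke part (i) with $k<m$. Your explicit insistence on choosing the rational in $\bigl(\tfrac13,\tfrac12\bigr)$ so that $k=q-2p\ge1$ is a detail the paper leaves implicit, and it is a worthwhile clarification.
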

\begin{proof}
By Theorem~\ref{thm:CP2}, this  is equivalent to showing that $f_{\nu,\lambda}(t)$ is not
positive definite when  $\lambda > {1\over3}$, for which we will prove by contradiction.  
We choose a rational $\frac{m}{n}$ with 
\begin{equation}\label{(6.6)} 
 \frac{1}{3}  < \frac{m}{n} \leq \lambda.
\end{equation}
Then by  Lemma~\ref{Lemma 6.5}, positive definiteness of $f_{\nu,\lambda}(t)$ implies
that so is $f_{\nu,\frac{m}{n}}(t)$.  Since $f_{\nu,\frac{m}{n}}(t)$ is equal to
$$
\frac{1}{\cosh^{1-\frac{2m}{n}}(t/2)\bigl(\cosh t+\beta\bigr)^{\frac{m}{n}}}
$$
up to a positive constant, its $n$th power
$$
\frac{1}{\cosh^k(t/2)\bigl(\cosh t+\beta\bigr)^m}
$$
with $k=n-2m$ would be also positive definite. However, this contradicts
Theorem \ref{Theorem 5.6}\,(i) because of (\ref{(6.6)}), i.e., $k < m$.
\end{proof}

How the critical value $\lambda_c(\nu)$ depends on $\nu \in (0,1)$ seems to be an
interesting problem. We note that $f_{\nu,1/3}(t)$ is equal to
$$
\frac{1}{\cosh^{1/3}(t/2)\left(\cosh t+\frac{1+\nu^2}{2\nu}\right)^{1/3}}
$$
up to a positive scalar. Although the function 
$\bigl(\cosh(t/2)(\cosh t+(1+\nu^2)/2\nu)\bigr)^{-1}$ is known not to be infinitely
divisible (\cite[Theorem 15]{Ko3} and see also Theorem \ref{Theorem 5.6}\,(ii)),
its cubic root might be positive definite for some  values of $\nu$). However, the authors
are unable to handle this delicate phenomenon.

\section*{Acknowledgments}
This work was begun when FH, DP and MBR were participating in the Fall, 2010 program at
the Mittag-Leffler Institute in Sweden. They are grateful for the opportunity and support
provided there. The work of MBR was partially supported by NSF grant CCF 1018401.
FH and HK acknowledge support by Grant-in-Aid for Scientific Research (C)21540208 and
(C)23540215, respectively.
We thank Ben Criger for providing the figure.

\appendix

\section{Proofs from Section~\ref{sect:prelim}}\label{app:intrep}

\subsection{Proof of Proposition~\ref{prop:lim0}} \label{sect:A1}

The following is from the proof of \cite[Corollary 2.2]{HS}. Since $k$ is an operator 
convex function on $(0,\infty)$, the function $g(x) = k(1+x)$ is operator convex on
$(-1,1)$. By Kraus' theorem \cite{Kr} (see \cite[Lemma III.1]{Ando} or
\cite[Theorem V.3.10]{Bh}) the divided difference function $h(x)\equiv(g(x) - g(0))/x$ is
operator monotone on $(-1,1)$.  Then by L\"{o}wner's integral representation
(see \cite[Theorem II.1]{Ando} or \cite[Corollary V.4.5]{Bh}) there exists a (unique)
finite measure $\mu$ on $[-1,1]$ such that
$$
h(x) = a + \int_{[-1,1]} \frac{x}{1 - \lambda x}\,d\mu(\lambda)
$$
with $a=h(0)=k'(1)$. Thus
\begin{align*}
(1+x) k(1+x)
&=(1+x)(ax+b) + (1+x)\int_{[-1,1]}\frac{x^2}{1-\lambda x}\,d\mu(\lambda) \\
&=(1+x)(ax+b) + \mu(\{-1\}) + x^2\int_{(-1,1]}\frac{1+x}{1-\lambda x}\,d\mu(\lambda)
\end{align*}
with $b=k(1)$.
Since $(x + 1)(1 - \lambda x)\leq 1$ whenever $x \in (-1,1)$ and $\lambda \in (-1,1]$, the
Lebesgue dominated convergence theorem implies that 
$$
\lim_{x \searrow -1} \int_{(-1,1]} \frac{1+x}{1 - \lambda x}\,d \mu(\lambda) = 0
$$
so that
$$
\lim_{x \searrow 0} x \,k(x) = \lim_{x \searrow -1} (1+x)\,k(1+x) = \mu(\{-1\}).
$$

\subsection{Proof of Theorem~\ref{thm:intrep}}
Since the divided difference function $h(x)\equiv (k(x)-k(1))/(x-1)$ is operator monotone
on $(0,\infty)$ as in Section \ref{sect:A1}, it is known (see, e.g.,
\cite[Theorem 1.9]{FHR})
that there exist a (unique) $\gamma\ge0$ and a (unique) positive measure $\mu$ on
$[0,\infty)$ with $\int_{[0,\infty)}(1+\lambda)^{-1}\,d\mu(\lambda)<+\infty$ such that
$$
h(x)=h(1)+\gamma(x-1)+\int_{[0,\infty)}{x-1\over x+\lambda}\,d\mu(\lambda),
\qquad x\in(0,\infty).
$$
Therefore,
\be \label{(A.1)}
k(x)=k(1)+k'(1)(x-1)+\gamma(x-1)^2+\int_{[0,\infty)}{(x-1)^2\over x+\lambda}\,d\mu(\lambda),
\quad x\in(0,\infty).
\ee
From the symmetry property $xk(x)=k\ofxinv$ we notice that
$$
\lim_{x\to\infty}{k(x)\over x}=\lim_{x\to\infty}x^{-2}k\ofxinv
=\lim_{x\searrow0}x^2k(x)
=\lim_{x\searrow0}\int_{[0,\infty)}{x^2\over x+\lambda}\,d\mu(\lambda)=0
$$
by the Lebesgue convergence theorem. On the other hand, since
$(x-1)^2/x(x+\lambda)\nearrow1$ as $x\nearrow\infty$, we have
$$
\lim_{x\to\infty}{1\over x}\int_{[0,\infty)}{(x-1)^2\over x+\lambda}\,d\mu(\lambda)
=\int_{[0,\infty)}d\mu(\lambda)
$$
by the monotone convergence theorem, and hence
$$
\lim_{x\to\infty}{k(x)\over x}=k'(1)+\gamma\cdot(+\infty)+\int_{[0,\infty)}d\mu(\lambda).
$$
Therefore, $\gamma=0$, $\mu$ is a finite measure, and
$k'(1)+\int_{[0,\infty)}d\mu(\lambda)=0$. From
$$
{(x-1)^2\over x+\lambda}=x-1-{(x-1)(1+\lambda)\over x+\lambda}
$$
it follows that
\begin{linenomath} \begin{align*}
k(x)&=k(1)-\int_{[0,\infty)}{(x-1)(1+\lambda)\over x+\lambda}\,d\mu(\lambda) \\
&=k(1)+\int_{[0,\infty)}\biggl({1+\lambda\over x+\lambda}-1\biggr)(1+\lambda)
\,d\mu(\lambda),
 \end{align*} \end{linenomath}
which is operator monotone decreasing since so is $(1+\lambda)/(x+\lambda)$. Moreover,
since $(x-1)(1+\lambda)/(x+\lambda)\nearrow1+\lambda$ as $x\nearrow\infty$, we have
$0\le k(1)-\int_{[0,\infty)}(1+\lambda)\,d\mu(\lambda)$ so that
$\int_{[0,\infty)}(1+\lambda)\,d\mu(\lambda)\le k(1)<+\infty$. Now, defining a
finite positive measure $\nu$ on $[0,\infty]$ by
$$
d\nu(\lambda)\equiv (1+\lambda)\,d\mu(\lambda)\ \ \mbox{on}\ \ [0,\infty),\quad
\nu(\{\infty\})\equiv k(1)-\int_{[0,\infty)}(1+\lambda)\,d\mu(\lambda),
$$
we write
\be \label{(A.2)}
k(x)=\int_{[0,\infty]}{1+\lambda\over x+\lambda}\,d\nu(\lambda),\qquad x\in(0,\infty),
\ee
where $(1+\lambda)/(x+\lambda)\equiv1$ for $\lambda=\infty$. Letting
$d\tilde\nu(\lambda)\equiv d\nu(\lambda^{-1})$ on $[0,\infty]$ we also write
$$
k(x)=\int_{[0,\infty]}{1+\lambda^{-1}\over x+\lambda^{-1}}\,d\tilde\nu(\lambda)
=\int_{[0,\infty]}{1+\lambda\over1+\lambda x}\,d\tilde\nu(\lambda)
$$
so that
$$
k\ofxinv=\int_{[0,\infty]}{x(1+\lambda)\over x+\lambda}\,d\tilde\nu(\lambda).
$$
This is the familiar integral expression of the operator monotone function $k\ofxinv$ with
a unique representing measure $\tilde\nu$. Hence the measure $\nu$ satisfying \eqref{(A.2)}
is unique (this fact itself is also well-known). Since
$$
k(x)=x^{-1}k\ofxinv=\int_{[0,\infty]}{1+\lambda\over1+\lambda x}\,d\nu(\lambda)
=\int_{[0,\infty]}{1+\lambda\over x+\lambda}\,\tilde\nu(\lambda),
$$
it follows that $\nu=\tilde\nu$. Define
$$
dm(\lambda)\equiv 2d\nu(\lambda)\quad \mbox{on}\quad[0,1),\qquad m(\{1\})\equiv \nu(\{1\}),
$$
to obtain
\begin{linenomath} \begin{align*}
k(x)&=\int_{[0,1)}\biggl({1+\lambda\over x+\lambda}
+{1+\lambda^{-1}\over x+\lambda^{-1}}\biggr)\,d\nu(\lambda)
+{2\over1+x}\,\nu(\{1\}) \\
&=\int_{[0,1]}{1+x\over(x+\lambda)(1+\lambda x)}\cdot{(1+\lambda)^2\over2}\,dm(\lambda).
 \end{align*} \end{linenomath}
Finally, note that the uniqueness of $m$ is immediate from that of $\nu$ in \eqref{(A.2)}.

\bigskip
The integral expression \eqref{(A.1)} was also given in \cite{LR}, which was considerably
extended in \cite[Theorem 5.1]{FHR}. There is another route to prove the two theorems in
Section \ref{sect:basics}. It was proved in \cite[Theorem 3.1]{AH} that a function
$k:(0,\infty)\to(0,\infty)$ is operator monotone decreasing if and only if it is operator
convex and non-increasing in the numerical sense. It is easy to see that if an operator
convex function $k$ satisfies the symmetry condition $xk(x)=k\ofxinv$, then it is
non-increasing numerically. Hence we have the implication (a) $\Rightarrow$ (b) of
Theorem \ref{thm:kequiv}. All other parts of Theorem \ref{thm:kequiv} are plain or
well-known. Then we can prove Theorem \ref{thm:intrep} by applying the familiar integral
expression to a symmetric operator monotone function $k\ofxinv$ as above (indeed, this
part of the proof is the same as the proof of \cite[Theorem 4.4]{KA}).

\section{Contraction bounds} \label {app:contractbd}

As stated in \eqref{contract}, monotone Riemannian metrics contract under the action of
quantum channels, i.e., CPT (CP and trace-preserving) maps.  It is well-known
\cite{HMPB, Pz1,LR, TCR} that the quasi-entropies $H_g(A,B)\equiv H_g(A,B,I)$
given by \eqref{quasi} with $K=I$ contract under CPT maps, i.e.,
$$
H_g \bigl(\Phi(A),\Phi(B)\bigr) \leq H_g(A,B),\qquad A,B\in\bP_d,
$$
whenever $g$ is operator convex on $(0,\infty)$.
In the rest of this subsection let $g(x) = (1-x)^2 k(x)$ with $k \in \cK$ as in Section
\ref{sect:back}.  In applications, the maximal contraction rate plays an important role,
which motivated in \cite{LR} the following definitions of {\em contraction coefficients}\,:
$$
\eta_k^{\rm RelEnt}(\Phi) \equiv  \sup_{\rho,\gamma \in \cD_d,\,\rho\ne \gamma}
\frac{ H_g\bigl( \Phi(\rho) ,\Phi(\gamma) \bigr) }{ H_g(\rho,\gamma) }
$$
and
$$
\eta_k^{\rm Riem}(\Phi) \equiv  \,\sup_{\rho\in\cD_d}\,\sup_{X \in \bH_d^0,\,X\ne0}
\frac{ \Gamma_{\Phi(\rho)}^k \bigl((\Phi(X), \Phi(X) \bigr) }{ \Gamma_\rho^k(X,X) }.
$$
A contraction coefficient was also defined \cite{Rusk} for the trace norm
$\norm{X}_1 \equiv \tr|X| = \tr(X^*X)^{1/2}$ distance which also contracts under CPT maps,
i.e.,  
$$
\eta^{\rm Dob}(\Phi) \equiv  \sup_{\rho,\gamma \in \cD_d,\,\rho\ne \gamma}
\frac{ \norm{ \Phi( \rho - \gamma) }_1} { \norm{\rho-\gamma}_1 },
$$
where the superscript reflects the fact that this is the quantum analogue of the classical 
Dobrushin coefficient.

For any CPT map $\Phi$, it was shown in \cite[Theorem IV.2]{LR} that 
$$
\eta_k^{\rm Riem}(\Phi) \leq \eta_k^{\rm RelEnt}(\Phi) \leq 1
\qquad \mbox{for any}\quad k \in \cK,
$$
and in \cite[Theorems 13, 14]{TKRWV} that 
\be  \label{bds}
\eta_{x^{-1/2}}^{\rm RelEnt}(\Phi) \leq \eta^{\rm Dob}(\Phi)
\leq \sqrt{\eta_k^{\rm Riem}(\Phi)}
\ee
when $k(x) $ is given by \eqref{heinz}. The upper bound in \eqref{bds},  given
in \cite[Theorem 3]{Rusk} for the particular case $k_0^{\ext}(x)=(1+x)/2x$  and in
\cite{TKRWV} for $k=\wh{k}_\alpha^\H$ in Example \ref{ex:heinz}, holds for any $k \in \cK$.  
Our work here was motivated by
the lower bound in \eqref{bds} based on the following observations from \cite{LR,TKRWV}.
Applying the max-min principle to the eigenvalue problem 
\be  \label{evalprob}
\big(\wh{\Phi} \circ \Omega^k_{\Phi(\rho)} \circ \Phi\bigr)(X)
= \lambda\,\Omega^k_\rho(X)  
\ee 
(for which $X = I$ always yields the largest eigenvalue $\lambda_1 = 1$) implies that
$$
\eta_k^{\rm Riem}(\Phi) =  \sup_{\rho \in \cD_d} \lambda_2^k(\Phi, \rho),
$$
where $\wh{\Phi}$ is the adjoint of $\Phi$ (with respect to the Hilbert-Schmidt inner
product) and $\lambda_2^k(\Phi,\rho)$ denotes the second largest eigenvalue of
\eqref{evalprob}. This is equivalent to the eigenvalue problem 
$$
\Upsilon_{\rho,\Phi}^k (\Phi(X))
= \bigl(\Omega^k_\rho\bigr)^{-1} \circ \wh{\Phi} \circ \Omega^k_{\Phi(\rho)} (\Phi(X))
= \lambda X  
$$ 
for the trace-preserving map $\Upsilon_{\rho,\Phi}^k \equiv
\bigl(\Omega^k_\rho\bigr)^{-1} \circ \wh{\Phi} \circ \Omega^k_{\Phi(\rho)}$ restricted on
$\bH_d^0$. When  $\Upsilon_{\rho,\Phi}^k$  is positivity-preserving, 
$$
\lambda_2^k(\Phi,\rho)
= \sup_{X\in\bH_d^0} \dfrac{\| \Upsilon_\rho^k (\Phi(X)) \|_1}{\|X\|_1}
\leq \sup_{X\in\bH_d^0} \dfrac{\| \Phi(X) \|_1}{\|X\|_1} = \eta^{\rm Dob}(\Phi).
$$
A sufficient condition for  $\Upsilon_{\rho,\Phi}^k$  to be positivity-preserving is that 
both $\bigl(\Omega^k_\rho\bigr)^{-1}$ and $\Omega^k_\rho$ are CP,\footnote{Unfortunately, 
in \cite{LR} it was claimed that  $\Upsilon_\rho^k$  is positivity-preserving for
$k(x) = \log x/(x-1)$.  Although $\Omega_\rho^k$ given by \eqref{BKM} is clearly
positivity-preserving, the inverse $\bigl(\Omega_\rho^k\bigr)^{-1}$ given by \eqref{BKMinv}
is not.}  which we have seen holds if and only if  $k(x) = x^{-1/2} $.  There may be
particular maps $\Phi$ for which $\Upsilon_{\rho,\Phi}^k$ is positivity-preserving
even when $\Omega^k_\rho$ and/or its inverse are not.  Whether or not the bound  
$\eta_{x^{-1/2}}^{\rm RelEnt}(\Phi) \leq \eta^{\rm Dob}(\Phi)$  holds for other $k \in\cK$
even though $\Upsilon_{\rho,\Phi}^k$ is not positivity-preserving is an open question.

\section{Some pedestrian arguments}  \label{app:ped}
  
In this section we present, for the benefit of non-experts, some very pedestrian ways to
see certain well-known results used in this paper.
  
\subsection{Functional calculus  for $L_D$ and $R_D$}  \label{app:funcalc}

It is basic that when $D$ has the spectral decomposition $D = \sum_j w_j \proj{\xi_j}$
(where $\proj{\xi_j}$ is the physicists's notation for the spectral projection onto
the eigenspace of the eigenvector $|\xi_j\>$), $\ffi(D) = \sum_j \ffi(w_j) \proj{\xi_j}$
for any function $\ffi $ on $(0,\infty)$.  It then follows that
\begin{align*}
   L_{\ffi(D)}(X)  = \ffi(L_D)(X) & = \sum_j \ffi(w_j) \proj{\xi_j} X \qquad \mbox{and} \\
   R_{\psi(D)}(X)  = \psi(R_D)(X) & = \sum_j \psi(w_j) X \proj{\xi_j}.
\end{align*}
Then the product  
$$
(\psi(R_D) \ffi(L_D))(X)
= \sum_{i,j} \ffi(w_i) \psi(w_j ) \, |\xi_i \kb \xi_j | \, \bra \xi_i, X \xi_j \ket.
$$
Since $L_D$ and $R_D$ commute, it follows that for an arbitrary function $\phi(x,y)$
$$
\phi(L_D,R_D)(X)
= \sum_{i,j} \phi(w_i, w_j) \, |\xi_i \kb \xi_j | \, \bra \xi_i, A \xi_j \ket 
$$
which is exactly the Hadamard product of  $A \circ X$ when $a_{ij} = \phi(w_i, w_j )$
and $X$ is represented in the basis $|\xi_j \ket $.

\subsection{Integral representation and inversion of BKM operator}  \label{app:BKM}

Although it is well-known that $\Omega^k_D$ and its inverse for $k(x) = \log x/(x-1)$ are
given by \eqref{BKM} and \eqref{BKMinv}, most proofs rely on an explicit expansion in 
eigenvalues  as in \cite{Lb}.     Using $L_D$  and $R_D$ allows one to see this more
directly in terms of integrals and anti-derivatives  
starting from the elementary formula 
\bee
\log x = \int_0^\infty \bigg(\frac{1}{1 + u }-\frac{1}{x + u  } \bigg)\,du
\eee
to write
\begin{linenomath} \begin{align*}
\bigl(\log L_D R_D^{-1}\bigr)(X)
& = \bigl(\log L_D - \log R_D \bigr)(X)
= \bigl( L_{\log D} - R_{\log D} \bigr)(X) \\
& = \int_0^\infty \biggl( X\,\frac{1}{D+ tI} - \frac{1}{D+ tI} X \biggr)\,dt \\ 
& = \int_0^\infty \frac{1}{D+ tI}\,( DX - XD)\,\frac{1}{D+ tI}\,dt \\
& = \int_0^\infty \frac{1}{D+ tI}\,(L_D - R_D)(X)\,\frac{1}{D+ tI}\,dt \\
\intertext{from which it follows that}
\Omega_D^k(X) & = \frac{\log L_D - \log R_D }{L_D - R_D}(X)
= \int_0^\infty \frac{1}{D+ tI}\,X\,\frac{1}{D+ tI}\,dt.
\end{align*} \end{linenomath} 
Now, observe that 
\begin{align*}
\int_0^1 D^t \bigl[ ( \log D) X - X \log D \bigr] D^{1-t}\,dt
& = \int_0^1  \frac{d~}{dt} D^t X D^{1-t}\,dt \\
& = DX - XD = (L_D - R_D)(X)
\end{align*}
so that $\int_0^1 D^t\,\Omega_D^k(X) D^{1-t}\,dt = X$, which implies \eqref{BKMinv}, i.e.,
$$
\bigl(\Omega_D^k\bigr)^{-1}(Y) = \int_0^1 D^t\,Y D^{1-t}\,dt.
$$



\end{document}